\theoremstyle{plain}\newtheorem{claim}[thm]{Claim}
\DeclareFontFamily{U}{mathx}{\hyphenchar\font45}
\DeclareFontShape{U}{mathx}{m}{n}{<-> mathx10}{}
\DeclareSymbolFont{mathx}{U}{mathx}{m}{n}
\DeclareMathAccent{\widebar}{0}{mathx}{"73}
\let\oldReturn\Return
\renewcommand{\Return}{\State\oldReturn}
\renewcommand{\gets}{:=}
\newcommand{\powerset}{\mathcal{P}}
\newcommand{\ie}{\emph{i.e.}}
\newcommand{\eg}{\emph{e.g.}}
\newcommand{\viz}{\emph{viz.}}
\newcommand{\failures}{\mathsf{failures}}
\newcommand{\failuresbot}{\mathsf{failures}_\bot}
\newcommand{\divergences}{\mathsf{divergences}}
\newcommand{\divergencesbot}{\mathsf{divergences}_{\mathsf{min}}}
\newcommand{\refinedbytrace}{\sqsubseteq_\mathsf{tr}}
\newcommand{\refinedbyfdr}{\sqsubseteq_\mathsf{fdr}}
\newcommand{\refinedbysfr}{\sqsubseteq_\mathsf{sfr}}
\newcommand{\Antichain}{\mathcal{A}}
\newcommand{\chaininsert}{\Cup}
\newcommand{\chainin}{\Subset}
\newcommand{\chainnotin}{\not\Subset}
\newcommand{\chainsetin}{\Subset^\forall}
\newcommand{\statesubset}{\leq}
\newcommand{\statensubset}{\nleq}
\newcommand{\init}{\ensuremath{\iota}\xspace}
\newcommand{\lts}{\mathcal{L}}
\newcommand{\states}{\mathit{S}}
\newcommand{\actions}{\mathit{Act}}
\newcommand{\actionstau}{\mathit{Act_\tau}}
\newcommand{\transitions}{\rightarrow}
\newcommand{\transition}[1]{\xrightarrow{#1}\!\!\!\!\!\!\rightarrow}
\newcommand{\weaktransition}[1]{\overset{#1}{\Longrightarrow}}
\newcommand{\tracetransition}[1]{\ensuremath{\xrightarrow{#1}\!\!\!\!\!\rightarrow}}
\newcommand{\refusals}{\mathsf{refusals}}
\newcommand{\reachable}{\mathsf{reachable}}
\newcommand{\enabled}{\mathsf{enabled}}
\newcommand{\stable}{\mathsf{stable}}
\newcommand{\diverges}[1]{#1\raisebox{1pt}{\(\Uparrow\)}}
\newcommand{\normalize}{\ensuremath{\mathsf{norm}_\mathsf{fdr}}}
\newcommand{\normalizesfr}{\ensuremath{\mathsf{norm}}}
\newcommand{\lscott}{\ensuremath{[\![}}
\newcommand{\rscott}{\ensuremath{]\!]}}
\newcommand{\tostates}[1]{\ensuremath{\lscott #1\mkern1mu\rscott}}
\newcommand{\product}{\ltimes}
\newcommand{\emptytrace}{\epsilon}
\newcommand{\trace}{\sigma}
\newcommand{\weaktrace}{\rho}
\newcommand{\traces}{\mathsf{traces}}
\newcommand{\weaktraces}{\mathsf{weaktraces}}
\newcommand{\length}[1]{|#1|}
\newcommand{\refinesnew}[1]{\ensuremath{\textsc{refines-#1}_\textsc{new}}}
\newcommand{\antichain}{\mathit{antichain}}
\newcommand{\impl}{\mathit{impl}}
\newcommand{\spec}{\mathit{spec}}
\newcommand{\working}{\mathit{working}}
\newcommand{\Done}{\mathit{done}}
\newcommand{\Dist}{\mathsf{Dist}}
\newcommand{\FDR}{\mathsf{FDR}}
\newcommand{\witnesses}{\mathcal{F}}
\newcommand{\functionorder}{\mathcal{O}}
\newcommand{\cpp}{\texttt{C++}}
\newcommand{\outofmemory}{\ensuremath{\dagger}}
\newcommand{\branchingbisim}{\leftrightarroweq_{db}}
\newcommand{\request}{\textsc{req}}
\keywords{Refinement checking, antichains, algorithms, benchmarks}
\begin{document}

\title{Correct and Efficient Antichain Algorithms for Refinement Checking}
\author[M.~Laveaux]{Maurice Laveaux}
\author[J.\,F.~Groote]{Jan Friso Groote}
\author[T.\,A.\,C.~Willemse]{Tim A.C. Willemse}
\address{Eindhoven University of Technology. De Groene Loper 5, 5612 AE, Eindhoven, The Netherlands}
\email{m.laveaux@tue.nl, j.f.groote@tue.nl, t.a.c.willemse@tue.nl}



\begin{abstract}
The notion of refinement plays an important role in software engineering.
It is the basis of a stepwise development methodology in which the correctness of a system can be established by proving, or computing, that a system refines its specification.
Wang \emph{et al}.\ describe algorithms based on antichains for efficiently deciding trace refinement, stable failures refinement and failures-divergences refinement. 
We identify several issues pertaining to the soundness and performance in these algorithms and propose new, correct, antichain-based algorithms.
Using a number of experiments we show that our algorithms outperform the original ones in terms of running time and memory usage.
Furthermore, we show that additional run time improvements can be obtained by applying divergence-preserving branching bisimulation minimisation.
\end{abstract}

\maketitle

\section{Introduction}

Refinement is often an integral part of a mature engineering methodology 
for designing a (software) system in a stepwise manner.
It allows one to start from a high-level specification that describes the permitted and desired behaviours of a system and arrive at a detailed implementation that behaves according to this specification.
While in many settings, refinement is often used rather informally, it forms the mathematical cornerstone in the theoretical development of the process algebra CSP (Communicating Sequential Processes) by Hoare~\cite{Hoare85,Roscoe94,Roscoe10}. 

This formal view on refinement---as a mathematical relation between a specification and its implementa\-tion---has been used successfully in industrial settings~\cite{GomesB16,Gibson-Robinson17}, and it has been incorporated in commercial Formal Model-Driven Engineering tools such as \emph{Dezyne}~\cite{BeusekomGHHWWW17}.
In such settings there are a variety of refinement relations, each with their own properties.
In particular, each notion of refinement offers specific guarantees on the (types of) behavioural properties of the specification that carry over to correct implementations. 
For instance, \emph{trace refinement}~\cite{Roscoe10} only preserves safety properties.
The---arguably---most prominent refinement relations for the theory of CSP are the \emph{stable failures refinement}~\cite{BergstraKO87,Roscoe10} and \emph{failures-divergences refinement}~\cite{Roscoe10}.
All three refinement relations are implemented in the FDR~\cite{Gibson-RobinsonABR14,Gibson-Robinson16} tool for specifying and analysing CSP processes.

Trace refinement, stable failures refinement and failures-divergences refinement are computationally hard problems; deciding whether there is a refinement relation between an implementation and a specification, both represented by CSP processes or labelled transition systems, is PSPACE-hard~\cite{KanellakisS90}.
In practice, however, tools such as FDR are able to work with quite large state spaces. 
The basic algorithm for deciding a trace refinement, stable failures refinement or a failures-divergences refinement between implementation and specification relies on a \emph{normalisation} of the specification.
This normalisation is achieved by a subset construction that is used to obtain a deterministic transition system which represents the specification. 

As observed in~\cite{WangS0LDWL12} and inspired by successes reported, \eg, in~\cite{AbdullaCHMV10,DoyenR10,WulfDHR06}, \emph{antichain} techniques can be exploited to improve on the performance of refinement checking algorithms.
Unfortunately, a closer inspection of the results and algorithms in~\cite{WangS0LDWL12} reveals several issues. 
First, the definitions of stable failures refinement and failures-divergences refinement used in~\cite{WangS0LDWL12} do not match the definitions of~\cite{BergstraKO87,Roscoe10}, nor do they seem to match known relations from the literature~\cite{Glabbeek19}.

Second, as we demonstrate in Example~\ref{ex:incorrect} in this paper, the results~\cite[Theorems~2~and~3]{WangS0LDWL12} claiming correctness of their algorithms for deciding both non-standard refinement relations are incorrect. 
We do note that their algorithm for checking trace refinement is correct, and their algorithm for checking stable failures refinement correctly decides the refinement relation defined by~\cite{BergstraKO87,Roscoe10}. 

Third, unlike claimed by the authors, the algorithms of~\cite{WangS0LDWL12} violate the antichain property as we demonstrate in Example~\ref{example:violation}.
Fourth, their algorithms suffer from severely degraded performance due to sub-optimal decisions made when designing the algorithms, leading to an overhead of a factor $|\actions|\cdot |\states|$, where $\actions$ is the set of actions and $S$ the set of states of the implementation, as we show in Example~\ref{example:badcase}. 
This factor is even greater, \viz~$|\actions|^{|\states|}$, when using a FIFO (first in, first out) queue to realise a breadth-first search strategy instead of the stack used for the depth-first search.
Note that there are compelling reasons for using a breadth-first strategy~\cite{Roscoe94}; \eg, the conciseness of counterexamples to refinement.

Our contributions are the following.
Apart from pointing out the issues in~\cite{WangS0LDWL12}, we propose new antichain-based algorithms for deciding trace refinement, stable failures refinement and failures-divergences refinement and we prove their correctness.
We compare the performance of the trace refinement algorithm and the stable failures refinement algorithm of~\cite{WangS0LDWL12} to ours.
Due to the flaw in their algorithm for deciding failures-divergences refinement, a comparison of this refinement relation makes little sense. 
Our results indicate a small improvement in run time performance for practical models when using depth-first search, whereas our experiments using breadth-first search illustrate that decision problems intractable using the algorithm of~\cite{WangS0LDWL12} generally become quite easy using our algorithm.
Finally, we show that divergence-preserving branching bisimulation~\cite{Glabbeek93,GlabbeekLT09} minimisation preserves the desired refinement checking relations and that applying this minimisation as a preprocessing step can yield significant run time improvements.

The current paper is based on~\cite{LaveauxGW19}, but extends it in several respects.
First, in addition to stable failures refinement and failures-divergences refinement, the current exposition also includes a treatment of trace refinement.
Second, we include detailed proofs of correctness of the main claims in~\cite{LaveauxGW19}, and we include several results that are needed to support these claims, but which are also of value on their own.
Some of the more straightforward proofs have been deferred to the appendix.
Third, we expand further on our experimental results.
In particular, we have included more performance metrics, to help explain the observed performance improvements of our algorithms over those of~\cite{WangS0LDWL12},  and we have included a section on using divergence-preserving branching bisimulation for preprocessing.

\paragraph{\bf Outline}
In Section~\ref{section:preliminaries} the preliminaries of labelled transition systems and the refinement relations are defined. 
In Section~\ref{section:refinement_checking} a general procedure for checking refinement relations is described.
In Section~\ref{section:original-algorithm} the antichain-based algorithms of~\cite{WangS0LDWL12} are presented and their issues are described in detail.
In Section~\ref{section:improved-algorithm} the improved antichain algorithms are presented and their correctness is shown. 
Finally, in Section~\ref{section:experiments} an experimental evaluation is conducted to show the effectiveness of these changes in practice, followed by the evaluation of applying divergence-preserving branching bisimulation minimisation as a preprocessing step.

\section{Preliminaries}\label{section:preliminaries}

In this section the preliminaries of labelled transition systems and the considered refinement relations are introduced.
We follow the standard conventions, notation and definitions of~\cite{Brookes85, Roscoe10, Glabbeek17}.

\subsection{Labelled Transition Systems}

Let $\actions$ be a finite set of actions that does not contain the constant $\tau$, which models \emph{internal} actions, and let $\actionstau$ be equal to $\actions \cup \{\tau\}$.

\begin{defi}
  A labelled transition system $\lts$ is a tuple $(\states, \init, \transitions)$ where $\states$ is a set of states; $\init \in \states$ is an initial state and $\transitions\,\subseteq \states \times \actionstau \times \states$ is a labelled transition relation.
\end{defi}

\noindent We depict labelled transition systems as edge-labelled directed graphs, where vertices represent states and the labelled edges between vertices represent the transitions. 
An incoming arrow with no starting state and no action indicates the initial state.
We use the initial state to refer to a depicted LTS.

For the remainder of this section, we assume that $\lts = (\states, \init, \transitions)$ is an arbitrary LTS.
We adopt the following conventions and notation.
Typically, we use symbols $s, t, u$ to denote states, $U, V$ to denote sets of states and $a$ to denote actions. 
A transition $(s, a, t) \in\, \transitions$ is also written as $s \transition{a} t$.
The set of \emph{enabled} actions of state $s$ is defined as $\enabled(s) = \{a \in \actionstau \mid \exists t \in \states : s \transition{a} t\}$.

A sequence is denoted by concatenation, \ie, $a_0\,a_1\,\cdots\,a_{n-1}$ where $a_i \in \actionstau$ for all $0 \leq i < n$ is a sequence of actions and $\actionstau^*$ indicates the set of all finite sequences of actions.
We use $\trace$ and $\weaktrace$ to denote a sequence of actions, where $\weaktrace$ typically does not contain $\tau$.
The \emph{length} of a sequence, denoted as $\length{a_0\,a_1\,\cdots\,a_{n-1}}$, is equal to $n$.
Finally, we say that any sequence $a_0\,a_1\,\cdots\,a_k$ such that $k \leq n - 1$ is a \emph{prefix} of a sequence $a_0\,a_1\,\cdots\,a_{n-1}$.
A prefix is \emph{strict} whenever its length is strictly smaller than that of the sequence itself.

The transition relation of an LTS is generalised to sequences of actions as follows: $s \tracetransition{\emptytrace} t$ holds iff $s = t$, and $s \tracetransition{\trace\,a} t$ holds iff there is a state $u$ such that $s \tracetransition{\trace} u$ and $u \transition{a} t$.
The \emph{weak transition} relation $\weaktransition{}\,\subseteq \states \times \actions^* \times \states$ is the smallest relation satisfying:  
\begin{itemize}
	\item $s \weaktransition{\emptytrace} s$, and
	
	\item $s \weaktransition{\emptytrace} t$ if $s \transition{\tau} t$, and
	 
  \item $s \weaktransition{a} t$ if $s \transition{a} t$ for $a \in \actions$, and
    
  \item $s \weaktransition{\weaktrace\,\trace} t$ if there is a state $u$ such that $s \weaktransition{\weaktrace} u$ and $u \weaktransition{\trace} t$.
\end{itemize}   

\begin{defi}
  Traces, weak traces and reachable states are defined as follows:

  \begin{itemize}
    \item 
      The \emph{traces} starting in state $s$ are defined as $\traces(s) = \{\trace \in \actionstau^* \mid \exists t \in \states : s \tracetransition{\trace} t\}$.
      We define $\traces(\lts)$ to be $\traces(\init)$.

    \item    
      The \emph{weak traces} starting in state $s$ are defined as $\weaktraces(s) = \{\weaktrace \in \actions^* \mid \exists t \in \states : s \weaktransition{\weaktrace} t\}$.
      We define $\weaktraces(\lts)$ to be $\weaktraces(\init)$.

    \item the set of states, \emph{reachable} from $s$ is defined as $\reachable(s) = \{t \in \states \mid \exists \trace \in \actionstau^* : s \tracetransition{\trace} t\}$.
      We define $\reachable(\lts)$ to be $\reachable(\init)$.
  \end{itemize}

\end{defi}

\begin{defi}
  Labelled transition system $\lts$ is:

  \begin{itemize}
    \item \emph{deterministic} if and only if for all states $s, t, u$ and actions $a \in \actionstau$ if there are transitions $s \transition{a} t$ and $s \transition{a} u$ then $t = u$.

    \item \emph{concrete} if it does not contain transitions labelled with $\tau$, \ie, for all states $s$ it holds that $\tau \notin \enabled(s)$.
    
    \item \emph{universal} if and only if for all states $s$ it holds that $\enabled(s) = \actions$.
  \end{itemize} 
\end{defi}

\begin{lem}\label{lemma:deterministic}
  Let $\lts$ be a deterministic LTS.
  For all sequences $\trace \in \actionstau^*$ and states $s, t, u$, if $s \tracetransition{\trace} t$ and $s \tracetransition{\trace} u$ then $t = u$.
\end{lem}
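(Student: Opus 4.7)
The plan is to prove the statement by induction on the length of the sequence $\trace$, making essential use of both the inductive definition of $\tracetransition{\trace}$ and the determinism hypothesis on single-step transitions.

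For the base case, take $\trace = \emptytrace$. By definition, $s \tracetransition{\emptytrace} t$ holds iff $s = t$, and likewise $s \tracetransition{\emptytrace} u$ forces $s = u$, so $t = u$ follows immediately.

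For the inductive step, assume the statement for all sequences of length $n$ and consider $\trace = \trace'\,a$ of length $n+1$, where $\trace' \in \actionstau^*$ and $a \in \actionstau$. The definition of the generalised transition relation gives states $v_1, v_2$ with $s \tracetransition{\trace'} v_1$, $v_1 \transition{a} t$, and $s \tracetransition{\trace'} v_2$, $v_2 \transition{a} u$. Applying the induction hypothesis to the two $\trace'$-derivations yields $v_1 = v_2$. Then both $a$-transitions originate in the same state, so the determinism of $\lts$ applied to $v_1 \transition{a} t$ and $v_1 \transition{a} u$ gives $t = u$.

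I do not foresee any real obstacle: determinism is stated exactly for single transitions, and the generalised relation is defined by appending one action at a time, so a direct induction on sequence length aligns perfectly with the shape of the definitions. The only point to watch is that the inductive clause $s \tracetransition{\trace'\,a} t$ is \emph{existentially} quantified over the intermediate state, which is why the induction hypothesis is needed to identify the two witnesses before invoking determinism.
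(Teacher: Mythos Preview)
Your proof is correct: the induction on the length of $\trace$ is the natural argument, and you handle both the base case and the existential intermediate state in the inductive step properly. The paper itself states this lemma without proof, evidently regarding it as a routine consequence of the definitions, so there is nothing to compare against.
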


\noindent The models underlying the CSP process algebra~\cite{Hoare85,Roscoe10} build on observations of \emph{weak traces}, \emph{failures} and \emph{divergences}.
A weak trace observation records the visible actions that occur when performing an experiment on the system.
A failure is a combination of a set of actions that a system observably refuses and a weak trace experiment on the system that leads to the observation of the refusals.
A refusal can only be observed when the system has \emph{stabilised}, meaning that it can no longer perform internal behaviour.
A \emph{divergence} can be understood as the potential inability of the system to stabilise, which can happen when the system engages in an infinite sequence of $\tau$-actions after performing an experiment on the system.

\begin{defi}[Refusals]\label{def:refusals}
  A state $s$ is \emph{stable}, denoted by $\stable(s)$, if and only if $\tau \notin \enabled(s)$.
  For a stable state $s$,
  the \emph{refusals} of $s$ are defined as $\refusals(s) = \powerset(\actions \setminus \enabled(s))$.
  For a set of states $U \subseteq S$ its refusals are defined as $\refusals(U) = \{ X \subseteq \actions \mid \exists s \in U : \stable(s) \land X \in \refusals(s)\}$.
\end{defi}
  
\noindent Formally, a state $s$ is \emph{diverging}, denoted by the predicate $\diverges{s}$, if and only if there is an infinite sequence of states $s \transition{\tau} s_1 \transition{\tau} s_2 \transition{\tau} \cdots$.
For a set of states $U $, we write $\diverges{U}$, iff $\diverges{s}$ for some state $s \in U$.

\begin{defi}[Divergences]
  The \emph{divergences} of a state $s$ are defined as $\divergences(s) = \{\weaktrace\,\trace \in \actions^* \mid \exists t \in \states : (s \weaktransition{\weaktrace} t \land \diverges{t})\}$.
  We define $\divergences(\lts) = \divergences(\init)$.
\end{defi}

\noindent Observe that a divergence is any weak trace that has a prefix $\weaktrace$ which can reach a diverging state.
This is based on the assumption that divergences lead to \emph{chaos}.
In theories such as CSP, in which divergences are considered chaotic, chaos obscures all information about the behaviours involving a diverging state; we refer to this as obscuring \emph{post-divergences details}.

\begin{defi}[Stable failures]\label{def:failures}
  The set of all \emph{stable failures} of a state $s$ is defined as $\failures(s) = \{(\weaktrace, X) \in \actions^* \times \powerset(\actions) \mid \exists t \in S : (s \weaktransition{\weaktrace} t \land \stable(t) \land X \in \refusals(t)) \}$.
  The set of failures with post-divergences details \emph{obscured} is defined as $\failuresbot(s) = \failures(s) \cup \{(\weaktrace, X) \in \actions^* \times \powerset(\actions) \mid \weaktrace \in \divergences(s) \}$.
\end{defi}

\noindent We illustrate these concepts by means of an example.
\begin{exa}\label{example:concepts}
  Consider the LTSs $s_0$ and $u_0$ depicted below.    
  \begin{center}
  \scriptsize
  \begin{tikzpicture}[->,node distance=35pt]
   	\tikzstyle{vertex}=[draw=none,minimum size=17pt,inner sep=0pt]
   		
   	\node[vertex] (A1) {$s_0$};
   	\node[vertex] (A1init) [left=10pt of A1] {};
   	\node[vertex] (A1right) [right=.5cm of A1] {$s_1$};
   	\node[vertex] (A1ten) [right= of A1right,yshift=30pt,xshift=-15pt] {$s_2$};
   	\node[vertex] (A1ten2) [left= of A1ten,xshift=15pt] {$s_3$};
   	\node[vertex] (A1twenty) [right= of A1right,yshift=-30pt,xshift=-15pt] {$s_5$};
   		
   	\draw (A1init) edge (A1); 
   	\draw (A1) edge node[above] {$\request$} (A1right);
   	\draw (A1right) edge node[above left] {$\tau$} (A1ten);
   	\draw (A1right) edge node[below left] {$\tau$} (A1twenty);
   	\draw (A1twenty) edge[bend left] node[below left] {20} (A1);
   	\draw (A1ten) edge node[above] {10} (A1ten2);
   	\draw (A1ten2) edge[bend right] node[above left] {10} (A1);	     
       
     \node[vertex] (C1) [right=4cm of A1] {$u_0$};
     \node[vertex] (C1init) [left=10pt of C1] {};
     \node[vertex] (C1right) [right of=C1] {$u_1$};
     \node[vertex] (C1twenty) [right of=C1right] {$u_2$};
     
     \draw (C1init) edge (C1); 
     \draw (C1) edge node[above]{$\request$} (C1right); 
     \draw (C1right) edge node[above]{$20$} (C1twenty);
     \draw (C1twenty) edge[bend left] node[below]{$\tau$} (C1);
     \draw (C1right) edge[loop above] node[above]{$\tau$} (C1right);
  \end{tikzpicture}
  \end{center}    
  \noindent We observe that states $s_0, s_2, s_3, s_5$ and $u_0$ are stable.
  For each of these states we can determine their refusals, \eg, state $s_0$ has the refusals $\{\emptyset, \{10\}, \{20\}, \{10, 20\}\}$ as given by $\refusals(s_0)$.
  Furthermore, we observe that $\request\,20$ is a weak trace of $s_0$ to itself.  
  Consequently, it follows that for example the pairs $(\request\,20, \{10\})$ and $(\request\,20, \{10, 20\})$ are failures of $s_0$.
  None of the states in $s_0$ diverge and as such the corresponding set of divergences are empty and both notions of failures coincide.
  However, for state $u_1$ we can see that $\diverges{u_1}$ holds and therefore $\request$, but also $\request\,10$ is a possible divergence of $u_0$, \ie, $\request\,10 \in \divergences(u_0)$.
  This also means that $(\request\, 10, \{10\}) \in \failuresbot(u_0)$ is a failure of $u_0$ with post-divergences details obscured.\qed
\end{exa}

\noindent The three models of CSP building on the different powers of observation are the \emph{weak trace} model, the \emph{stable failures} model and the \emph{failures-divergences} model.
The refinement relations, induced by these models, are called \emph{trace refinement}, \emph{stable failures refinement} and \emph{failures-divergences refinement} respectively.

\begin{defi}[Refinement]\label{def:refinement}
 	Let $\lts_1$ and $\lts_2$ be two LTSs. 
  \begin{itemize}
    \item $\lts_1$ is refined by $\lts_2$ in trace semantics, denoted by $\lts_1 \refinedbytrace \lts_2$, if and only if  $\weaktraces(\lts_2) \subseteq \weaktraces(\lts_1)$.

    \item $\lts_1$ is refined by $\lts_2$ in stable failures semantics, denoted by $\lts_1 \refinedbysfr \lts_2$, if and only if $\failures(\lts_2) \subseteq \failures(\lts_1)$ and $\weaktraces(\lts_2) \subseteq \weaktraces(\lts_1)$.

    \item $\lts_1$ is refined by $\lts_2$ in failures-divergences semantics, denoted by $\lts_1 \refinedbyfdr \lts_2$, if and only if both $\failuresbot(\lts_2) \subseteq \failuresbot(\lts_1)$ and $\divergences(\lts_2) \subseteq \divergences(\lts_1)$.
  \end{itemize}
\end{defi}

\noindent The LTS that is typically refined is referred to as the \emph{specification}, whereas the LTS that refines the specification is referred to as the \emph{implementation}.

\begin{rem}
  We observe that each refinement relation between LTSs is inverted with respect to the subset relation in its corresponding definition.
  However, in the setting of CSP where these refinement relations are fundamental and have been extensively studied~\cite{Brookes85, Roscoe10, Glabbeek17}, refinement is viewed as an ordering between processes where the process that does not restrict anything, \eg, the process with all failures, is seen as the smallest, least restrictive specification.
\end{rem}
\begin{rem}\label{remark:incorrect}
  The notions defined above appear in different formulations in~\cite{WangS0LDWL12}.
  Their definition of stable failures refinement omits the clause for weak trace inclusion, and their definition of failures-divergences refinement replaces $\failuresbot$ with $\failures$. 
  This yields refinement relations different from the standard ones and neither relation seems to appear in the literature~\cite{Glabbeek19}.
\end{rem}

\noindent We conclude with a small example, illustrating the uses of, and differences between the various refinement relations.

\begin{exa}\label{example:refinement}
  Consider the LTSs $s_0$ and $u_0$ of Example~\ref{example:concepts} again and the LTS $t_0$ depicted in between.
  We now consider $s_0$ to be the specification of a simplified automated teller machine.
\begin{center}
\scriptsize
\begin{tikzpicture}[->,node distance=35pt]
	\tikzstyle{vertex}=[draw=none,minimum size=17pt,inner sep=0pt]
		
	\node[vertex] (A1) {$s_0$};
	\node[vertex] (A1init) [left=10pt of A1] {};
	\node[vertex] (A1right) [right=.5cm of A1] {$s_1$};
	\node[vertex] (A1ten) [right= of A1right,yshift=30pt,xshift=-15pt] {$s_2$};
	\node[vertex] (A1ten2) [left= of A1ten,xshift=15pt] {$s_3$};
	\node[vertex] (A1twenty) [right= of A1right,yshift=-30pt,xshift=-15pt] {$s_5$};
		
	\draw (A1init) edge (A1); 
	\draw (A1) edge node[above] {$\request$} (A1right);
	\draw (A1right) edge node[above left] {$\tau$} (A1ten);
	\draw (A1right) edge node[below left] {$\tau$} (A1twenty);
	\draw (A1twenty) edge[bend left] node[below left] {20} (A1);
	\draw (A1ten) edge node[above] {10} (A1ten2);
	\draw (A1ten2) edge[bend right] node[above left] {10} (A1);
		
	\node[vertex] (B1) [right=4cm of A1] {$t_0$};
	\node[vertex] (B1init) [left=10pt of B1] {};
	\node[vertex] (B1right) [right of=B1] {$t_1$};
	\node[vertex] (B1twenty) [right of=B1right] {$t_2$};
			
	\draw (B1init) edge (B1); 
	\draw (B1) edge node[above]{$\request$} (B1right); 
	\draw (B1right) edge node[above]{$20$} (B1twenty);
  
  \node[vertex] (C1) [right=4cm of B1] {$u_0$};
  \node[vertex] (C1init) [left=10pt of C1] {};
  \node[vertex] (C1right) [right of=C1] {$u_1$};
  \node[vertex] (C1twenty) [right of=C1right] {$u_2$};
  
  \draw (C1init) edge (C1); 
  \draw (C1) edge node[above]{$\request$} (C1right); 
  \draw (C1right) edge node[above]{$20$} (C1twenty);
  \draw (C1twenty) edge[bend left] node[below]{$\tau$} (C1);
  \draw (C1right) edge[loop above] node[above]{$\tau$} (C1right);
\end{tikzpicture}
\end{center} 
  \noindent In the specification $s_0$ the user can first request, by action $\request$, an amount of twenty from the machine.
  The machine can then satisfy this request by either choosing to give twenty directly or by presenting two times ten to the user, which might vary depending on availability within the machine.
  Note that the distinction between user-initiated and response actions is only for the sake of the explanation and is not formally present in the LTS.
      
  An implementation of this specification is \emph{valid} if and only if it refines the specification in the required refinement semantics.  
  Let us consider $t_0$ as a first implementation of this machine.
 The weak traces of $t_0$, consisting of the set $\{\epsilon,\request, \request\,20\}$, are included in the specification and therefore $s_0 \refinedbytrace t_0$.
  Trace refinement is suitable for safety properties; for example the absence of infinite sequences of $10$ or $20$ actions without matching requests can be specified in such semantics.  
  However, trace refinement does not preserve liveness properties.
  In particular, deadlocks are not preserved by trace refinement.
  In contrast, stable failures refinement \emph{does} preserve deadlock freedom.
  For instance, the observation of the failure $(\request\,20, \{\request, 20, 10\})$ of 
  $t_0$ leads to the \emph{deadlocked} state $t_2$, \ie, a state 
  with no outgoing transitions. 
  This failure is not among the failures that can be observed of state $s_0$.
  Consequently, $s_0 \not\refinedbysfr t_0$. 
  
  The second implementation that we consider is $u_0$.
  The self-loop above state $u_1$ might indicate that the machine uses (repeated) polling via a potentially unstable connection to determine whether the user's bank account permits the requested withdrawal. 
  Note that $u_1$ is not a stable state; all failures are thus of the shape $(\request\,20\,\request\,20\ldots, \{10, 20\})$ and these are permitted observations for the given specification $s_0$.
  Therefore, this implementation is a valid stable failures refinement of the given specification, \ie, $s_0 \refinedbysfr u_0$.
  The divergences $\request\,\weaktrace$ for sequences $\weaktrace \in \actions^*$ cause this implementation not to be valid under failures-divergences refinement, \ie, $s_0 \not\refinedbyfdr u_0$.  
  From the perspective of the user, it is indeed questionable whether $u_0$ constitutes a proper implementation, as she may perceive a divergence of the system as a deadlock.\qed
\end{exa}

\noindent 
Note that stable failures refinement is a stronger relation than trace refinement; \ie, whenever $\refinedbysfr$ holds then also necessarily $\refinedbytrace$ holds.
This does not hold the other way around as already shown in Example~\ref{example:refinement} where $s_0 \refinedbytrace t_0$ and $s_0 \not\refinedbysfr t_0$.
Furthermore, $\refinedbyfdr$ is incomparable to $\refinedbytrace$. 
In the preceding example, we  have $u_0 \refinedbyfdr s_0$, but
$u_0 \not\refinedbytrace s_0$ because $\request\,10 \in \weaktraces(s_0)$ and $\request\,10 \not\in \weaktraces(u_0)$.
But we also have $s_0 \refinedbytrace t_0$ and $s_0 \not\refinedbyfdr t_0$, where the latter fails because $(\request\,20, \{\request, 20, 10\}) \in \failuresbot(t_0)$, but $(\request\,20, \{\request, 20, 10\}) \notin \failuresbot(s_0)$.
Similarly, $\refinedbyfdr$ is incomparable to $\refinedbysfr$.
For instance, we have $s_0 \refinedbysfr u_0$ and $s_0 \not\refinedbyfdr u_0$ in Example~\ref{example:refinement},
but also $u_0 \refinedbyfdr s_0$ and $u_0 \not\refinedbysfr s_0$, where for the latter we observe that $(\request, \{10\}) \in \failures(s_0)$ but $(\request, \{10\}) \notin \failures(u_0)$.

\section{Refinement Checking}\label{section:refinement_checking}

In general, the set of weak traces, failures and divergences of an LTS can be infinite. Therefore, checking inclusion of these sets directly is not always viable. 
In~\cite{Roscoe94,Roscoe10}, an algorithm to decide refinement between two labelled transition
systems is sketched.  
As a preprocessing step to this algorithm, all diverging states in both LTSs are marked. 
The algorithm then relies on exploring the product of a \emph{normal form} representation of the specification, \ie, the LTS that is to be refined, and the implementation. 

For each state in this product it checks whether it can locally decide non-refinement of the implementation state with the normal form state.
A state for which non-refinement holds is referred to as a \emph{witness}.
Following~\cite{Roscoe10,WangS0LDWL12} and specifically the terminology of~\cite{Roscoe94},
we formalise the product between LTSs that is explored by the
procedure.

\begin{defi}[Product]\label{def:product} 
  Let $\lts_1 = (\states_1, \init_1, \transitions_1)$ and $\lts_2 = (\states_2, \init_2, \transitions_2)$ be two LTSs.
  The \emph{product} of $\lts_1$ and $\lts_2$, denoted by $\lts_1 \product \lts_2$, is an LTS $(\states, \init, \transitions)$ such that $\states = \states_1 \times \states_2$ and $\init = (\init_1, \init_2)$. The transition relation $\transitions$ is the smallest relation such that for all $s_1, t_1 \in \states_1$, and $s_2, t_2 \in \states_2$ and $a \in \actions$:
  
  \begin{itemize}
    \item If $s_2 \transition{\tau}_2 t_2$ then $(s_1, s_2) \transition{\tau} (s_1, t_2)$.
    
    \item If $s_1 \transition{a}_1 t_1$ and $s_2 \transition{a}_2 t_2$ then $(s_1, s_2) \transition{a} (t_1, t_2)$.
  \end{itemize} 
\end{defi}

\noindent The proposition below relates the behaviours of two LTSs to the behaviours of their product.

\begin{prop}\label{lemma:product_traces}
  Let $\lts_1 = (\states_1, \init_1, \transitions_1)$ and $\lts_2 = (\states_2, \init_2, \transitions_2)$ be two LTSs and let $\lts_1 \product \lts_2 = (\states, \init, \transitions)$.
  For all states $(s_1, t_1), (s_2, t_2) \in \states$ and all sequences $\weaktrace \in \actions^*$ it holds that $(s_1, s_2) \weaktransition{\weaktrace} (t_1, t_2)$ if and only if $s_1 \tracetransition{\weaktrace}_1 t_1$ and $s_2 \weaktransition{\weaktrace}_2 t_2$. 
\end{prop}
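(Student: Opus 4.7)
The approach is to prove the two implications separately by induction, exploiting the fact that $\tau$-transitions in the product $\lts_1 \product \lts_2$ come only from $\tau$-transitions in $\lts_2$ (leaving the first component unchanged), while visible actions synchronise both components. Since $\weaktrace \in \actions^*$ contains no $\tau$, the strong trace $s_1 \tracetransition{\weaktrace}_1 t_1$ captures exactly the synchronised visible moves of $\lts_1$.

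For the forward direction, I would induct on the derivation of $(s_1, s_2) \weaktransition{\weaktrace} (t_1, t_2)$ following the four clauses defining $\weaktransition{}$. The reflexive clause gives $s_1 = t_1$ and $s_2 = t_2$, so both target statements hold trivially. An empty-trace $\tau$-step $(s_1, s_2) \transition{\tau} (t_1, t_2)$ must, by Definition~\ref{def:product}, arise from the first product clause, yielding $s_1 = t_1$ and $s_2 \transition{\tau}_2 t_2$, hence $s_2 \weaktransition{\emptytrace}_2 t_2$. A single-action step $(s_1, s_2) \transition{a} (t_1, t_2)$ with $a \in \actions$ must come from the second product clause, giving $s_1 \transition{a}_1 t_1$ and $s_2 \transition{a}_2 t_2$ directly. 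For the concatenation clause $\weaktrace = \weaktrace' \trace$, the induction hypothesis applied to the two halves provides the strict and weak traces on each component, which recombine via the inductive definitions of $\tracetransition{}_1$ and $\weaktransition{}_2$.

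For the backward direction, I would induct on the length of $\weaktrace$. The base case $\weaktrace = \emptytrace$ forces $s_1 = t_1$, and any witnessing $\tau$-sequence $s_2 \transition{\tau}_2 \cdots \transition{\tau}_2 t_2$ lifts component-wise via the first product clause to $(s_1, s_2) \transition{\tau} \cdots \transition{\tau} (s_1, t_2)$. In the inductive step $\weaktrace = \weaktrace' a$, I would first decompose $s_1 \tracetransition{\weaktrace' a}_1 t_1$ as $s_1 \tracetransition{\weaktrace'}_1 u_1 \transition{a}_1 t_1$, and unpack $s_2 \weaktransition{\weaktrace' a}_2 t_2$ into $s_2 \weaktransition{\weaktrace'}_2 u_2 \weaktransition{\emptytrace}_2 u_2' \transition{a}_2 v_2 \weaktransition{\emptytrace}_2 t_2$. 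The induction hypothesis yields $(s_1, s_2) \weaktransition{\weaktrace'} (u_1, u_2)$; the first $\tau$-segment lifts to $(u_1, u_2) \weaktransition{\emptytrace} (u_1, u_2')$ via the first product clause, the synchronising $a$-transition gives $(u_1, u_2') \transition{a} (t_1, v_2)$ via the second clause, and the trailing $\tau$-segment lifts to $(t_1, v_2) \weaktransition{\emptytrace} (t_1, t_2)$. The fourth clause of $\weaktransition{}$ glues these together.

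The only subtle point is the normal-form decomposition $s_2 \weaktransition{a}_2 t_2 \Leftrightarrow \exists u, v: s_2 \weaktransition{\emptytrace}_2 u \transition{a}_2 v \weaktransition{\emptytrace}_2 t_2$ used in the inductive step; this is a standard property of weak transitions and is itself easily established by induction on the derivation, though it deserves an explicit mention before invoking it.
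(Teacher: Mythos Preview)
Your proposal is correct and follows essentially the same strategy as the paper: establish the empty-sequence case by reasoning about $\tau$-chains (what the paper phrases as ``induction on the length of sequences in $\tau^*$''), and then extend to arbitrary $\weaktrace \in \actions^*$ by induction on the length of $\weaktrace$. The paper's proof is only a two-line sketch, so your version is considerably more explicit; in particular, you treat the two implications separately and use structural induction on the derivation of $\weaktransition{}$ for the forward direction rather than length induction, but this is a cosmetic difference and the underlying argument is the same.
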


\begin{proof}
  First, we can show that the statement holds for the empty sequence by induction on the length of sequences in $\tau^*$.
  Using this we can prove the statement for all sequences in $\actions^*$ by induction on their length using the previous result in the base case.
\end{proof}

\noindent The normal form LTS of a given LTS is obtained using a typical subset construction as is common when determinising a transition system.
A difference between determinisation and normalisation is that the former yields a transition system that preserves and reflects the set of weak traces of the given LTS. This is not the case for the latter, which may add weak traces not present in the original LTS.
We first introduce the normal form LTS of a given LTS that is adequate for reducing the trace refinement and stable failures decision problems to a reachability problem.

\begin{defi}[Normal form]\label{def:normalise_sfr}
  Let $\lts = (\states, \init,  \transitions)$ be an LTS.
  The normal form of $\lts$ is the LTS $\normalizesfr(\lts) = (\states', \init', \transitions')$, where $\states' = \powerset(\states)$, $\init' = \{s \in \states \mid \init \weaktransition{\emptytrace} s \}$ and $\transitions'$ is defined as $U \transition{a}' V$ if and only if $V = \{t \in \states \mid \exists s \in U : s \weaktransition{a} t\}$ for all sets of states $U, V \subseteq \states$ and actions $a \in \actions$.
\end{defi}

\noindent Notice that $\emptyset$ is a state in a normal form LTS. 
Furthermore, the normal form LTS is \emph{deterministic}, \emph{concrete} and \emph{universal}. 

Since a normal form LTS is concrete, all of its states are stable. The states of the original LTS comprising a normal form state may not be stable, however.
When we need to reason about the stability and refusals of the set of states $U$ in the LTS $\lts$ underlying a normal form LTS, rather than the state $U$ of the normal form
LTS, we therefore write $\tostates{U}_\lts$ whenever we wish to stress that we refer to the set of states in $\lts$ that comprise $U$.

The three lemmas stated below relate the set of weak traces of an LTS $\lts$ to the set of traces of $\normalizesfr(\lts)$.

\begin{restatable}{lem}{lemmasubsetsfr}\label{lemma:subset_sfr}
  Let $\lts = (\states, \init, \transitions)$ be an LTS and let $\normalizesfr(\lts) = (\states', \init', \transitions')$.
  For all sequences $\weaktrace \in \actions^*$ and states $U \in \states'$ such that $\init' \tracetransition{\weaktrace}' U$, it holds that $\init \weaktransition{\weaktrace} s$ for all $s \in \tostates{U}_\lts$.
\end{restatable}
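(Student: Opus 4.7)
The plan is to proceed by induction on the length $n$ of the sequence $\weaktrace \in \actions^*$. The key observations are that the step relation $\tracetransition{}'$ of the concrete normal form LTS merely iterates $\transitions'$ along $\weaktrace$, and that $\transitions'$ is defined in Definition~\ref{def:normalise_sfr} in terms of $\weaktransition{}$-successors in $\lts$, so the statement essentially falls out of the definitions once the induction is set up.

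For the base case $n = 0$, the only $U \in \states'$ satisfying $\init' \tracetransition{\emptytrace}' U$ is $U = \init'$, and by Definition~\ref{def:normalise_sfr} we have $\init' = \{s \in \states \mid \init \weaktransition{\emptytrace} s\}$. Hence $\init \weaktransition{\emptytrace} s$ holds for every $s \in \tostates{\init'}_\lts$, as required.

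For the inductive step, consider $\weaktrace = \weaktrace'\,a$ of length $n+1$ with $\init' \tracetransition{\weaktrace'\,a}' U$. Unfolding this transition, there is some $V \in \states'$ with $\init' \tracetransition{\weaktrace'}' V$ and $V \transition{a}' U$. The induction hypothesis gives $\init \weaktransition{\weaktrace'} t$ for every $t \in \tostates{V}_\lts$. By Definition~\ref{def:normalise_sfr}, $U = \{s \in \states \mid \exists t \in V : t \weaktransition{a} s\}$, so for any $s \in \tostates{U}_\lts$ one may pick a witness $t \in \tostates{V}_\lts$ with $t \weaktransition{a} s$. Concatenating the two weak transitions via the final clause of the definition of $\weaktransition{}$ yields $\init \weaktransition{\weaktrace'\,a} s$, completing the induction.

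The main thing to watch is keeping the two roles of $U$ straight: as a single state of $\normalizesfr(\lts)$ when reasoning about $\tracetransition{}'$, and as the set $\tostates{U}_\lts \subseteq \states$ of underlying states of $\lts$ when reasoning about $\weaktransition{}$. No genuinely novel step is required; the work is essentially careful unfolding of the definitions of the normal form and the weak transition relation, together with a standard induction on trace length.
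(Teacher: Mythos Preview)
Your proof is correct and follows essentially the same approach as the paper's own proof: induction on the length of $\weaktrace$, with the base case handled by the definition of $\init'$ and the inductive step by unfolding the definition of $\transitions'$ and concatenating weak transitions. The only cosmetic difference is the naming of the intermediate and final normal-form states (the paper uses $U$ for the intermediate and $V$ for the target, whereas you swap these roles).
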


\begin{restatable}{lem}{lemmasubsetexistencesfr}\label{lemma:subset_existence_sfr}
  Let $\lts = (\states, \init, \transitions)$ be an LTS and let $\normalizesfr(\lts) = (\states', \init', \transitions')$.
  For all sequences $\weaktrace \in \actions^*$ and for all states $s \in \states$ such that $\init \weaktransition{\weaktrace} s$, there is a state $U \in \states'$ such that $s \in \tostates{U}_\lts$ and $\init' \tracetransition{\weaktrace}' U$.
\end{restatable}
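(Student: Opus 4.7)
The plan is to prove the statement by induction on the length of the sequence $\weaktrace \in \actions^*$, exploiting the close match between the construction of $\normalizesfr(\lts)$ and the weak transition relation on $\lts$: the initial normal form state packages exactly the $\tau$-closure of $\init$, and each normal form transition on $a$ packages exactly the set of $a$-weak-successors.

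For the base case $\weaktrace = \emptytrace$, a state $s$ with $\init \weaktransition{\emptytrace} s$ is, by the defining rules of $\weaktransition{}$, reachable from $\init$ via a (possibly empty) sequence of $\tau$-transitions. Hence $s$ lies in $\tostates{\init'}_\lts$ by the very definition $\init' = \{t \in \states \mid \init \weaktransition{\emptytrace} t\}$ in Definition~\ref{def:normalise_sfr}. Taking $U = \init'$ and noting $\init' \tracetransition{\emptytrace}' \init'$ settles this case.

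For the inductive case, write $\weaktrace = \weaktrace'\,a$ with $a \in \actions$ and suppose $\init \weaktransition{\weaktrace'\,a} s$. Using the last defining clause of $\weaktransition{}$, I decompose this weak transition as $\init \weaktransition{\weaktrace'} u$ followed by $u \weaktransition{a} s$ for some intermediate state $u \in \states$. The induction hypothesis applied to $\weaktrace'$ yields a state $U' \in \states'$ with $u \in \tostates{U'}_\lts$ and $\init' \tracetransition{\weaktrace'}' U'$. Now define $V = \{t \in \states \mid \exists r \in U' : r \weaktransition{a} t\}$; by Definition~\ref{def:normalise_sfr} this gives a single transition $U' \transition{a}' V$ in $\normalizesfr(\lts)$. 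Since $u \in U'$ and $u \weaktransition{a} s$, the witness $r := u$ shows $s \in \tostates{V}_\lts$. Concatenating with the trace obtained from the induction hypothesis, we conclude $\init' \tracetransition{\weaktrace'\,a}' V$, so $U := V$ is the required normal form state.

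The one subtle point to justify cleanly is the decomposition step $\init \weaktransition{\weaktrace'\,a} s \implies \exists u : \init \weaktransition{\weaktrace'} u \land u \weaktransition{a} s$. This is not literally an axiom of $\weaktransition{}$ but a straightforward consequence of the fact that $\weaktransition{}$ is the smallest relation closed under the four stated rules: an easy induction on the derivation of $\init \weaktransition{\weaktrace'\,a} s$ (or, equivalently, by observing that every weak trace admits a canonical decomposition into a sequence of $\tau$-blocks interleaved with the visible actions of $\weaktrace'\,a$) furnishes the required pivot state $u$ after the $\weaktrace'$-prefix has been performed. Apart from handling this decomposition rigorously, the rest of the argument is a direct unfolding of definitions, and no further difficulty is expected.
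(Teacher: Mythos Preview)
Your proposal is correct and follows essentially the same route as the paper's proof: induction on the length of $\weaktrace$, with the base case handled by the definition of $\init'$ and the inductive step by decomposing $\init \weaktransition{\weaktrace'\,a} s$ into $\init \weaktransition{\weaktrace'} u \weaktransition{a} s$, applying the induction hypothesis to obtain $U'$, and then taking the $a$-successor $V$ of $U'$ in $\normalizesfr(\lts)$. You are in fact slightly more careful than the paper, which simply asserts the decomposition step without comment, whereas you flag that it requires a small argument about the inductive definition of $\weaktransition{}$.
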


The lemma below clarifies the role of the state $\emptyset$ in $\normalizesfr(\lts)$.

\begin{restatable}{lem}{lemmanotweaktracesfr}\label{lemma:not_weaktrace_sfr}
  Let $\lts = (\states, \init, \transitions)$ be an LTS and let $\normalizesfr(\lts) = (\states', \init', \transitions')$.
  For all sequences $\weaktrace \in \actions^*$ it holds that $\weaktrace \notin \weaktraces(\lts)$ if and only if $\init' \tracetransition{\weaktrace}' \emptyset$.
\end{restatable}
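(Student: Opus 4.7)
My plan is to prove the two directions of the biconditional separately, in both cases leveraging the fact that $\normalizesfr(\lts)$ is deterministic and universal, together with the two preceding lemmas that tie traces in the normal form to weak traces in $\lts$. The core observation is that because $\normalizesfr(\lts)$ is universal and concrete, for any $\weaktrace \in \actions^*$ there exists at least one $U \in \states'$ with $\init' \tracetransition{\weaktrace}' U$, and by Lemma~\ref{lemma:deterministic} applied to the deterministic LTS $\normalizesfr(\lts)$ this $U$ is unique.

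For the ``only if'' direction, I will assume $\weaktrace \notin \weaktraces(\lts)$ and take the unique $U$ with $\init' \tracetransition{\weaktrace}' U$. If $\tostates{U}_\lts$ contained any state $s$, then Lemma~\ref{lemma:subset_sfr} would give $\init \weaktransition{\weaktrace} s$, contradicting the assumption $\weaktrace \notin \weaktraces(\lts)$. Hence $U = \emptyset$, establishing $\init' \tracetransition{\weaktrace}' \emptyset$.

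For the ``if'' direction, I will argue by contraposition: assume $\weaktrace \in \weaktraces(\lts)$, so there exists $s \in \states$ with $\init \weaktransition{\weaktrace} s$. Lemma~\ref{lemma:subset_existence_sfr} then supplies a state $U \in \states'$ with $s \in \tostates{U}_\lts$ and $\init' \tracetransition{\weaktrace}' U$. In particular $U \neq \emptyset$, and by determinism of $\normalizesfr(\lts)$ (Lemma~\ref{lemma:deterministic}) this $U$ is the \emph{only} state reachable from $\init'$ via $\weaktrace$, so $\init' \tracetransition{\weaktrace}' \emptyset$ cannot hold.

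I do not expect any serious obstacle: both directions are short once the preceding two lemmas are in hand, and the only subtle point is to invoke universality of $\normalizesfr(\lts)$ to guarantee that $\init' \tracetransition{\weaktrace}' U$ holds for \emph{some} $U$ in the first direction, and determinism to ensure uniqueness in the second. Everything else is a direct unpacking of definitions.
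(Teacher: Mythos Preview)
Your proof is correct, and the $(\Longleftarrow)$ direction matches the paper's argument essentially verbatim. For the $(\Longrightarrow)$ direction, however, you take a cleaner route than the paper: the paper proceeds by induction on the length of $\weaktrace$, splitting into the cases $\weaktrace \notin \weaktraces(\lts)$ (invoking the induction hypothesis and the self-loop $\emptyset \transition{a}' \emptyset$) and $\weaktrace \in \weaktraces(\lts)$ (using Lemmas~\ref{lemma:subset_existence_sfr} and~\ref{lemma:deterministic} to locate the unique $U$ reached by $\weaktrace$ and then arguing $U \transition{a}' \emptyset$). You instead exploit universality of $\normalizesfr(\lts)$ to obtain some $U$ with $\init' \tracetransition{\weaktrace}' U$ directly, and then use Lemma~\ref{lemma:subset_sfr} to force $U = \emptyset$. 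Your argument is shorter and avoids the case split; the paper's inductive proof, on the other hand, does not need to appeal to universality as a separately stated property (it effectively re-derives the needed instance inline). Both are sound; yours is the more economical packaging of the same underlying facts.
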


The structure explored by the refinement checking procedure of~\cite{Roscoe94,Roscoe10} for two LTSs $\lts_1$ and $\lts_2$ is the product $\normalizesfr(\lts_1) \product \lts_2$ in case of trace refinement and stable failures refinement.
For these structures the related witnesses, where the reachability of such a witness indicates non-refinement, are then as follows:

\begin{defi}[Witness]
  Let $\lts_1$ and $\lts_2$ be LTSs.
  A state $(U, s)$ of product $\normalizesfr(\lts_1) \product \lts_2$:
  \begin{itemize}
    \item is called a \emph{TR-witness} if and only if $U = \emptyset$.
    
    \item is called an \emph{SF-witness} if and only if at least one of the following conditions hold:
    \begin{itemize}
      \item $U = \emptyset$.
      
      \item $\stable(s)$ and $\refusals(s) \not\subseteq \refusals(\tostates{U}_{\lts_1})$.
    \end{itemize}  
  \end{itemize}
\end{defi}

\noindent 
We illustrate the notion of a witness, and in particular the relation between the reachability of a witness and the (violation of) the corresponding refinement relation by means of a small example.

\begin{exa}
  Consider the specification $s_0$ and the two implementations $t_0$ and $u_0$ as presented in Example~\ref{example:refinement} again. 
  In the figure below the (reachable part of the) normal form LTS of $s_0$ is depicted on the left, the product $\normalizesfr(s_0) \product t_0$ is shown in the middle and the product $\normalizesfr(s_0) \product u_0$ is shown on the right.
  \begin{center}
  \scriptsize
  \begin{tikzpicture}[->,node distance=45pt]
  	\tikzstyle{vertex}=[draw=none,minimum size=17pt,inner sep=2pt]

  	\node[vertex] (A1) {$\{s_0\}$};
  	\node[vertex] (A1init) [left=10pt of A1] {};
  	\node[vertex] (A1ten) [below of=A1, right  of=A1] {$\{s_3\}$};
  	\node[vertex] (A1right) [above of=A1ten, right  of=A1ten] {$\{s_1,s_2,s_5\}$};
    \node[vertex] (Aempty) [right of=A1] {$\emptyset$};

  	\draw (A1init) edge (A1);
  	\draw (A1) edge[bend left=90,looseness=1.5] node[above] {$\request$} (A1right);
  	\draw (A1right) edge[bend left] node[below right] {$10$} (A1ten);
  	\draw (A1right) edge[bend right=60,looseness=1.35] node[above] {$20$} (A1);
  	\draw (A1ten) edge[bend left] node[below left] {$10$} (A1);
    \draw (A1) edge node[below] {$10,20$} (Aempty);
    \draw (A1right) edge node[below] {$\request$} (Aempty);
    \draw (A1ten) edge node[right] {$20,\request$} (Aempty);
    \draw (Aempty) edge[loop above] node[above] {$10,20,\request$} (Aempty);
    
    \node[vertex] (D1) [above right=1cm and 5cm of A1] {$(\{s_0\}, t_0)$};
    \node[vertex] (D1init) [left=10pt of D1] {};
    \node[vertex] (D1right) [below=1cm of D1] {$(\{s_1,s_2,s_5\}, t_1)$};
    \node[vertex] (D1twenty) [below=1cm of D1right] {$(\{s_0\}, t_2)$};

    \draw (D1) edge node[right]{$\request$} (D1right);
    \draw (D1right) edge node[right]{$20$} (D1twenty);
    \draw (D1init) edge node[above] {} (D1);

    \node[vertex] (C1) [right=1.5cm of D1right] {$(\{s_0\}, u_0)$};
    \node[vertex] (C1init) [left=10pt of C1] {};
    \node[vertex] (C1right) [right=1cm of C1] {$(\{s_1,s_2,s_5\}, u_1)$};
    \node[vertex] (C1twenty) [below=1cm of C1right] {$(\{s_0\}, u_2)$};

    \draw (C1) edge node[above]{$\request$} (C1right);
    \draw (C1right) edge node[right]{$20$} (C1twenty);
    \draw (C1right) edge[loop above] node[above]{$\tau$} (C1right);
    \draw (C1twenty) edge node[below left]{$\tau$} (C1);
    \draw (C1init) edge node[above] {} (C1);
	\end{tikzpicture}
	\end{center}

  We observe that both $\normalizesfr(s_0) \product t_0$ and $\normalizesfr(s_0) \product u_0$ contain no TR-witnesses.
  In Example~\ref{example:refinement} we had already established that both $s_0 \refinedbytrace t_0$ and $s_0 \refinedbytrace u_0$.
  For the product $\normalizesfr(s_0) \product t_0$, the state $(\{s_0\}, t_2)$ is reachable and an SF-witness since $\stable(t_2)$ and $\{10,20,\request\} \in \refusals(t_2)$ but $\{10,20,\request\} \notin \refusals(\{s_0\})$.
  Intuitively, the product encodes that $\request\,20$ is a weak trace in both LTSs $\normalizesfr(s_0)$ and $t_2$ that reaches $\{s_0\}$ and $t_2$ respectively.
  Similarly, the normal form relates this weak trace to the reachability of $s_0$ from $s_0$.  
  Therefore, this witness indicates that $(\request\,20,\{10,20,\request\})$ is a failure of $t_0$, but not a failure of $s_0$, which establishes a violation of stable failures refinement.
  Finally, we can observe that $\normalizesfr(s_0) \product u_0$ contains no SF-witnesses.
  \qed
\end{exa}

\noindent The following lemmas formalise that trace refinement can be decided by checking reachability of a TR-witness in the product $\normalizesfr(\lts_1) \product \lts_2$.
Note that this result, and the related result for stable failures refinement, was already established in the literature; for instance, 
the relation between an SF-witness and the corresponding refinement relation can be found in~\cite{Roscoe94}.
However, the definitions in that paper are not explicit and the proof of this correspondence is only sketched.
Therefore, we here provide detailed proofs of these results.

\begin{lem}\label{lemma:traces_if}
  Let $\lts_1 = (\states_1, \init_1, \transitions_1)$ and $\lts_2 = (\states_2, \init_2, \transitions_2)$ be two LTSs.
  If $\lts_1 \refinedbytrace \lts_2$ holds then no TR-witness is reachable in $\normalizesfr(\lts_1) \product \lts_2$.
\end{lem}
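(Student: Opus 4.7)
The plan is to proceed by contraposition: assume that some TR-witness $(\emptyset, s)$ is reachable in $\normalizesfr(\lts_1) \product \lts_2$, and derive that $\lts_1 \not\refinedbytrace \lts_2$. The key is to convert this reachability fact into a concrete weak trace that is in $\weaktraces(\lts_2)$ but not in $\weaktraces(\lts_1)$, contradicting the definition of $\refinedbytrace$.

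First, I would unpack reachability: if $(\emptyset, s)$ is reachable from $(\init', \init_2)$, then there is some $\trace \in \actionstau^*$ with $(\init', \init_2) \tracetransition{\trace} (\emptyset, s)$. Letting $\weaktrace \in \actions^*$ be $\trace$ with all $\tau$-symbols removed, a routine induction on the length of $\trace$ using the defining clauses of $\weaktransition{}$ yields $(\init', \init_2) \weaktransition{\weaktrace} (\emptyset, s)$ in the product.

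Next, I would apply Proposition~\ref{lemma:product_traces} to the weak transition $(\init', \init_2) \weaktransition{\weaktrace} (\emptyset, s)$. This decomposes into two facts: $\init' \tracetransition{\weaktrace}' \emptyset$ in $\normalizesfr(\lts_1)$ and $\init_2 \weaktransition{\weaktrace}_2 s$ in $\lts_2$. The second fact directly witnesses $\weaktrace \in \weaktraces(\lts_2)$. By the assumption $\lts_1 \refinedbytrace \lts_2$, hence $\weaktraces(\lts_2) \subseteq \weaktraces(\lts_1)$, we obtain $\weaktrace \in \weaktraces(\lts_1)$. But the first fact, combined with Lemma~\ref{lemma:not_weaktrace_sfr}, asserts precisely $\weaktrace \notin \weaktraces(\lts_1)$, giving the desired contradiction.

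The only mildly subtle step is the passage from $\tracetransition{\trace}$ in the product (where $\trace$ may contain $\tau$s introduced by the first product rule) to $\weaktransition{\weaktrace}$, since Proposition~\ref{lemma:product_traces} is phrased in terms of weak transitions over $\actions^*$. Once that bridge is built, the rest is a direct application of Lemma~\ref{lemma:not_weaktrace_sfr}, which is tailored to exactly capture the role of $\emptyset$ as the ``sink'' state reached by exactly those traces missing from $\weaktraces(\lts_1)$. I expect no further obstacles.
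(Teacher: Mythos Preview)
Your proposal is correct and follows essentially the same route as the paper's proof: assume a reachable TR-witness $(\emptyset,s)$, obtain a weak trace $\weaktrace$ to it, then use Proposition~\ref{lemma:product_traces} and Lemma~\ref{lemma:not_weaktrace_sfr} to conclude $\weaktrace \in \weaktraces(\lts_2) \setminus \weaktraces(\lts_1)$. You are slightly more explicit than the paper about stripping $\tau$s to pass from $\tracetransition{\trace}$ to $\weaktransition{\weaktrace}$, and note that your ``contraposition'' framing drifts into a contradiction argument once you invoke the hypothesis $\lts_1 \refinedbytrace \lts_2$---which is exactly what the paper does as well.
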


\begin{proof}
  Suppose that $\lts_1 \refinedbytrace \lts_2$ holds, which means that $\weaktraces(\lts_2) \subseteq \weaktraces(\lts_1)$.
  Now assume that there is a reachable TR-witness $(\emptyset, s)$ in $\normalizesfr(\lts_1) \product \lts_2$.
  We show that this leads to a contradiction. 
  As the pair $(\emptyset, s)$ is reachable there is a weak trace $\weaktrace \in \actions^*$ such that $\init \weaktransition{\weaktrace} (\emptyset, s)$ where $\init$ is the initial state of $\normalizesfr(\lts_1) \product \lts_2$.
  From Proposition~\ref{lemma:product_traces} it follows that $\init_2 \weaktransition{\weaktrace}_2 s$ and $\emptyset$ is reachable by following $\weaktrace$ in $\normalizesfr(\lts_1)$.
  Therefore, $\weaktrace \in \weaktraces(\lts_2)$ and from Lemma~\ref{lemma:not_weaktrace_sfr} it follows that $\weaktrace \notin \weaktraces(\lts_1)$.
  This contradicts our assumption that $\weaktraces(\lts_2) \subseteq \weaktraces(\lts_1)$. Hence, no TR-witness is reachable in $\normalizesfr(\lts_1) \product \lts_2$.
\end{proof}

\begin{lem}\label{lemma:traces_only_if}
  Let $\lts_1 = (\states_1, \init_1, \transitions_1)$ and $\lts_2 = (\states_2, \init_2, \transitions_2)$ be two LTSs.
  If no TR-witness is reachable in $\normalizesfr(\lts_1) \product \lts_2$ then $\lts_1 \refinedbytrace \lts_2$.
\end{lem}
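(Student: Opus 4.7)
The plan is to prove the contrapositive essentially by unwinding the definitions and invoking the supporting lemmas already established. Assume no TR-witness is reachable in $\normalizesfr(\lts_1) \product \lts_2$; I want to show $\weaktraces(\lts_2) \subseteq \weaktraces(\lts_1)$. So I pick an arbitrary $\weaktrace \in \weaktraces(\lts_2)$, witnessed by some $t \in \states_2$ with $\init_2 \weaktransition{\weaktrace}_2 t$, and I aim to conclude $\weaktrace \in \weaktraces(\lts_1)$.

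The natural route is by contradiction: suppose $\weaktrace \notin \weaktraces(\lts_1)$. Then Lemma~\ref{lemma:not_weaktrace_sfr} yields $\init' \tracetransition{\weaktrace}' \emptyset$ in $\normalizesfr(\lts_1)$, where $\init'$ is the initial state of the normal form. Combining this with $\init_2 \weaktransition{\weaktrace}_2 t$ via the ``if'' direction of Proposition~\ref{lemma:product_traces} gives $(\init', \init_2) \weaktransition{\weaktrace} (\emptyset, t)$ in the product $\normalizesfr(\lts_1) \product \lts_2$. Hence $(\emptyset, t)$ is reachable in the product and, by definition, is a TR-witness. This contradicts the standing assumption, so $\weaktrace \in \weaktraces(\lts_1)$ must hold, and since $\weaktrace$ was arbitrary, $\lts_1 \refinedbytrace \lts_2$ follows.

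The argument is essentially dual to the proof of Lemma~\ref{lemma:traces_if} and will fit in a few lines. The only mildly delicate point is matching the shape of Proposition~\ref{lemma:product_traces}, which requires a $\tracetransition{\weaktrace}$ on the first component and a $\weaktransition{\weaktrace}$ on the second; this matches the situation exactly, because Lemma~\ref{lemma:not_weaktrace_sfr} supplies precisely a $\tracetransition{\weaktrace}'$ in the normal form (which is concrete and deterministic, so no $\tau$-closure is needed) while the assumption on $\lts_2$ supplies a $\weaktransition{\weaktrace}_2$. No real obstacle arises beyond bookkeeping.
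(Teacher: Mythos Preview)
Your proof is correct and follows essentially the same approach as the paper: both argue by contradiction, use Lemma~\ref{lemma:not_weaktrace_sfr} to obtain $\init' \tracetransition{\weaktrace}' \emptyset$ in $\normalizesfr(\lts_1)$, and then invoke Proposition~\ref{lemma:product_traces} to produce a reachable TR-witness $(\emptyset,t)$ in the product. The only cosmetic difference is that you pick an arbitrary $\weaktrace \in \weaktraces(\lts_2)$ and derive a contradiction from $\weaktrace \notin \weaktraces(\lts_1)$, whereas the paper directly assumes $\lts_1 \not\refinedbytrace \lts_2$ and picks a violating trace; the logical content is identical.
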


\begin{proof}
  Suppose that no TR-witness is reachable in $\normalizesfr(\lts_1) \product \lts_2$.
  Again, we prove this by contradiction.
  Assume that $\lts_1 \not\refinedbytrace \lts_2$ holds.
  This means that $\weaktraces(\lts_2) \not\subseteq \weaktraces(\lts_1)$.
  Pick a weak trace $\weaktrace \in \weaktraces(\lts_2)$ such that $\weaktrace \notin \weaktraces(\lts_1)$.
  Then there is a state $s \in S_2$ for which $\init_2 \weaktransition{\weaktrace}_2 s$.
  By Lemma~\ref{lemma:not_weaktrace_sfr} it holds that $\weaktrace$ leads to the empty set in $\normalizesfr(\lts_1)$.
  By Proposition~\ref{lemma:product_traces} the pair $(\emptyset, s)$ is then a reachable TR-witness in $\normalizesfr(\lts_1) \product \lts_2$.
  Contradiction.   
\end{proof}

\begin{thm}\label{theorem:traces}
  Let $\lts_1 = (\states_1, \init_1, \transitions_1)$ and $\lts_2 = (\states_2, \init_2, \transitions_2)$ be two LTSs.
  Then $\lts_1 \refinedbytrace \lts_2$ holds if and only if no TR-witness is reachable in $\normalizesfr(\lts_1) \product \lts_2$.
\end{thm}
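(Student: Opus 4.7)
The plan is to obtain this theorem essentially for free by combining the two preceding lemmas, since the biconditional is just the conjunction of their converses. Concretely, the forward direction ($\lts_1 \refinedbytrace \lts_2 \Rightarrow$ no reachable TR-witness) is exactly Lemma~\ref{lemma:traces_if}, and the reverse direction (no reachable TR-witness $\Rightarrow \lts_1 \refinedbytrace \lts_2$) is exactly Lemma~\ref{lemma:traces_only_if}. So the proof is a single line invoking both lemmas.

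The substantive work is already packaged into those lemmas, and it rests on two ingredients that are worth recalling because they explain \emph{why} TR-witness reachability is the right proxy for trace inclusion. The first is Proposition~\ref{lemma:product_traces}, which tells us that a trace $\weaktrace$ is witnessed in the product $\normalizesfr(\lts_1) \product \lts_2$ precisely when it is witnessed in $\normalizesfr(\lts_1)$ \emph{and} is a weak trace of $\lts_2$. The second is Lemma~\ref{lemma:not_weaktrace_sfr}, which characterises $\emptyset$ as the ``sink'' state in $\normalizesfr(\lts_1)$ reached by exactly those sequences that are \emph{not} weak traces of $\lts_1$. Together these pinpoint that $(\emptyset, s)$ being reachable in the product is the exact encoding of a weak trace of $\lts_2$ that fails to be a weak trace of $\lts_1$.

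There is no genuine obstacle at the level of this theorem; it is a glue statement. If one wished to avoid the split into two lemmas and give a direct proof, the cleanest approach would be a chain of biconditionals: $\lts_1 \refinedbytrace \lts_2$ iff $\weaktraces(\lts_2) \subseteq \weaktraces(\lts_1)$ iff there is no $\weaktrace \in \weaktraces(\lts_2) \setminus \weaktraces(\lts_1)$ iff (by Lemma~\ref{lemma:not_weaktrace_sfr} and Proposition~\ref{lemma:product_traces}) there is no reachable $(\emptyset, s)$ in $\normalizesfr(\lts_1) \product \lts_2$. Either formulation suffices, but citing the two lemmas keeps the presentation minimal and aligned with the structure already laid down in the excerpt.
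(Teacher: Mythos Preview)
Your proposal is correct and matches the paper's approach exactly: the paper's proof is the single line ``This follows directly from Lemmas~\ref{lemma:traces_if} and~\ref{lemma:traces_only_if}.'' Your additional explanatory remarks about Proposition~\ref{lemma:product_traces} and Lemma~\ref{lemma:not_weaktrace_sfr} are accurate but go beyond what the paper includes at this point, since that reasoning is already encapsulated in the two cited lemmas.
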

\begin{proof}
  This follows directly from Lemmas~\ref{lemma:traces_if} and~\ref{lemma:traces_only_if}.
\end{proof}

\noindent Next, we formalise the relation between stable failures refinement and the reachability of an SF-witness.
In the proofs for the next two lemmas, we exploit Theorem~\ref{theorem:traces} and the fact that stable failures refinement is stronger than trace refinement.

\begin{lem}\label{lemma:stable_failures_if}
  Let $\lts_1 = (\states_1, \init_1, \transitions_1)$ and $\lts_2 = (\states_2, \init_2, \transitions_2)$ be two LTSs.
  If $\lts_1 \refinedbysfr \lts_2$ holds then no SF-witness is reachable in $\normalizesfr(\lts_1) \product \lts_2$.
\end{lem}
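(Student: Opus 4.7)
The plan is to prove the statement by contraposition: assume a reachable SF-witness $(U,s)$ in $\normalizesfr(\lts_1) \product \lts_2$ and derive a violation of $\lts_1 \refinedbysfr \lts_2$. Since $(U,s)$ is reachable, there is a weak trace $\weaktrace \in \actions^*$ leading the initial state of the product to $(U,s)$, and by Proposition~\ref{lemma:product_traces} we obtain both $\init_1' \tracetransition{\weaktrace}' U$ in $\normalizesfr(\lts_1)$ and $\init_2 \weaktransition{\weaktrace}_2 s$ in $\lts_2$, where $\init_1'$ is the initial state of $\normalizesfr(\lts_1)$.

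I would then split on the two clauses in the definition of SF-witness. In the first case, $U = \emptyset$, so $(U,s)$ is in fact a TR-witness. Since $\lts_1 \refinedbysfr \lts_2$ in particular requires $\weaktraces(\lts_2) \subseteq \weaktraces(\lts_1)$, we have $\lts_1 \refinedbytrace \lts_2$, and Lemma~\ref{lemma:traces_if} then rules out any reachable TR-witness, yielding a contradiction.

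In the second case, $\stable(s)$ holds and we can pick an $X \in \refusals(s)$ with $X \notin \refusals(\tostates{U}_{\lts_1})$. Because $\init_2 \weaktransition{\weaktrace}_2 s$ with $s$ stable and $X \in \refusals(s)$, the pair $(\weaktrace, X)$ lies in $\failures(\lts_2)$. The assumption $\failures(\lts_2) \subseteq \failures(\lts_1)$ then gives a state $t \in \states_1$ with $\init_1 \weaktransition{\weaktrace}_1 t$, $\stable(t)$, and $X \in \refusals(t)$. Applying Lemma~\ref{lemma:subset_existence_sfr} to this weak trace yields a normal form state $V$ such that $\init_1' \tracetransition{\weaktrace}' V$ and $t \in \tostates{V}_{\lts_1}$. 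Since $\normalizesfr(\lts_1)$ is deterministic, Lemma~\ref{lemma:deterministic} forces $V = U$, hence $t \in \tostates{U}_{\lts_1}$ is a stable witness showing $X \in \refusals(\tostates{U}_{\lts_1})$, contradicting the choice of $X$.

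The main obstacle is the second case, specifically bridging from ``some stable $t$ in $\lts_1$ admits the refusal $X$ after $\weaktrace$'' to ``that particular $t$ belongs to the normal form state $U$ reached by $\weaktrace$''. This hinges on the determinism of $\normalizesfr(\lts_1)$ together with Lemma~\ref{lemma:subset_existence_sfr}, which ensures that every $\lts_1$-state reachable by $\weaktrace$ must sit inside the unique normal form state reached by $\weaktrace$. The other case is essentially a direct appeal to the trace-refinement result already established in Lemma~\ref{lemma:traces_if}.
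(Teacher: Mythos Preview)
Your proof is correct and follows essentially the same approach as the paper: both argue by contradiction, dispatch the $U = \emptyset$ case via the trace-refinement result (the paper cites Theorem~\ref{theorem:traces}, you cite Lemma~\ref{lemma:traces_if}, which amounts to the same thing), and in the refusals case use Lemma~\ref{lemma:subset_existence_sfr} together with determinism (Lemma~\ref{lemma:deterministic}) to place every $\weaktrace$-reachable $\lts_1$-state inside $U$. The only cosmetic difference is that the paper concludes $(\weaktrace,X) \notin \failures(\lts_1)$ directly from $X \notin \refusals(\tostates{U}_{\lts_1})$, whereas you assume $(\weaktrace,X) \in \failures(\lts_1)$ via the refinement hypothesis and extract a stable witness $t \in U$ that contradicts $X \notin \refusals(\tostates{U}_{\lts_1})$; these are two phrasings of the same argument.
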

\begin{proof}
  Suppose that $\lts_1 \refinedbysfr \lts_2$ holds.
  Therefore, both $\failures(\lts_2) \subseteq \failures(\lts_1)$ and $\weaktraces(\lts_2) \subseteq \weaktraces(\lts_1)$.
  Now assume that there is a reachable SF-witness $(U, s)$ in $\normalizesfr(\lts_1) \product \lts_2$. 
  We show that this leads to a contradiction.
  For $(U, s)$ to be an SF-witness it holds that $U = \emptyset$ or both $\stable(s)$ and $\refusals(s) \not\subseteq \refusals(\tostates{U}_{\lts_1})$.
	However, since $\weaktraces(\lts_2) \subseteq \weaktraces(\lts_1)$ it follows that $\lts_1 \refinedbytrace \lts_2$ and, hence, by Theorem~\ref{theorem:traces}, no TR-witness is reachable in $\normalizesfr(\lts_1) \product \lts_2$.
  Consequently, $U \neq \emptyset$, and therefore it must be the case that
  $\stable(s)$ and $\refusals(s) \not\subseteq \refusals(\tostates{U}_{\lts_1})$ hold.
  
  As the pair $(U, s)$ is reachable there is a weak trace $\weaktrace \in \actions^*$ such that $\init \weaktransition{\weaktrace} (U, s)$ where $\init$ is the initial state of $\normalizesfr(\lts_1) \product \lts_2$.
  From Proposition~\ref{lemma:product_traces} it follows that $\init_2 \weaktransition{\weaktrace}_2 s$ and $U$ is reachable by following $\weaktrace$ in $\normalizesfr(\lts_1)$.    
	Since $\stable(s)$ and $\init_2 \weaktransition{\weaktrace}_2 s$, it follows that there must be a failure $(\weaktrace, X) \in \failures(\lts_2)$ where $s$ can 
  stably refuse $X \in \refusals(s)$, but $X \notin \refusals(\tostates{U}_{\lts_1})$.
  Let $(\weaktrace,X)$ be such.
  By Lemmas~\ref{lemma:deterministic} and~\ref{lemma:subset_existence_sfr} it follows for all states $t \in \states_1$ where $\init_1 \weaktransition{\weaktrace}_1 t$ that $t \in \tostates{U}_{\lts_1}$.
  For each stable $t$ it holds that $X \notin \refusals(t)$, because $X \notin \refusals(\tostates{U}_{\lts_1})$.  
  Therefore, we conclude that $(\weaktrace, X) \notin \failures(\lts_1)$, which contradicts $\failures(\lts_2) \subseteq 
  \failures(\lts_1)$.
  We conclude that the state $(U,s)$ cannot be an SF-witness.
\end{proof}

\begin{lem}\label{lemma:stable_failures_only_if}
  Let $\lts_1 = (\states_1, \init_1, \transitions_1)$ and $\lts_2 = (\states_2, \init_2, \transitions_2)$ be two LTSs.
  If no SF-witness is reachable in $\normalizesfr(\lts_1) \product \lts_2$ then $\lts_1 \refinedbysfr \lts_2$ holds.
\end{lem}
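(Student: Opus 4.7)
The plan is to establish both inclusions required for $\lts_1 \refinedbysfr \lts_2$, namely $\weaktraces(\lts_2) \subseteq \weaktraces(\lts_1)$ and $\failures(\lts_2) \subseteq \failures(\lts_1)$, by exploiting the absence of SF-witnesses in the product. The weak trace inclusion comes essentially for free: every TR-witness is by definition an SF-witness, so unreachability of SF-witnesses implies unreachability of TR-witnesses, and Theorem~\ref{theorem:traces} then yields $\lts_1 \refinedbytrace \lts_2$.

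The bulk of the argument concerns the failure inclusion. I would take an arbitrary $(\weaktrace, X) \in \failures(\lts_2)$ and produce a matching failure in $\lts_1$. By Definition~\ref{def:failures} there is a stable state $s \in \states_2$ with $\init_2 \weaktransition{\weaktrace}_2 s$ and $X \in \refusals(s)$. Because $\weaktrace \in \weaktraces(\lts_2) \subseteq \weaktraces(\lts_1)$ (the inclusion just established), Lemma~\ref{lemma:not_weaktrace_sfr} guarantees that in $\normalizesfr(\lts_1)$ the trace $\weaktrace$ leads to some state $U \neq \emptyset$, and by determinism of the normal form (Lemma~\ref{lemma:deterministic}) this $U$ is unique. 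Proposition~\ref{lemma:product_traces} then shows that $(U, s)$ is reachable in $\normalizesfr(\lts_1) \product \lts_2$.

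Since no SF-witness is reachable, the state $(U, s)$ fails both SF-witness conditions: $U \neq \emptyset$, and since $\stable(s)$ holds, we must have $\refusals(s) \subseteq \refusals(\tostates{U}_{\lts_1})$. In particular $X \in \refusals(\tostates{U}_{\lts_1})$, so there exists a stable $t \in \tostates{U}_{\lts_1}$ with $X \in \refusals(t)$. Finally, Lemma~\ref{lemma:subset_sfr} provides $\init_1 \weaktransition{\weaktrace}_1 t$, which together with stability of $t$ and $X \in \refusals(t)$ gives $(\weaktrace, X) \in \failures(\lts_1)$, completing the inclusion.

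The main obstacle, conceptually, is matching the refusal $X$ to a \emph{single} stable state $t$ reachable in $\lts_1$ via $\weaktrace$, rather than merely to the collective refusals of $\tostates{U}_{\lts_1}$. This is exactly why Definition~\ref{def:refusals} for a \emph{set} of states is given disjunctively (existentially over stable members with the refusal), and it is precisely this disjunctive formulation that aligns the witness condition with the definition of $\failures(\lts_1)$; once the SF-witness condition is negated, extracting such a $t$ is immediate. All remaining steps are bookkeeping that tie the product-level observation back to $\lts_1$ via Lemma~\ref{lemma:subset_sfr}.
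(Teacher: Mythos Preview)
Your proof is correct and follows essentially the same approach as the paper's, invoking the same auxiliary results (Theorem~\ref{theorem:traces}, Proposition~\ref{lemma:product_traces}, Lemmas~\ref{lemma:deterministic}, \ref{lemma:subset_sfr}, \ref{lemma:not_weaktrace_sfr}) and the same key observation that the reachable pair $(U,s)$ cannot be an SF-witness. The only difference is cosmetic: the paper argues by contradiction (assuming a failure of $\lts_2$ not in $\lts_1$ and constructing a reachable SF-witness), whereas you give the direct contrapositive, which is arguably cleaner.
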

\begin{proof}
  Suppose that no SF-witness is reachable in $\normalizesfr(\lts_1) \product \lts_2$.
  Towards a contradiction, assume that $\lts_1 \not\refinedbysfr \lts_2$.
  By definition of the stable failures refinement this means that 
  $\failures(\lts_2) \not\subseteq \failures(\lts_1)$ or $\weaktraces(\lts_2) 
  \not\subseteq \weaktraces(\lts_1)$.
	If $\weaktraces(\lts_2) \not\subseteq \weaktraces(\lts_1)$ then $\lts_1 \not\refinedbytrace \lts_2$ and so, by Theorem~\ref{theorem:traces}, there must be a reachable TR-witness $(\emptyset, s)$, and, therefore, also a reachable SF-witness. Contradiction.
  
  Therefore $\failures(\lts_2) \not\subseteq \failures(\lts_1)$ and $\weaktraces(\lts_2) \subseteq \weaktraces(\lts_1)$.
  Pick a failure $(\weaktrace, X) \in \failures(\lts_2)$ such that 
  $(\weaktrace, X) \notin \failures(\lts_1)$.
  Since $(\weaktrace, X) \in \failures(\lts_2)$, there is a stable state $s \in \states_2$ such that $\init_2 \weaktransition{\weaktrace}_2 s$ and $X \in \refusals(s)$.
	Since $(\weaktrace,X) \in \failures(\lts_2)$, also $\weaktrace \in \weaktraces(\lts_1)$, and therefore $\weaktrace \in \weaktraces(\lts_1)$.
	By Lemmas~\ref{lemma:deterministic} and \ref{lemma:subset_existence_sfr} 
  weak trace $\weaktrace$ leads to a unique state $U$ in 
  $\normalizesfr(\lts_1)$ such that for all states $t \in \states_1$ with 
  $\init_1 \weaktransition{\weaktrace}_1 t$ it holds that $t \in 
  \tostates{U}_{\lts_1}$.  
  For each stable $t$ it holds that $X \notin \refusals(t)$, because $(\weaktrace, X) \notin \failures(\lts_1)$.
  Therefore, by Lemma~\ref{lemma:subset_sfr} it follows that $X \notin \refusals(\tostates{U}_{\lts_1})$.
  Hence, $\refusals(s) \not\subseteq \refusals(\tostates{U}_{\lts_1})$.  
  By Proposition~\ref{lemma:product_traces} the pair $(U, s)$ is reachable and 
  it is an SF-witness by definition.
  Contradiction.
\end{proof}

\begin{thm}\label{theorem:stable_failures}
  Let $\lts_1 = (\states_1, \init_1, \transitions_1)$ and $\lts_2 = (\states_2, \init_2, \transitions_2)$ be two LTSs.
  Then $\lts_1 \refinedbysfr \lts_2$ holds if and only if no SF-witness is reachable in $\normalizesfr(\lts_1) \product \lts_2$.
\end{thm}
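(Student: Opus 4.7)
The plan is to derive this theorem immediately from the two preceding lemmas, in direct analogy to how Theorem~\ref{theorem:traces} is obtained from Lemmas~\ref{lemma:traces_if} and~\ref{lemma:traces_only_if}. Concretely, Lemma~\ref{lemma:stable_failures_if} establishes the ``only if'' direction: assuming $\lts_1 \refinedbysfr \lts_2$, no SF-witness is reachable in $\normalizesfr(\lts_1) \product \lts_2$. Lemma~\ref{lemma:stable_failures_only_if} establishes the ``if'' direction: assuming no SF-witness is reachable in the product, the refinement $\lts_1 \refinedbysfr \lts_2$ holds.

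Since both directions are already stated and (in the excerpt) fully proved as separate lemmas, the proof of the theorem is just a one-line combination of these two lemmas. No further work is required, so there is really no main obstacle at this stage --- the hard work was done in proving the two lemmas, where the key insights were (i) using Theorem~\ref{theorem:traces} to handle the trace-inclusion component of stable failures refinement, thereby ruling out $U = \emptyset$ and reducing the reasoning to the refusals clause of the SF-witness definition, and (ii) using Lemmas~\ref{lemma:deterministic} and~\ref{lemma:subset_existence_sfr} to argue that any $\weaktrace$-trace in $\lts_1$ ends in a state contained in the unique normal-form state $U$ reachable by $\weaktrace$, so that comparing $\refusals(s)$ with $\refusals(\tostates{U}_{\lts_1})$ faithfully captures failure inclusion.

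Thus my proof will simply read: the result follows directly from Lemmas~\ref{lemma:stable_failures_if} and~\ref{lemma:stable_failures_only_if}.
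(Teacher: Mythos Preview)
Your proposal is correct and matches the paper's own proof exactly: the paper also derives the theorem in one line from Lemmas~\ref{lemma:stable_failures_if} and~\ref{lemma:stable_failures_only_if}. Your additional commentary on the key insights behind those lemmas is accurate but goes beyond what is needed for the theorem itself.
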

\begin{proof}
  This follows directly from Lemmas~\ref{lemma:stable_failures_if} and~\ref{lemma:stable_failures_only_if}.
\end{proof}

\noindent
One may be inclined to believe that the normal form LTS can also be used to reduce the failures-divergences refinement decision problem to a reachability problem.
This is, however, not the case as the following example illustrates.
\begin{exa}\label{example:normalizefdr}
  Reconsider the LTSs $t_0$ and $u_0$ of Example~\ref{example:refinement}.
  Note that $t_0$ is a correct failures-divergences refinement of $u_0$, \ie, $u_0 \refinedbyfdr t_0$.
  The divergences $\request\,\weaktrace$, for $\weaktrace \in \actions^*$, of $u_0$ result in specification $u_0$ permitting all these sequences.
  \begin{center}
  \begin{tikzpicture}[->,node distance=45pt]
    \tikzstyle{vertex}=[draw=none,minimum size=17pt,inner sep=2pt]
    \scriptsize
  
    \node[vertex] (C1) {$\{u_0\}$};
    \node[vertex] (C1init) [left=10pt of C1] {};
    \node[vertex] (C1right) [right of=C1] {$\{u_1\}$};
    \node[vertex] (C1rightright) [right=45pt of C1right] {$\{u_0,u_2\}$};
    \node[vertex] (Cempty) [below of=C1right] {$\emptyset$};

    \draw (C1init) edge (C1);
    \draw (C1) edge node[above]{$\request$} (C1right);
    \draw (C1right) edge[bend left] node[above]{$20$} (C1rightright);
    \draw (C1rightright) edge[bend left] node[below]{$\request$} (C1right);
    \draw (C1) edge[bend right] node[below left] {$10,20$} (Cempty);
    \draw (C1right) edge node[left] {$10,\request$} (Cempty);
    \draw (C1rightright) edge[bend left] node[right] {$~10,20$} (Cempty);
    \draw (Cempty) edge[loop below] node[below] {$10,20,\request$} (Cempty);
    
    \node[vertex] (D1) [right=6cm of C1] {$(\{u_0\}, t_0)$};
    \node[vertex] (D1init) [left=10pt of D1] {};
    \node[vertex] (D2) [right= of D1] {$(\{u_1\}, t_1)$};
    \node[vertex] (D3) [below= of D2] {$(\{u_0, u_2\}, t_2)$};
    
    \draw (D1init) edge (D1);
    \draw (D1) edge node[above]{$\request$} (D2);
    \draw (D2) edge node[right]{$20$} (D3);
  \end{tikzpicture}
  \end{center}
  Consider the normal form $\normalizesfr(u_0)$ shown above on the left.
  The pair $(\{u_0,u_2\},t_2)$ in the product $\normalizesfr(u_0) \product t_0$ shown on the right, is thus problematic for the analysis of failures-divergence refinement, as the reachability of this pair might incorrectly indicate a violation of $u_0 \refinedbyfdr t_0$.
  In turn, this suggests that in the reachability analysis of the product, states beyond those reached via a trace constituting a divergence should not be considered candidate witnesses.
  \qed
\end{exa}
\noindent Our solution is to modify the construction of the normal form LTS for failures-divergences refinement as follows.

\begin{defi}[Failures-divergences normal form]\label{def:normalise}
  Let $\lts = (\states, \init, \transitions)$ be an LTS.  
  The failures-divergences normal form of $\lts$ is the LTS $\normalize(\lts) = (\states', \init', \transitions')$, where $\states' = \powerset(\states)$, $\init' = \{s \in \states \mid \init \weaktransition{\emptytrace} s \}$ and $\transitions'$ is defined as $U \transition{a}' V$ if and only if $\neg (\exists s \in U : \diverges{s})$ and $V = \{t \in \states \mid \exists s \in U : s \weaktransition{a} t\}$ for all sets of states $U, V \subseteq \states$ and actions $a \in \actions$.
\end{defi}

\noindent Notice that $\normalize(\lts)$ yields a subgraph of $\normalizesfr(\lts)$.
As a result, several properties that we established for $\normalizesfr$ carry over to $\normalize$.
For instance, $\normalize$ yields LTSs that are deterministic and concrete.
However, contrary to LTSs obtained via $\normalizesfr$, LTSs obtained via $\normalize$ are not guaranteed to be universal.
In particular, a weak trace $\weaktrace \in \weaktraces(\lts)$ is not guaranteed to be preserved in $\normalize(\lts)$ if it is a divergence. 
Consequently, Lemma~\ref{lemma:not_weaktrace_sfr}, which is essential for Theorems~\ref{theorem:traces} and~\ref{theorem:stable_failures}, no longer holds in its full generality.
We show, however, that for failures-divergence refinement a slightly different relation between an LTS and its normal form is sufficient for establishing a theorem that is similar in spirit to the aforementioned theorems.

\begin{lem}\label{lemma:subset_fdr}
  Let $\lts = (\states, \init, \transitions)$ be an LTS and let $\normalize(\lts) = (\states', \init', \transitions')$.
  For all sequences $\weaktrace \in \actions^*$ and states $U \in \states'$ such that $\init' \tracetransition{\weaktrace}' U$ it holds that $\init \weaktransition{\weaktrace} s$ for all $s \in \tostates{U}_\lts$.
\end{lem}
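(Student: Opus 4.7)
The plan is to proceed by induction on the length of the sequence $\weaktrace$, following essentially the same structure as the proof of Lemma~\ref{lemma:subset_sfr}. The key observation is that, compared to $\normalizesfr$, the normal form $\normalize$ only adds a side condition on transitions (namely, that the source state contains no diverging underlying state); it does not introduce new transitions. Hence every transition available in $\normalize(\lts)$ is a transition also available in $\normalizesfr(\lts)$, and the backward tracing of an existing path in $\normalize(\lts)$ into underlying weak transitions in $\lts$ proceeds in exactly the same manner.

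For the base case $\weaktrace = \emptytrace$, the assumption $\init' \tracetransition{\emptytrace}' U$ forces $U = \init'$. By the definition of $\init'$ in Definition~\ref{def:normalise}, every $s \in \tostates{U}_\lts$ satisfies $\init \weaktransition{\emptytrace} s$, which is exactly what is required.

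For the inductive step, suppose the statement holds for all sequences of length $n$ and consider $\weaktrace = \weaktrace'\,a$ of length $n+1$. Decompose $\init' \tracetransition{\weaktrace'\,a}' U$ as $\init' \tracetransition{\weaktrace'}' V$ and $V \transition{a}' U$ for some $V \in \states'$. By the transition rule of $\normalize$ we have $U = \{t \in \states \mid \exists s \in V : s \weaktransition{a} t\}$ (the divergence-freedom side condition on $V$ is not needed for this direction). Pick an arbitrary $s \in \tostates{U}_\lts$; then there exists $s' \in \tostates{V}_\lts$ with $s' \weaktransition{a} s$. By the induction hypothesis applied to $V$, we obtain $\init \weaktransition{\weaktrace'} s'$, and concatenating the two weak transitions yields $\init \weaktransition{\weaktrace'\,a} s$, as required.

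There is really no substantive obstacle here: the non-trivial feature of $\normalize$ (the divergence guard on outgoing transitions) only restricts which transitions exist and therefore cannot interfere with the backward trace from an existing path. The proof is essentially a straightforward adaptation of the argument used for Lemma~\ref{lemma:subset_sfr}, which explains why it is natural to defer it to the appendix as the excerpt suggests via the \texttt{restatable} style of its siblings.
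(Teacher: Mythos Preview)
Your proposal is correct and follows essentially the same approach as the paper, which simply states that the proof proceeds ``along the same lines as the proof of Lemma~\ref{lemma:subset_sfr}.'' Your induction on the length of $\weaktrace$, with the base case unfolding the definition of $\init'$ and the inductive step decomposing the path and using the transition rule of $\normalize$, mirrors the appendix proof of Lemma~\ref{lemma:subset_sfr} exactly; your observation that the divergence guard only restricts transitions and is irrelevant in this direction is the reason the argument carries over unchanged.
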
  

\begin{proof}
  Along the same lines as the proof of Lemma~\ref{lemma:subset_sfr}.
\end{proof}

\noindent We mentioned that divergences are not necessarily preserved (as traces) by the normalisation.
In fact, we can be more specific: only \emph{minimal} divergences are preserved in the normal form LTS.
 The \emph{minimal} divergences of a state $s \in \states$, denoted by $\divergencesbot(s)$, is the largest subset of $\divergences(s)$ containing all $\weaktrace \in \divergences(s)$ for which there is no strict prefix of $\weaktrace$ in $\divergences(s)$.
For an LTS $\lts = (\states, \init, \transitions)$ we define $\divergencesbot(\lts)$ to be $\divergencesbot(\init)$.

\begin{restatable}{lem}{lemmasubsetexistencefdr}\label{lemma:subset_existence_fdr}
  Let $\lts = (\states, \init, \transitions)$ be an LTS and let $\normalize(\lts) = (\states', \init', \transitions')$.
  For all sequences $\weaktrace \in \actions^*$ such that either $\weaktrace \notin \divergences(\lts)$ or $\weaktrace \in \divergencesbot(\lts)$ and for all states $s \in \states$ such that $\init \weaktransition{\weaktrace} s$ there is a state $U \in \states'$ such that $s \in \tostates{U}_\lts$ and $\init' \tracetransition{\weaktrace}' U$.
\end{restatable}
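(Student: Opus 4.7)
The plan is to proceed by induction on the length of $\weaktrace$. The base case $\weaktrace = \emptytrace$ is immediate: take $U = \init'$; by the definition of $\normalize(\lts)$ every state $s$ with $\init \weaktransition{\emptytrace} s$ lies in $\init'$, and $\init' \tracetransition{\emptytrace}' \init'$ holds trivially.

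For the inductive step, split $\weaktrace = \weaktrace'\,a$ with $a \in \actions$, and pick $t \in \states$ with $\init \weaktransition{\weaktrace'} t$ and $t \weaktransition{a} s$. The key preparatory observation is that the hypothesis on $\weaktrace$ carries over to $\weaktrace'$: by the definition of $\divergences$, this set is upward closed under appending actions, so $\weaktrace' \in \divergences(\lts)$ would imply $\weaktrace = \weaktrace'\,a \in \divergences(\lts)$. In the first case this contradicts the assumption $\weaktrace \notin \divergences(\lts)$; in the second case, where $\weaktrace \in \divergencesbot(\lts)$, minimality forbids any strict prefix of $\weaktrace$ from lying in $\divergences(\lts)$, and $\weaktrace'$ is such a strict prefix. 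Thus $\weaktrace' \notin \divergences(\lts)$, and the induction hypothesis furnishes $U' \in \states'$ with $t \in \tostates{U'}_\lts$ and $\init' \tracetransition{\weaktrace'}' U'$.

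The main obstacle is that, unlike for $\normalizesfr$, the transition relation of $\normalize(\lts)$ blocks the $a$-step from $U'$ whenever some state in $\tostates{U'}_\lts$ diverges; I must rule this out to continue. Suppose, for contradiction, that $r \in \tostates{U'}_\lts$ with $\diverges{r}$. Then Lemma~\ref{lemma:subset_fdr} gives $\init \weaktransition{\weaktrace'} r$, whence $\weaktrace' \in \divergences(\lts)$ (take empty suffix), contradicting the previous paragraph. Consequently the transition $U' \transition{a}' U$ is enabled in $\normalize(\lts)$ for $U = \{t' \in \states \mid \exists r \in U' : r \weaktransition{a} t'\}$. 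Since $t \in \tostates{U'}_\lts$ and $t \weaktransition{a} s$, it follows that $s \in \tostates{U}_\lts$, and composing the transitions yields $\init' \tracetransition{\weaktrace}' U$, closing the induction.
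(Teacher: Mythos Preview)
Your proof is correct and follows essentially the same approach as the paper's: induction on the length of $\weaktrace$, with the key observation that the hypothesis forces the strict prefix $\weaktrace'$ to satisfy $\weaktrace' \notin \divergences(\lts)$, after which Lemma~\ref{lemma:subset_fdr} is invoked to show that no state in the intermediate normal-form state diverges, enabling the $a$-transition in $\normalize(\lts)$. The only differences are cosmetic (variable naming and a slightly more explicit justification of why $\weaktrace' \notin \divergences(\lts)$).
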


\begin{restatable}{lem}{lemmanotdivergencefdr}\label{lemma:not_divergence_fdr}
  Let $\lts = (\states, \init, \transitions)$ be an LTS and let $\normalize(\lts) = (\states', \init', \transitions')$.
  For all sequences $\weaktrace \in \actions^*$ and states $U \in \states'$ it holds that if $\init' \tracetransition{\weaktrace}' U$ and not $\diverges{\tostates{U}_\lts}$ then $\weaktrace \notin \divergences(\lts)$.
\end{restatable}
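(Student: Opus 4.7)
The plan is a proof by contradiction, exploiting the defining feature of the failures-divergences normalisation (Definition~\ref{def:normalise}): transitions are blocked out of any set that contains a divergent state. So if $\weaktrace$ were a divergence of $\lts$, the normal-form path labelled $\weaktrace$ would have to get ``stuck'' at the point where a diverging state first appears in the underlying set, which will contradict either the hypothesis $\init' \tracetransition{\weaktrace}' U$ or the hypothesis $\neg\diverges{\tostates{U}_\lts}$.

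More concretely, assume towards a contradiction that $\init' \tracetransition{\weaktrace}' U$, $\neg\diverges{\tostates{U}_\lts}$, yet $\weaktrace \in \divergences(\lts)$. Among all prefixes of $\weaktrace$ that are divergences, pick a minimal one $\weaktrace'$; by definition $\weaktrace' \in \divergencesbot(\lts)$, and there exists a diverging $t \in \states$ with $\init \weaktransition{\weaktrace'} t$. Invoking Lemma~\ref{lemma:subset_existence_fdr} on $\weaktrace'$ gives a state $U' \in \states'$ with $t \in \tostates{U'}_\lts$ and $\init' \tracetransition{\weaktrace'}' U'$. I will then split on whether $\weaktrace' = \weaktrace$ or $\weaktrace'$ is a strict prefix of $\weaktrace$.

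In the first case, $\normalize(\lts)$ is deterministic (as remarked in the text after Definition~\ref{def:normalise}, using Lemma~\ref{lemma:deterministic}), so $U = U'$, and then $t \in \tostates{U}_\lts$ together with $\diverges{t}$ contradicts $\neg\diverges{\tostates{U}_\lts}$. In the second case, write $\weaktrace = \weaktrace'\,a\,\weaktrace''$; since $\init' \tracetransition{\weaktrace}' U$, determinism forces the path to pass through $U'$ after reading $\weaktrace'$, so there must be a transition $U' \transition{a}' V$ for some $V \in \states'$. But Definition~\ref{def:normalise} permits a transition out of $U'$ only if no element of $U'$ is diverging; since $t \in \tostates{U'}_\lts$ is diverging, no such transition exists, a contradiction.

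The only delicate step is in the second case, where I rely on the determinism of the normal-form LTS to identify the intermediate state reached after $\weaktrace'$ with the state $U'$ obtained from Lemma~\ref{lemma:subset_existence_fdr}; this identification is what allows me to apply the side condition of Definition~\ref{def:normalise} to contradict the existence of an $a$-transition. The minimality clause selecting $\weaktrace' \in \divergencesbot(\lts)$ is essential here, because Lemma~\ref{lemma:subset_existence_fdr} is only applicable under that hypothesis (or when the prefix is not a divergence at all, which is not our situation).
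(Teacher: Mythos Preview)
Your proposal is correct and takes essentially the same approach as the paper: a proof by contradiction that selects a minimal divergent prefix, applies Lemma~\ref{lemma:subset_existence_fdr} to locate the corresponding normal-form state, and then uses determinism together with the side condition of Definition~\ref{def:normalise} to derive the contradiction. The only cosmetic difference is that the paper splits on whether $\weaktrace \in \divergencesbot(\lts)$ or not, whereas you first pick the minimal divergent prefix $\weaktrace'$ and then split on whether $\weaktrace' = \weaktrace$; these case analyses coincide.
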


\begin{restatable}{lem}{lemmanotweaktracefdr}\label{lemma:not_weaktrace_fdr}
  Let $\lts = (\states, \init, \transitions)$ be an LTS and let $\normalize(\lts) = (\states', \init', \transitions')$.
  For all sequences $\weaktrace \in \actions^*$ it holds that $\weaktrace \notin (\divergences(\lts) \cup \weaktraces(\lts))$ if and only if $\init' \tracetransition{\weaktrace}' \emptyset$.
\end{restatable}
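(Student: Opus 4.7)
The plan is to establish the biconditional by proving each direction separately, exploiting that $\normalize(\lts)$ is deterministic (as a subgraph of the deterministic $\normalizesfr(\lts)$) together with the supporting Lemmas~\ref{lemma:subset_fdr}, \ref{lemma:subset_existence_fdr} and~\ref{lemma:not_divergence_fdr}. The key observation driving both directions is that in $\normalize(\lts)$ a transition out of a state $U$ exists only when no state of $U$ diverges, and then the unique $a$-successor of $U$ collects all $a$-weak successors of states in $U$; in particular $\emptyset \transition{a}' \emptyset$ is always a valid transition.

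For the $\Leftarrow$ direction, assume $\init' \tracetransition{\weaktrace}' \emptyset$. Since $\emptyset$ contains no diverging state, Lemma~\ref{lemma:not_divergence_fdr} immediately yields $\weaktrace \notin \divergences(\lts)$. Suppose towards contradiction that $\weaktrace \in \weaktraces(\lts)$, so $\init \weaktransition{\weaktrace} s$ for some $s$. Because $\weaktrace \notin \divergences(\lts)$, Lemma~\ref{lemma:subset_existence_fdr} supplies a $U \in \states'$ with $s \in \tostates{U}_\lts$ and $\init' \tracetransition{\weaktrace}' U$. Determinism of $\normalize(\lts)$ (applied via Lemma~\ref{lemma:deterministic}) forces $U = \emptyset$, contradicting $s \in \tostates{U}_\lts$.

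For the $\Rightarrow$ direction I would proceed by induction on $\length{\weaktrace}$. The base case $\weaktrace = \emptytrace$ is vacuous, since $\emptytrace \in \weaktraces(\lts)$ always holds and so the hypothesis is false. For the inductive step, write $\weaktrace = \weaktrace' a$ and assume $\weaktrace \notin \weaktraces(\lts) \cup \divergences(\lts)$. Split on whether $\weaktrace' \in \weaktraces(\lts) \cup \divergences(\lts)$. If not, the induction hypothesis gives $\init' \tracetransition{\weaktrace'}' \emptyset$, and appending the transition $\emptyset \transition{a}' \emptyset$ yields the claim. Otherwise, observe that $\weaktrace' \in \divergences(\lts)$ is impossible: divergences are closed under extension, so $\weaktrace' \in \divergences(\lts)$ would imply $\weaktrace = \weaktrace' a \in \divergences(\lts)$, contradicting our hypothesis. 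Hence $\weaktrace' \in \weaktraces(\lts) \setminus \divergences(\lts)$; pick any $s$ with $\init \weaktransition{\weaktrace'} s$ and use Lemma~\ref{lemma:subset_existence_fdr} combined with determinism to obtain the unique $U$ with $\init' \tracetransition{\weaktrace'}' U$ and $s \in \tostates{U}_\lts$. By Lemma~\ref{lemma:subset_fdr}, $\tostates{U}_\lts = \{t \mid \init \weaktransition{\weaktrace'} t\}$. No state of $U$ diverges (else $\weaktrace' \in \divergences(\lts)$), and no state of $U$ has a weak $a$-successor (else $\weaktrace \in \weaktraces(\lts)$), so by Definition~\ref{def:normalise} the transition $U \transition{a}' \emptyset$ exists in $\normalize(\lts)$, completing the induction.

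The main subtlety is the inductive step, specifically ruling out the case $\weaktrace' \in \divergences(\lts)$ via the closure of divergences under extension, and justifying that the unique $U$ obtained from Lemma~\ref{lemma:subset_existence_fdr} truly contains every state reachable from $\init$ by $\weaktrace'$. The remaining work amounts to carefully checking both enabling conditions in Definition~\ref{def:normalise} for the transition $U \transition{a}' \emptyset$, which follows directly from the case assumptions.
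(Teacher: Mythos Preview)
Your proof is correct and follows essentially the same approach as the paper: the $\Leftarrow$ direction via Lemma~\ref{lemma:not_divergence_fdr} and a contradiction using Lemma~\ref{lemma:subset_existence_fdr} plus determinism, and the $\Rightarrow$ direction by induction on $\length{\weaktrace}$ with the same case split on whether the prefix is a weak trace. The only cosmetic difference is that the paper first observes $\weaktrace' \notin \divergences(\lts)$ and then branches on $\weaktrace' \in \weaktraces(\lts)$, whereas you branch on $\weaktrace' \in \weaktraces(\lts) \cup \divergences(\lts)$ and then rule out the divergence case; also note that for the final step you only need the inclusion $\tostates{U}_\lts \subseteq \{t \mid \init \weaktransition{\weaktrace'} t\}$ from Lemma~\ref{lemma:subset_fdr}, not the full equality.
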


\noindent For failures-divergences refinement the state space of $\normalize(\lts_1) \product \lts_2$ is explored for a witness, where reachability of such a witness also indicates non-refinement.
This witness is defined as follows:

\begin{defi}[Failures-divergences witness]
  Let $\lts_1$ and $\lts_2$ be two LTSs.
  A state $(U, s)$ of the product $\normalize(\lts_1) \product \lts_2$ is called an \emph{FD-witness} if and only if $\diverges{\tostates{U}_{\lts_1}}$ does not hold and at least one of the following conditions hold:
  \begin{itemize}
    \item $U = \emptyset$.
    
    \item $\stable(s)$ and $\refusals(s) \not\subseteq \refusals(\tostates{U}_{\lts_1})$.
    
    \item $\diverges{s}$.
  \end{itemize}  
\end{defi}

\noindent We next formalise the correspondence between failures-divergences refinement and the reachability of an FDR-witness.
In the proof of the following theorem we cannot easily use Theorem~\ref{theorem:traces} because $\refinedbyfdr$ is incomparable with both $\refinedbytrace$ and $\refinedbysfr$.

\begin{lem}\label{lemma:failures_if}
  Let $\lts_1 = (\states_1, \init_1, \transitions_1)$ and $\lts_2 = (\states_2, \init_2, \transitions_2)$ be two LTSs.
  If $\lts_1 \refinedbyfdr \lts_2$ holds then no FD-witness is reachable in $\normalize(\lts_1) \product \lts_2$.
\end{lem}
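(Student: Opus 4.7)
The plan is to argue by contradiction: assume $\lts_1 \refinedbyfdr \lts_2$ but some FD-witness $(U, s)$ is reachable in $\normalize(\lts_1) \product \lts_2$, and derive an inconsistency. Fix a witnessing sequence $\weaktrace \in \actions^*$ with $\init \weaktransition{\weaktrace} (U, s)$ in the product. By Proposition~\ref{lemma:product_traces} this splits into $\init' \tracetransition{\weaktrace}' U$ in $\normalize(\lts_1)$ and $\init_2 \weaktransition{\weaktrace}_2 s$ in $\lts_2$. The crucial global fact I would extract before branching on cases is that the FD-witness definition forces $\neg \diverges{\tostates{U}_{\lts_1}}$, which together with Lemma~\ref{lemma:not_divergence_fdr} yields $\weaktrace \notin \divergences(\lts_1)$; this single observation will be used to block every case.

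Then I would handle the three disjuncts of FD-witnesshood in turn. If $\diverges{s}$, then $\weaktrace \in \divergences(\lts_2) \subseteq \divergences(\lts_1)$ by hypothesis, immediately contradicting $\weaktrace \notin \divergences(\lts_1)$. If $\stable(s)$ with $\refusals(s) \not\subseteq \refusals(\tostates{U}_{\lts_1})$, pick some $X \in \refusals(s) \setminus \refusals(\tostates{U}_{\lts_1})$; then $(\weaktrace, X) \in \failures(\lts_2) \subseteq \failuresbot(\lts_2) \subseteq \failuresbot(\lts_1)$ by the refinement hypothesis, and since $\weaktrace \notin \divergences(\lts_1)$ this membership must come from $\failures(\lts_1)$, producing a stable $t \in \states_1$ with $\init_1 \weaktransition{\weaktrace}_1 t$ and $X \in \refusals(t)$. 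Since $\weaktrace \notin \divergences(\lts_1)$ satisfies the precondition of Lemma~\ref{lemma:subset_existence_fdr}, that lemma gives some $U' \in \states'$ with $t \in \tostates{U'}_{\lts_1}$ and $\init' \tracetransition{\weaktrace}' U'$; determinism of $\normalize(\lts_1)$ via Lemma~\ref{lemma:deterministic} forces $U' = U$, so $X \in \refusals(\tostates{U}_{\lts_1})$, contradicting the choice of $X$.

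The remaining case $U = \emptyset$ is what I anticipate to be the subtlest step. Lemma~\ref{lemma:not_weaktrace_fdr} gives $\weaktrace \notin \weaktraces(\lts_1) \cup \divergences(\lts_1)$, while $\init_2 \weaktransition{\weaktrace}_2 s$ shows $\weaktrace \in \weaktraces(\lts_2)$. I would sub-case on whether $\diverges{s}$: if so, the first case applies verbatim. Otherwise I take $\tau$-steps from $s$ until a stable $s'$ is reached (using that $s$ does not diverge), so that $(\weaktrace, \emptyset) \in \failures(\lts_2) \subseteq \failuresbot(\lts_1)$; since $\weaktrace \notin \divergences(\lts_1)$ this membership must again live in $\failures(\lts_1)$, which forces $\weaktrace \in \weaktraces(\lts_1)$ and contradicts Lemma~\ref{lemma:not_weaktrace_fdr}. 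The main obstacle is the stable-refusal case, which is the only place where the failures and divergences sides of $\refinedbyfdr$ genuinely interact: one needs Lemma~\ref{lemma:not_divergence_fdr} to rule out the $\failuresbot$-from-divergence branch and Lemma~\ref{lemma:subset_existence_fdr} (paired with determinism) to transport the contradicting failure back into $U$. The second delicate point, of independent importance, is that the $U = \emptyset$ argument implicitly uses the standard assumption that a non-divergent state admits a stable $\tau$-descendant.
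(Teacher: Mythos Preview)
Your proposal is correct and follows essentially the same approach as the paper's own proof: both argue by contradiction, extract $\weaktrace \notin \divergences(\lts_1)$ from $\neg\diverges{\tostates{U}_{\lts_1}}$ via Lemma~\ref{lemma:not_divergence_fdr}, and then handle the three FD-witness disjuncts using Lemmas~\ref{lemma:not_weaktrace_fdr}, \ref{lemma:subset_existence_fdr}, and~\ref{lemma:deterministic} in the same way. The only cosmetic differences are the order of the cases and that in the stable-refusal case you run the argument forwards (from $(\weaktrace,X)\in\failuresbot(\lts_1)$ to a stable $t\in U$) while the paper runs it backwards (showing $(\weaktrace,X)\notin\failuresbot(\lts_1)$); both are equally valid, and your explicit flagging of the implicit ``non-divergent states have stable $\tau$-descendants'' assumption is a point the paper uses silently.
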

\begin{proof}
  Assume that $\lts_1 \refinedbyfdr \lts_2$. We then have that $\failuresbot(\lts_2) \subseteq \failuresbot(\lts_1)$ and $\divergences(\lts_2) \subseteq \divergences(\lts_1)$.
  Towards a contradiction, assume there is an FD-witness $(U, s)$ in $\reachable(\normalize(\lts_1) \product \lts_2)$.
	Let $\init$ be the initial state of $\normalize(\lts_1) \product \lts_2$, and let $\weaktrace \in \weaktraces(\init)$ be such that $\init \weaktransition{\weaktrace} (U, s)$.
  From the assumption that $(U, s)$ is an FD-witness it follows that not $\diverges{\tostates{U}_{\lts_1}}$ and from Lemma~\ref{lemma:not_divergence_fdr} it follows that $\weaktrace \notin \divergences(\lts_1)$.
  By Proposition~\ref{lemma:product_traces} it holds that $\init_2 \weaktransition{\weaktrace}_2 s$ and $U$ is reachable by following $\weaktrace$ in $\normalize(\lts_1)$.
	Moreover, for $(U,s)$ to be an FD-witness, at least one of the following must also hold: $U = \emptyset$, $\stable(s)$ and $\refusals(s) \not\subseteq \refusals(\tostates{U}_{\lts_1})$, or $\diverges{s}$.
  We therefore distinguish these three cases:

  \begin{itemize}
  \item Case $U = \emptyset$.
    We can assume that $\diverges{s}$ does not hold, as this is handled by another case.
    Then it follows that there is a state $t \in \states_2$ such that $\init_2 \weaktransition{\weaktrace}_2 s \weaktransition{\emptytrace}_2 t$ and $\stable(t)$. Let $t$ be such.
    Consequently, $(\weaktrace,X) \in \failuresbot(\lts_2)$ for some $X \in \refusals(t)$.
    By Lemma~\ref{lemma:not_weaktrace_fdr} it holds that the weak trace $\weaktrace$ reaching the empty set in $\normalize(\lts_1)$ is not a weak trace of $\lts_1$.
    Together with $\weaktrace \notin \divergences(\lts_1)$ it follows, for all 
    possible refusal sets $Y \subseteq \actions$, that 
    $(\weaktrace, Y) \notin \failuresbot(\lts_1)$, and so, in particular, $(\weaktrace, X) \in \failuresbot(\lts_1)$ which contradicts our assumption that $\failuresbot(\lts_2) \subseteq 
    \failuresbot(\lts_1)$.
    
  \item Case $\stable(s)$ and $\refusals(s) \not\subseteq \refusals(\tostates{U}_{\lts_1})$.
    From $\stable(s)$ and the reachability of state $s$ it follows that $\failuresbot(\lts_2)$ is not empty.
    Pick a failure $(\weaktrace, X) \in \failuresbot(\lts_2)$ where $s$ can stably refuse $X \in \refusals(s)$, but $X \notin \refusals(\tostates{U}_{\lts_1})$.
    By Lemmas~\ref{lemma:deterministic} and \ref{lemma:subset_existence_fdr} it follows for all states $t \in \states_1$ where $\init_1 \weaktransition{\weaktrace}_1 t$ that $t \in \tostates{U}_{\lts_1}$.
    For each stable $t$ it holds that $X \notin \refusals(t)$, because $X \notin \refusals(\tostates{U}_{\lts_1})$.
    Due to the previous case, we may assume that $U \neq \emptyset$.
    Then from $\weaktrace \notin \divergences(\lts_1)$ and $U \neq \emptyset$ it follows that $(\weaktrace, X) \notin \failuresbot(\lts_1)$, which leads to a contradiction with the assumption that $\failuresbot(\lts_2) \subseteq \failuresbot(\lts_1)$.
  
  \item Case $\diverges{s}$.
	  Since $\init_2 \weaktransition{\weaktrace}_2 s$ and $\diverges{s}$, also $\weaktrace \in \divergences(\lts_2)$. 
    However, by $\weaktrace \notin \divergences(\lts_1)$ this contradicts the assumption that $\divergences(\lts_2) \subseteq \divergences(\lts_1)$.\qedhere
  \end{itemize}
\end{proof}

\begin{lem}\label{lemma:failures_only_if}
  Let $\lts_1 = (\states_1, \init_1, \transitions_1)$ and $\lts_2 = (\states_2, \init_2, \transitions_2)$ be two LTSs.
  If no FD-witness is reachable in $\normalize(\lts_1) \product \lts_2$ then $\lts_1 \refinedbyfdr \lts_2$ holds.
\end{lem}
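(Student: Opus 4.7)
The plan is to establish the contrapositive: assume $\lts_1 \not\refinedbyfdr \lts_2$ and produce a reachable FD-witness in $\normalize(\lts_1) \product \lts_2$. By Definition~\ref{def:refinement}, non-refinement means $\failuresbot(\lts_2) \not\subseteq \failuresbot(\lts_1)$ or $\divergences(\lts_2) \not\subseteq \divergences(\lts_1)$. I will split on these two cases, and within each case further split on whether the offending weak trace is already recognised by $\normalize(\lts_1)$ (yielding a non-$\emptyset$ state) or runs into the sink $\emptyset$. In every case the target state in $\lts_2$ will be the second component of the witness, and the corresponding state of $\normalize(\lts_1)$, obtained from Lemma~\ref{lemma:subset_existence_fdr} or Lemma~\ref{lemma:not_weaktrace_fdr}, will be the first; reachability of the pair in the product then follows from Proposition~\ref{lemma:product_traces}.

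For the divergence case, the main subtlety is that Lemma~\ref{lemma:subset_existence_fdr} is only available for sequences that are either not divergences of $\lts_1$ or are \emph{minimal} divergences. To arrange this, I would choose a shortest prefix $\weaktrace''$ of the given $\weaktrace \in \divergences(\lts_2) \setminus \divergences(\lts_1)$ that still lies in $\divergences(\lts_2)$; then $\weaktrace'' \in \divergencesbot(\lts_2)$, and since $\divergences(\lts_1)$ is closed under extension, $\weaktrace'' \notin \divergences(\lts_1)$ as well (otherwise $\weaktrace$ would be a divergence of $\lts_1$). Pick $s \in \states_2$ with $\init_2 \weaktransition{\weaktrace''}_2 s$ and $\diverges{s}$. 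If $\weaktrace'' \notin \weaktraces(\lts_1)$, Lemma~\ref{lemma:not_weaktrace_fdr} gives $\init' \tracetransition{\weaktrace''}' \emptyset$ and $(\emptyset,s)$ is an FD-witness (trivially $\neg\diverges{\tostates{\emptyset}_{\lts_1}}$). Otherwise Lemma~\ref{lemma:subset_existence_fdr} supplies a unique state $U$ with $\init' \tracetransition{\weaktrace''}' U$; since $\weaktrace'' \notin \divergences(\lts_1)$, no state in $\tostates{U}_{\lts_1}$ diverges, so the third clause of the FD-witness definition (\emph{viz.}~$\diverges{s}$) is satisfied.

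For the failures case, take $(\weaktrace,X) \in \failuresbot(\lts_2) \setminus \failuresbot(\lts_1)$. If $\weaktrace \in \divergences(\lts_2)$, then from $(\weaktrace,X) \notin \failuresbot(\lts_1)$ one reads off $\weaktrace \notin \divergences(\lts_1)$, which reduces the situation to the divergence case just handled. Otherwise $(\weaktrace,X) \in \failures(\lts_2)$, so there is a stable state $s$ with $\init_2 \weaktransition{\weaktrace}_2 s$ and $X \in \refusals(s)$, and moreover $\weaktrace \notin \divergences(\lts_1)$ and $(\weaktrace,X) \notin \failures(\lts_1)$. Locating the matching normal-form state $U$ proceeds exactly as in the divergence case (either $U = \emptyset$ via Lemma~\ref{lemma:not_weaktrace_fdr}, or a non-empty $U$ via Lemma~\ref{lemma:subset_existence_fdr} combined with Lemma~\ref{lemma:deterministic}). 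Because $(\weaktrace,X) \notin \failures(\lts_1)$, no stable $t \in \tostates{U}_{\lts_1}$ has $X \in \refusals(t)$, so $X \notin \refusals(\tostates{U}_{\lts_1})$ by Lemma~\ref{lemma:subset_fdr}, which together with $\stable(s)$ yields the second clause of the FD-witness definition; non-divergence of $\tostates{U}_{\lts_1}$ follows as before. The contradiction with the hypothesis completes the proof.

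The main obstacle is the divergence case: the closure of $\divergences$ under extension and the shift from an arbitrary divergence to a minimal one are essential to legally invoke Lemma~\ref{lemma:subset_existence_fdr}, and the analogous move is what allows the failures case with a diverging trace to be absorbed into the divergence case rather than handled separately.
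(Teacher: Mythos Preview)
Your proposal is correct and follows essentially the same contrapositive argument as the paper: split on whether divergences or failures inclusion fails, pass to a suitable prefix reaching a diverging (respectively stable) state in $\lts_2$, locate the matching normal-form state via Lemmas~\ref{lemma:subset_existence_fdr}/\ref{lemma:not_weaktrace_fdr}, and assemble the FD-witness using Proposition~\ref{lemma:product_traces}. If anything you are more careful than the paper, which does not explicitly treat the sub-case $\weaktrace'' \notin \weaktraces(\lts_1)$ in the divergence branch; one small remark is that the minimality of $\weaktrace''$ is needed not to meet the hypothesis of Lemma~\ref{lemma:subset_existence_fdr} (any prefix of $\weaktrace$ already lies outside $\divergences(\lts_1)$) but to guarantee that the diverging state $s$ in $\lts_2$ is reached by $\weaktrace''$ itself rather than a proper prefix.
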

\begin{proof}
  Assume that no FD-witness is reachable in $\normalize(\lts_1) \product \lts_2$.
  Again, we prove this by contradiction.
  Assume that $\lts_1 \not\refinedbyfdr \lts_2$.
  By definition of failures-divergences refinement this means that $\failuresbot(\lts_2) \not\subseteq \failuresbot(\lts_1)$ or $\divergences(\lts_2) \not\subseteq \divergences(\lts_1)$.
  Hence, there are two cases to consider:
  
  \begin{itemize}
  \item Case $\divergences(\lts_2) \not\subseteq \divergences(\lts_1)$.
    Pick a diverging weak trace $\weaktrace \in \divergences(\lts_2)$ such that $\weaktrace \notin \divergences(\lts_1)$.
    In this case there is a prefix of $\weaktrace$, which we call $\trace$, that leads to a diverging state $s \in \states_2$ such that $\init_2 \weaktransition{\trace}_2 s$.
    However, by the assumption that $\weaktrace \notin \divergences(\lts_1)$ we know that all states $t \in \states_1$ reached by following $\trace$ are not diverging.
    By Lemma~\ref{lemma:subset_fdr} we know that all $t \in U$ can be reached by following $\trace$.
    Therefore state pair $(U, s)$ is an FD-witness, because $\diverges{s}$ but not $\diverges{\tostates{U}_{\lts_1}}$.
    Contradiction.

  \item Case $\failuresbot(\lts_2) \not\subseteq \failuresbot(\lts_1)$.
    By the previous case, we may, moreover, assume that $\divergences(\lts_2) \subseteq \divergences(\lts_1)$.
    Pick any failure $(\weaktrace, X) \in \failuresbot(\lts_2)$ such that $(\weaktrace,X) \notin \failuresbot(\lts_1)$.
    Observe that $\weaktrace \notin \divergences(\lts_1)$ (and as such $\weaktrace \notin \divergences(\lts_2))$ as otherwise no such $(\weaktrace, X)$ exists by definition of $\failuresbot(\lts_1)$.
    Since $(\weaktrace, X) \in \failuresbot(\lts_2)$, there is a stable state $s \in \states_2$ such that $\init_2 \weaktransition{\weaktrace}_2 s$ and $X \in \refusals(s)$.
    We distinguish whether weak trace $\weaktrace$ is among the weak traces of $\lts_1$ or not:
  
    \begin{itemize}
    \item Case $\weaktrace \notin \weaktraces(\lts_1)$.
      By Lemma \ref{lemma:not_weaktrace_fdr} this means that $\weaktrace$ is a trace leading to the empty set in $\normalize(\lts_1)$.
		    By Proposition~\ref{lemma:product_traces}, a pair $(\emptyset, s)$ is then reachable in $\normalize(\lts_1) \product \lts_2$. But then that pair is a reachable FD-witness.
      Contradiction.
    
    \item Case $\weaktrace \in \weaktraces(\lts_1)$.
      Recall that $(\weaktrace, X) \notin \failuresbot(\lts_1)$ by assumption.
      By Lemmas~\ref{lemma:deterministic} and \ref{lemma:subset_existence_fdr} there is a unique state $V$ of $\normalize(\lts_1)$ reachable via weak trace $\weaktrace$ such that for all $t \in \states_1$ where $\init_1 \weaktransition{\weaktrace}_1 t$, it holds that $t \in \tostates{V}_{\lts_1}$.
      For each stable state $t$ it holds that $X \notin \refusals(t)$, because $(\weaktrace, X) \in \failuresbot(\lts_1)$.
      Therefore, by Lemma~\ref{lemma:subset_fdr} it follows that $X \notin \refusals(\tostates{V}_{\lts_1})$.
      Therefore $\refusals(s) \not\subseteq \refusals(\tostates{V}_{\lts_1})$.  
      By Proposition~\ref{lemma:product_traces} the pair $(V, s)$ is reachable and it is an FD-witness by definition.
      Contradiction.\qedhere
    \end{itemize}
    
  \end{itemize}
\end{proof}

\begin{thm}\label{theorem:failures}
  Let $\lts_1 = (\states_1, \init_1, \transitions_1)$ and $\lts_2 = (\states_2, \init_2, \transitions_2)$ be two LTSs.
  Then $\lts_1 \refinedbyfdr \lts_2$ holds if and only if no FD-witness is reachable in $\normalize(\lts_1) \product \lts_2$.
\end{thm}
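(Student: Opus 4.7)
The theorem is an immediate corollary of Lemmas~\ref{lemma:failures_if} and~\ref{lemma:failures_only_if}, which together establish the two directions of the biconditional; the pattern mirrors the proofs of Theorems~\ref{theorem:traces} and~\ref{theorem:stable_failures}. The plan is therefore simply to invoke these two lemmas in sequence: the forward direction ($\lts_1 \refinedbyfdr \lts_2$ implies no reachable FD-witness) is delivered verbatim by Lemma~\ref{lemma:failures_if}, and the reverse direction by Lemma~\ref{lemma:failures_only_if}.

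Were one to write out a self-contained argument, the natural scheme would be two contrapositive arguments making use of Proposition~\ref{lemma:product_traces}, the characterisation Lemmas~\ref{lemma:subset_fdr}, \ref{lemma:subset_existence_fdr}, \ref{lemma:not_divergence_fdr}, and~\ref{lemma:not_weaktrace_fdr} for $\normalize$, and determinism of the normal form (Lemma~\ref{lemma:deterministic}). For the direction from refinement to absence of a reachable FD-witness, one would assume towards contradiction an FD-witness $(U, s)$ reachable via some weak trace $\weaktrace$, and examine the three defining clauses in turn: the case $U = \emptyset$, combined with $\diverges{\tostates{U}_{\lts_1}}$ failing and Lemma~\ref{lemma:not_weaktrace_fdr}, yields a failure in $\failuresbot(\lts_2) \setminus \failuresbot(\lts_1)$; the case where $s$ is stable and $\refusals(s) \not\subseteq \refusals(\tostates{U}_{\lts_1})$ yields such a failure via Lemma~\ref{lemma:subset_existence_fdr}; and $\diverges{s}$ yields a divergence in $\divergences(\lts_2) \setminus \divergences(\lts_1)$ via Lemma~\ref{lemma:not_divergence_fdr}.

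The reverse direction would first dispatch the case $\divergences(\lts_2) \not\subseteq \divergences(\lts_1)$ by picking a minimal divergence and using Lemma~\ref{lemma:subset_fdr} to show that the joint state reached is an FD-witness; then, assuming divergence inclusion, it would split on whether the weak trace of a failure in $\failuresbot(\lts_2) \setminus \failuresbot(\lts_1)$ lies in $\weaktraces(\lts_1)$, invoking Lemma~\ref{lemma:not_weaktrace_fdr} in the negative case and Lemma~\ref{lemma:subset_existence_fdr} in the positive case. The real conceptual obstacle, and the reason the authors introduce the bespoke normal form $\normalize$ that cuts off successors at divergent subsets, is the careful bookkeeping needed to ensure that the reachability analysis does not mistake a state lying beyond a divergence in $\lts_1$ for an FD-witness; this is precisely what the side condition that $\diverges{\tostates{U}_{\lts_1}}$ fails encodes in the witness definition. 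Given the preceding lemmas, however, the theorem itself admits a one-line proof.
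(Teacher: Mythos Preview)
Your proposal is correct and matches the paper's own proof exactly: the paper simply states that the result follows directly from Lemmas~\ref{lemma:failures_if} and~\ref{lemma:failures_only_if}. Your additional sketch of how a self-contained argument would run accurately summarises the content of those two lemmas as given in the paper.
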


\begin{proof}
  This follows directly from Lemmas~\ref{lemma:failures_if} and~\ref{lemma:failures_only_if}.
\end{proof}

\begin{exa}
  Consider the LTSs $t_0$ and $u_0$ of Example~\ref{example:refinement} once more.
  As we noted in Example~\ref{example:normalizefdr}, $t_0$ is a correct failures-divergences refinement of $u_0$.
  In the figure below the normal form $\normalize(u_0)$ is shown on the left and the product $\normalize(u_0) \product t_0$ is shown on the right.
  \begin{center}
  \begin{tikzpicture}[->,node distance=45pt]
    \tikzstyle{vertex}=[draw=none,minimum size=17pt,inner sep=2pt]
    \scriptsize
  
    \node[vertex] (C1) {$\{u_0\}$};
    \node[vertex] (C1init) [left=10pt of C1] {};
    \node[vertex] (C1right) [right of=C1] {$\{u_1\}$};
    \node[vertex] (Cempty) [below of=C1right] {$\emptyset$};

    \draw (C1init) edge (C1);
    \draw (C1) edge node[above]{$\request$} (C1right);
    \draw (C1) edge node[below left] {$10,20$} (Cempty);
    
    \node[vertex] (D1) [right=6cm of C1] {$(\{u_0\}, t_0)$};
    \node[vertex] (D1init) [left=10pt of D1] {};
    \node[vertex] (D2) [right= of D1] {$(\{u_1\}, t_1)$};
    
    \draw (D1init) edge (D1);
    \draw (D1) edge node[above]{$\request$} (D2);
    \draw (Cempty) edge[loop below] node[below] {$10,20,\request$} (Cempty);
  \end{tikzpicture}
  \end{center}
  Note that the pair $(\{u_0,u_2\},t_2)$, which was reachable in the product
  $\normalizesfr(u_0) \product t_0$, is no longer reachable in the product
  $\normalize(u_0) \product t_0$.
  In fact, no FD-witness is reachable in the latter product, thus confirming that indeed
  $u_0 \refinedbyfdr t_0$.\qed
\end{exa}

\section{Antichain Algorithms for Refinement Checking}\label{section:original-algorithm}

Notice that Theorems~\ref{theorem:traces},~\ref{theorem:stable_failures} and~\ref{theorem:failures} provide the basis for straightforward algorithms for deciding trace refinement, stable failures refinement and failures-divergences refinement: one can explore the product of the normalised specification and the impementation, looking for a witness \emph{on-the-fly}.
In these algorithms, the normalisation of the specification LTS dominates the theoretical worst-case run time complexity of the algorithms. While refinement checking itself is a PSPACE-hard problem, in practice, the problem can often be solved quite effectively.
Nevertheless, as observed in~\cite{WangS0LDWL12}, antichains provide room for improvement by  potentially reducing the number of states of the normal form LTS of the specification that must be checked.

An antichain is a set $\mathcal{A} \subseteq X$ of a partially ordered set $(X,\leq)$ in which all distinct $x,y \in \mathcal{A}$ are incomparable: neither $x \leq y$ nor $y \leq x$.
Given a partially ordered set $(X,\leq)$ and an antichain $\mathcal{A}$, the membership test, denoted by $\chainin$, checks whether $\mathcal{A}$ `contains' an element $x$; that is, $x \chainin \mathcal{A}$ holds true if and only if there is some $y \in \mathcal{A}$ such that $y \leq x$.  
We write $Y \chainsetin \mathcal{A}$ iff $y \chainin \mathcal{A}$ for all $y \in Y$.  Antichain $\mathcal{A}$ can be extended by inserting an element $x \in X$, denoted $\mathcal{A} \chaininsert x$, which is defined as the set $\{ y \mid y = x \vee (y \in \mathcal{A} \wedge x \not\leq y)\}$. 
Note that this operation only yields an antichain whenever $x \not\chainin \mathcal{A}$.

As~\cite{WangS0LDWL12,AbdullaCHMV10} suggest, the state space of the product $(\states, \init ,\transitions)$ between a normal form of LTS $\lts_1$ and the LTS $\lts_2$ induces a partially ordered set as follows. 
For $(U,s),(V,t) \in \states$, define $(U,s) \statesubset (V,t)$ iff $s = t$ and $\tostates{U}_{\lts_1} \subseteq \tostates{V}_{\lts_1}$. 
Then the set $(S,\statesubset)$ is a partially ordered set.
The fundamental property underlying the reason why an antichain approach to refinement checking works is expressed by the following claim (which we repeat as Proposition~\ref{lemma:subset_trace}, and prove in Section~\ref{section:improved-algorithm}), stating that the traces of any state $(V,s)$ in the product can be executed from all states smaller than $(V,s)$. 
Notice that this property relies on the fact that the empty set is included as a state in the normal form LTS.

\begin{claim}
  For all states $(U, s), (V, s)$ of $\normalize(\lts_1) \product \lts_2$ satisfying $(U, s) \statesubset (V, s)$ and for every sequence $\trace \in \actionstau^*$ such that $(V, s) \tracetransition{\trace} (V', t)$ there is a state $(U',t)$ such that  $(U, s) \tracetransition{\trace} (U', t)$ and $(U', t) \statesubset (V', t)$.
\end{claim}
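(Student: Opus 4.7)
The plan is to prove the claim by induction on the length of the sequence $\trace$, using an auxiliary monotonicity observation about the transition relation of the failures-divergences normal form: namely, if $U, V \subseteq \states_1$ satisfy $U \subseteq V$ and $V \transition{a}' V'$ holds in $\normalize(\lts_1)$, then there exists $U'$ with $U \transition{a}' U'$ and $U' \subseteq V'$. This follows directly from Definition~\ref{def:normalise}: the side condition $\neg(\exists s \in V : \diverges{s})$ is inherited by subsets of $V$, so $U$ is also allowed to perform an $a$-transition, and the target is $U' = \{t \mid \exists s \in U : s \weaktransition{a} t\}$, which is evidently a subset of $V' = \{t \mid \exists s \in V : s \weaktransition{a} t\}$.

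For the base case $\trace = \emptytrace$, we have $(V', t) = (V, s)$, so taking $(U', t) = (U, s)$ works trivially, using the hypothesis $(U, s) \statesubset (V, s)$.

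For the inductive step, write $\trace = \trace'\,a$ and decompose the transition sequence as $(V, s) \tracetransition{\trace'} (V'', r) \transition{a} (V', t)$. By the induction hypothesis there is $(U'', r)$ with $(U, s) \tracetransition{\trace'} (U'', r)$ and $\tostates{U''}_{\lts_1} \subseteq \tostates{V''}_{\lts_1}$. There are two cases depending on whether $a = \tau$ or $a \in \actions$. If $a = \tau$, then by Definition~\ref{def:product} the step in the product originates from a $\tau$-move of the second component, so $V'' = V'$ and $r \transition{\tau}_2 t$; consequently $(U'', r) \transition{\tau} (U'', t)$ and we may take $U' = U''$, which yields $(U', t) \statesubset (V', t)$ since $\tostates{U''}_{\lts_1} \subseteq \tostates{V''}_{\lts_1} = \tostates{V'}_{\lts_1}$. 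If $a \in \actions$, then $V'' \transition{a}' V'$ in $\normalize(\lts_1)$ and $r \transition{a}_2 t$ in $\lts_2$; by the monotonicity observation there is $U'$ with $U'' \transition{a}' U'$ and $\tostates{U'}_{\lts_1} \subseteq \tostates{V'}_{\lts_1}$, so combined with $r \transition{a}_2 t$ we obtain $(U'', r) \transition{a} (U', t)$ in the product and $(U', t) \statesubset (V', t)$ as desired.

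The only subtle point in the argument is the interaction with the divergence side-condition in the definition of $\normalize$; however, since that condition is a downward-closed property under subset (the absence of diverging elements is preserved when passing to subsets), the monotonicity lemma is almost immediate, and the rest of the proof reduces to a straightforward case analysis on the kind of action performed.
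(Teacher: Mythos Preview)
Your proof is correct and follows essentially the same approach as the paper: induction on the length of $\trace$, with the inductive step reduced to a single-step monotonicity property and a case analysis on whether the action is $\tau$ or visible. The only organisational difference is that the paper factors the entire single-step case (both $\tau$ and visible actions, at the level of the product) into a separate lemma (Lemma~\ref{lemma:subset_transitions}) and then invokes it in the induction, whereas you isolate the monotonicity of $\normalize(\lts_1)$'s transition relation and perform the $\tau$/non-$\tau$ split inside the inductive step itself; the content of the two arguments is the same.
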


\noindent The main idea of the antichain algorithms is now as follows: the set of states of the product that have been explored are recorded in an antichain rather than a set. 
Whenever a new state of the product is found that is already included in the antichain (w.r.t.~the membership test $\chainin$), further exploration of that state is unnecessary, thereby pruning the state space of the product. 
While the proposition stated above suggests this is sound for trace refinement, it is not immediate that doing so is also sound for refusals and divergences.

Based on the above informal reasoning,~\cite{WangS0LDWL12} presents antichain algorithms that intend to check for trace refinement, stable failures refinement and failures-divergences refinement.
Before we discuss these algorithms in more detail, see Algorithms~\ref{alg:original_tr}-\ref{alg:original_fdr}, we here present their pseudocode for the sake of completeness; in the remainder of this paper, we refer to these as the \emph{original} algorithms.

\begin{rem}\label{remark:refusals}
  For the implementation of refusals in Algorithms~\ref{alg:original_sfr} and~\ref{alg:original_fdr} we followed the definition of $\refusals$ provided in~\cite{WangS0LDWL12}. 
  This definition differs subtly from Definition~\ref{def:refusals}, by defining, for any (not necessarily stable) state $s$, $\refusals(s) = \{X \mid \exists s' \in \states :  (s  \weaktransition{\emptytrace} s' \land \stable(s') \land X \subseteq \actions \setminus \enabled(s')) \}$ and $\refusals(U) = \{X \mid \exists s \in U : X \in \refusals(s)\}$ for $U \subseteq S$.
\end{rem}
 
\begin{algorithm}[H]
\footnotesize
  \caption{Antichain-based trace refinement algorithm presented in~\cite{WangS0LDWL12}.
  The algorithm returns \emph{true} if and only if $\lts_1 = (\states_1, \init_1, \transitions_1)$ is refined by $\lts_2  = (\states_2, \init_2, \transitions_2)$ in trace semantics.}
  \label{alg:original_tr}
  \begin{algorithmic}[1]
    \Procedure{refines-trace}{$\lts_1, \lts_2$}
    \State {let $\working$ be a stack containing a pair $( \{s \in 
    \states_1 \mid \init_1 \weaktransition{\emptytrace}_1 s\}, \init_2)$ }
    \State {let $\antichain \gets \emptyset$}
    \While {$\working \neq \emptyset$}
      \State {pop ($\spec, \impl$) from $\working$}
      \State {$\antichain \gets \antichain \chaininsert (\spec, \impl)$}
      \For {$\impl \transition{a}_2 \impl'$}
        \If {$a = \tau$}
          \State {$\spec' \gets \spec$}
        \Else
          \State {$\spec' \gets \{s' \in \states_1  \mid \exists s \in 
          \spec : s \weaktransition{a}_1 s'\}$}			
        \EndIf
        \If {$\spec' = \emptyset$}
          \Return false
        \EndIf
        \If {$(\spec', \impl') \chainnotin \antichain$}
          \State {push $(\spec', \impl')$ into $\working$}\label{alg:original_tr:push_working}
        \EndIf	
      \EndFor
    \EndWhile
    \Return true
    \EndProcedure
  \end{algorithmic}
\end{algorithm}

\begin{algorithm}[H]
\footnotesize
  \caption{Antichain-based stable failures refinement algorithm presented in~\cite{WangS0LDWL12}.
  The algorithm returns \emph{true} if and only if $\lts_1 = (\states_1, \init_1, \transitions_1)$ is refined by $\lts_2  = (\states_2, \init_2, \transitions_2)$ in stable failures semantics.}  \label{alg:original_sfr}
  \begin{algorithmic}[1]
    \Procedure{refines-stable-failures}{$\lts_1, \lts_2$}
    \State {let $\working$ be a stack containing a pair $( \{s \in 
    \states_1 \mid \init_1 \weaktransition{\emptytrace}_1 s\}, \init_2)$ }
    \State {let $\antichain \gets \emptyset$}
    \While {$\working \neq \emptyset$}
      \State {pop ($\spec, \impl$) from $\working$}
      \State {$\antichain \gets \antichain \chaininsert (\spec, \impl)$}
      \If { $\refusals(\impl) \not\subseteq 
      \refusals(\spec)$}\label{alg:original_sfr:refusals}
        \Return false
      \EndIf
      \For {$\impl \transition{a}_2 \impl'$}
        \If {$a = \tau$}
          \State {$\spec' \gets \spec$}
        \Else
          \State {$\spec' \gets \{s' \in \states_1  \mid \exists s \in 
          \spec : s \weaktransition{a}_1 s'\}$}			
        \EndIf
        \If {$\spec' = \emptyset$}
          \Return false
        \EndIf
        \If {$(\spec', \impl') \chainnotin \antichain$}
          \State {push $(\spec', \impl')$ into $\working$}
        \EndIf	
      \EndFor
    \EndWhile
    \Return true
    \EndProcedure
  \end{algorithmic}
\end{algorithm}

\begin{algorithm}[H]
  \footnotesize
  \caption{Antichain-based failures-divergences refinement algorithm presented in~\cite{WangS0LDWL12}.
  The algorithm is claimed to return \emph{true} iff $\lts_1 = (\states_1, \init_1, \transitions_1)$ is refined by $\lts_2  = (\states_2, \init_2, \transitions_2)$ in failures-divergences semantics.}
  \label{alg:original_fdr}
  \begin{algorithmic}[1]
    \Procedure{refines-failures-divergences}{$\lts_1, \lts_2$}
    \State {let $\working$ be a stack containing a pair $( \{s \in \states_1 \mid \init_1 \weaktransition{\emptytrace}_1 s\}, \init_2)$ }
    \State {let $\antichain \gets \emptyset$}
    \While {$\working \neq \emptyset$}
      \State {pop ($\spec, \impl$) from $\working$}
      \State {$\antichain \gets \antichain \chaininsert (\spec, \impl)$}
      \If {$\diverges{\impl}$}\label{alg:original_fdr:impl_diverges}
        \If {not $\diverges{\spec}$}\label{alg:original_fdr:spec_diverges}
          \Return false
        \EndIf
      \Else
      \If { $\refusals(\impl) \not\subseteq 
      \refusals(\spec)$}\label{alg:original_fdr:refusals}
        \Return false
      \EndIf
      \For {$\impl \transition{a}_2 \impl'$}
        \If {$a = \tau$}
          \State {$\spec' \gets \spec$}
        \Else
          \State {$\spec' \gets \{s' \in \states_1  \mid \exists s \in 
          \spec : s \weaktransition{a}_1 s'\}$}			
        \EndIf
        \If {$\spec' = \emptyset$}
          \Return false
        \EndIf
        \If {$(\spec', \impl') \chainnotin \antichain$}
          \State {push $(\spec', \impl')$ into $\working$}
        \EndIf	
      \EndFor
      \EndIf
    \EndWhile
    \Return true
    \EndProcedure
  \end{algorithmic}
\end{algorithm}

\noindent Let us stress that Algorithm~\ref{alg:original_tr} correctly decides trace refinement and Algorithm~\ref{alg:original_sfr} correctly decides stable failures refinement.
However, Algorithm~\ref{alg:original_fdr} fails to correctly decide failures-divergences
refinement. 
Moreover, it is interesting to note that Algorithms~\ref{alg:original_sfr} and~\ref{alg:original_fdr} fail to decide the \emph{non-standard} relations used in~\cite{WangS0LDWL12}, see also the discussion in Remark~\ref{remark:incorrect}. 
All three issues are illustrated by the example below.

\begin{exa}\label{ex:incorrect} Consider the four transition systems depicted below.
\begin{center}
\scriptsize
\begin{tikzpicture}[->]
  \tikzstyle{vertex}=[draw=none,minimum size=17pt,inner sep=0pt]
  
  \node[vertex] (A0) {$s_0$};
  \node[vertex] (A0init) [left=10pt of A0] {};
  \node[vertex] (A1) [right=2cm of A0] {$s_1$};
  \node[vertex] (A1init) [left=10pt of A1] {};
  \node[vertex] (A2) [right=2cm of A1] {$s_2$};
  \node[vertex] (A2') [right=.5cm of A2] {$t_2$};
  \node[vertex] (A2init) [left=10pt of A2] {};
  \node[vertex] (A3) [right=2.5cm of A2] {$s_3$};
  \node[vertex] (A3') [right=.5cm of A3] {$t_3$};
  \node[vertex] (A3init) [left=10pt of A3] {};
 
  \draw (A0init) edge (A0); 
  \draw (A1init) edge (A1); 
  \draw (A2init) edge (A2); 
  \draw (A3init) edge (A3); 
  \draw (A0) edge[loop above] node[above] {$\tau$} (A0);
  \draw (A0) edge[loop right] node[right] {$a$} (A0);
  \draw (A1) edge[loop right] node[right] {$b$} (A1);
  \draw (A2) edge[loop above] node[above] {$\tau$} (A2);
  \draw (A2) edge[bend left] node[above] {$a$} (A2');
  \draw (A2') edge[bend left] node[below] {$a$} (A2);
  \draw (A3) edge[loop above] node[above] {$\tau$} (A3);
  \draw (A3) edge node[above] {$a$} (A3');
\end{tikzpicture} 
\end{center}

\noindent
Let us first observe that Algorithm~\ref{alg:original_sfr} correctly decides that $s_1
\refinedbysfr s_0$ does not hold, which follows from a violation of
$\weaktraces(s_0) \subseteq \weaktraces(s_1)$.  Next,
observe that we have $s_0 \refinedbyfdr s_1$, since the divergence
of the root state $s_0$ implies chaotic behaviour of $s_0$ and,
hence, any system refines such a system. It is not hard to see, however, that 
Algorithm~\ref{alg:original_fdr} returns \emph{false}, wrongly concluding that $s_0 \not\refinedbyfdr s_1$.

With respect to the non-standard refinement relations defined in~\cite{WangS0LDWL12}, see also Remark~\ref{remark:incorrect},
we observe the following.
Since $s_0$ is not stable, we have $\failures(s_0) = \emptyset$ and hence
$\failures(s_0) \subseteq \failures(s_1)$. Consequently, stable failures
refinement as defined in~\cite{WangS0LDWL12} should hold, but as we already
concluded above, the algorithm returns \emph{false} when checking for $s_1 \refinedbysfr s_0$.
Next, observe that the algorithm returns \emph{true} when checking for $s_2 \refinedbyfdr s_3$.
The reason is that for the pair $(\{s_2\},s_3)$, it detects that state $s_3$ diverges and concludes that
since also the normal form state of the specification $\{s_2\}$ diverges, it
can terminate the iteration and return \emph{true}. This is a consequence of
splitting the divergence tests over two \textbf{if}-statements in lines~\ref{alg:original_fdr:impl_diverges} 
and~\ref{alg:original_fdr:spec_diverges}.
According to the failures-divergences refinement of~\cite{WangS0LDWL12}, however,
the algorithm should return \emph{false}, since $\failures(s_3) \subseteq
\failures(s_2)$ fails to hold: we have $(a, \{a\}) \in \failures(s_3)$ but
not $(a,\{a\}) \in \failures(s_2)$.\qed
\end{exa}

\noindent Notice that each algorithm explores the product between the normal form of a specification, and an implementation in a depth-first, on-the-fly manner.
While depth-first search is typically used for detecting divergences, \cite{Roscoe94} states a number of reasons for running a refinement check in a breadth-first manner.
Indeed, a compelling argument in favour of using a breadth-first search is conciseness of the counterexample in case of a non-refinement.

Each algorithm can be made to run in a breadth-first fashion simply by using a FIFO \emph{queue} rather than a stack as the data structure for $\working$. 
However, our implementations of these algorithms suffer from severely degraded performance. 
The performance degradation can be traced back to the following three additional problems in the original algorithms, which also are present (albeit less pronounced in practice) when utilising a depth-first exploration:

\begin{enumerate}
  \item The refusal check on line~\ref{alg:original_sfr:refusals} of Algorithm~\ref{alg:original_sfr} (and line~\ref{alg:original_fdr:refusals} of Algorithm~\ref{alg:original_fdr}) is also performed for unstable states, which, combined with the definition of $\refusals$ in~\cite{WangS0LDWL12} (see also Remark~\ref{remark:refusals}), results in a repeated, potentially expensive, search for stable states;
  
  \item In all three algorithms, duplicate pairs might be added to $\working$ since $\working$ is filled with all successors of $(\spec, \impl)$ that fail the $\antichain$ membership test, regardless of whether these pairs are already scheduled for exploration, \ie, included in $\working$, or not;
    
  \item In all three algorithms, contrary to the explicit claim in~\cite[Section 2.2]{WangS0LDWL12} the variable $\antichain$ is not guaranteed to be an antichain.  
\end{enumerate}
  
\noindent The first problem is readily seen to lead to undesirable overhead. 
The second and third problem are more subtle. 
We first illustrate the second problem on Algorithm~\ref{alg:original_tr}: the following example shows a case where the algorithm stores an excessive number of pairs in $\working$.
Note that the two other algorithms suffer from the same phenomenon.

\begin{exa}\label{example:badcase}
Consider the family of LTSs $\lts_n^k = (\states_n, \init_n, \transitions_n)$ with states $\states_n = \{s_1, \dots, s_n\}$, transitions $s_i \transition{a_j}_n s_{i-1}$ for all $1 \leq j \leq k$, $1 < i \leq n$ and $\init_n = s_n$; see also the transition system depicted below.
Note that each LTS that belongs to this family is completely deterministic and concrete.

\begin{center}
\scriptsize
  \begin{tikzpicture}[->]
  \tikzstyle{vertex}=[draw=none,minimum size=25pt,inner sep=0pt]
  
  \node[vertex,minimum size=15pt] (A0) {$s_n$};
  \node[vertex] (A0init) [left=10pt of A0] {};
  \node[vertex] (A1) [right=1cm of A0] {$s_{n-1}$};
  \node (A2) [right=1cm of A1] {};
  \node (A23) [right=.2cm of A2] {$\dots$};
  \node (A3) [right=.2cm of A23] {$s_2$};
  \node[vertex] (A4) [right=1cm of A3] {$s_1$};
  
  \draw (A0init) edge (A0); 
  \draw (A0) edge[bend right=30] node[below] {$a_k$} (A1);
  \draw (A0) edge[draw=none] node[yshift=2.5pt] {$\vdots$} (A1);
  \draw (A0) edge[bend left=30] node[above] {$a_1$} (A1);  
  
  \draw (A1) edge[dotted, bend right=30] node[below] {$a_k$} (A2);
  \draw (A1) edge[draw=none] node[yshift=2.5pt] {$\vdots$} (A2);
  \draw (A1) edge[dotted, bend left=30] node[above] {$a_1$} (A2); 
  
  \draw (A3) edge[bend right=30] node[below] {$a_k$} (A4); 
  \draw (A3) edge[draw=none] node[yshift=2.5pt] {$\vdots$} (A4); 
  \draw (A3) edge[bend left=30] node[above] {$a_1$} (A4);  
  \end{tikzpicture} 
\end{center}

\noindent Each labelled transition system in this class has $n$ states and $k \cdot (n - 1)$ transitions. 
Suppose one checks for trace refinement between an implementation and specification both of which are given by $\lts_n^k$; \ie, we test for $\lts_n^k \refinedbytrace \lts_n^k$.  

Using a depth-first search, Algorithm~\ref{alg:original_tr} will add the state reachable via a single step once for every action, because $(\{s_{i+1}\}, s_{i+1})$ is only added to $\antichain$ after $(\{s_i\}, s_i)$ has finished exploring its outgoing transitions.
This occurs in every state, because the state reached via such a transition was not visited before.
Hence, $\working$ contains exactly $i\cdot(k-1) + 1$ pairs at the end of the $i$-th iteration, resulting in a maximum $\working$ stack size of $\functionorder(n \cdot k)$ entries.
At the end of the $n$-th iteration $\antichain$ contains all reachable pairs of the product, \ie, $\antichain$ is equal to $(\{s_i\}, s_i)$ for all $1 \leq i \leq n$.
Emptying $\working$ after the $n$-th iteration involves $k$ antichain membership tests per entry.
Consequently, $\functionorder(n\cdot k^2)$ antichain membership tests are required to check $\lts_n^k \refinedbytrace \lts_n^k$.

The breadth-first variant of Algorithm~\ref{alg:original_tr} also adds the state reachable via a single step once for every action for the same reason as the depth-first variant.
However, now $(\{s_{i+1}\}, s_{i+1})$ is only added to the $\antichain$ after \emph{all} $k$ copies of $(\{s_i\}, s_i)$ are taken from the FIFO queue $\working$.
Therefore each entry in $\working$ adds $k$ elements before it is added to $\antichain$, resulting in a maximum queue size of $\functionorder(k^n)$ at state $(\{s_1\}, s_1)$. 
Emptying $\working$ results in $\functionorder(k^{n+1})$ antichain membership tests.\qed
\end{exa}

\noindent Finally, the example below illustrates the third problem of the algorithms, \viz, the violation of the antichain property.
We again illustrate the problem on the most basic of all three algorithms, \viz, Algorithm~\ref{alg:original_tr}.
Note that this violation does not influence the result of antichain membership tests, but it can have an effect on the size of the antichain which in turn leads to overhead.

\begin{exa}\label{example:violation}
  Consider the two left-most labelled transition systems depicted below,
  along with the (normal form) product (the LTS on the right).
  \begin{center}
    \scriptsize
    \begin{tikzpicture}[->, node distance=30pt]
    \tikzstyle{vertex}=[draw=none,minimum size=12pt,inner sep=0pt]
    
    \node[vertex] (T0)  {$t_0$};
    \node[vertex] (T0init) [above=10pt of T0] {};
    \node[vertex] (T1) [below left=30pt and 10pt of T0] {$t_1$};
    \node[vertex] (T2) [below right=30pt and 10pt of T0] {$t_2$};
    
    \draw (T0init) edge (T0);
    \draw (T0) edge[bend left] node[right]  {$a$} (T1) edge[bend right] node[left] {$b$} (T1);
    \draw (T0) edge node[right] {$b$} (T2);
    
    \node[vertex] (A0) [right=60pt of T0] {$s_0$};
    \node[vertex] (A0init) [above=10pt of A0] {};
    \node[vertex] (A1) [below=30pt of A0] {$s_1$};
    
    \draw (A0init) edge (A0);
    \draw (A0) edge[bend right=30] node[left]  {$a$} (A1);
    \draw (A0) edge[bend left=30] node[right] {$b$} (A1);
     
    \node[vertex] (P0) [right=2.5cm of A0] {$(\{t_0\},s_0)$};
    \node[vertex] (P0init) [above=10pt of P0] {};
    \node[vertex] (P1) [below left=1cm and 0cm of P0] {$\phantom{t_2}(\{t_1\},s_1)\phantom{,}$};
    \node[vertex] (P2) [below right=1cm and 0cm of P0] {$(\{t_1, t_2\},s_1)$};
    
    \draw (P0init) edge (P0);
    \draw (P0) edge node[above left]  {$a$} (P1);
    \draw (P0) edge node[above right] {$b$} (P2);  
    \end{tikzpicture} 
  \end{center}
  Algorithm~\ref{alg:original_tr} starts with $\working$ containing
  pair $(\{t_0\},s_0)$ and $\antichain = \emptyset$. 
  Inside the loop,
  the pair $(\{t_0\},s_0)$ is popped from $\working$ and added to
  $\antichain$. 
  The successors of the pair $(\{t_0\},s_0)$ are the pairs $(\{t_1\},s_1)$ and $(\{t_1, t_2\},s_1)$. 
  Since $\antichain$ contains neither of these, both successors are added to $\working$ in line~\ref{alg:original_tr:push_working}.
  Next, popping $(\{t_1\},s_1)$ from $\working$ and adding this pair to $\antichain$ results in $\antichain$ consisting of the set $\{(\{t_0\},s_0), ( \{t_1\},s_1)\}$. In the final iteration of the algorithm, the pair $(\{t_1, t_2\},s_1)$ is popped from $\working$ and  added to $\antichain$, resulting in the set $\{(\{t_0\},s_0), (\{t_1\},s_1), (\{t_1, t_2\},s_1)\}$. 
  Clearly, since $(\{t_1\},s_1) \statesubset ( \{t_1, t_2\},s_1)$, the set $\antichain$ no longer is a proper antichain.\qed  
\end{exa}

\section{Correct and Improved Antichain Algorithms}\label{section:improved-algorithm}

We first focus on solving the performance problems of Algorithms~\ref{alg:original_tr} and~\ref{alg:original_sfr}.
Subsequently, we discuss the additional modifications that are required for Algorithm~\ref{alg:original_fdr} to correctly decide failures-divergences refinement. 

The first performance problem that we identified, \viz, the computational overhead induced by checking for refusal inclusion in non-stable states (which does not occur when checking for a TR-witness), can be solved in a rather straightforward manner: we only perform the check to compare the refusals of the implementation and the normal form state of the specification in case the implementation state is stable.
Doing so avoids a potentially expensive search for stable states.

The second and third performance problems we identified can be solved by rearranging the computations that are conducted; these modifications are more involved.
The essential observation here is that in order for the information in $\antichain$ to be most effective, states of the product must be added to $\antichain$ as soon as these are discovered, even if these have not yet been fully explored.  
This is achieved by maintaining, as an invariant, that $\working \chainsetin \antichain$ holds true; the states in $\working$ then, intuitively, constitute the \emph{frontier} of the exploration.
We achieve this by initialising $\working$ and $\antichain$ to consist of exactly the initial state of the product, and by extending $\antichain$ with all (not already discovered) successors for the state $(\spec, \impl)$ that is popped from $\working$.
As a side effect, this also resolves the third issue, as now both $\working$ and $\antichain$ are only extended with states that have not yet been discovered, \ie, for which the membership test in $\antichain$ fails, and for which insertion of such states does not invalidate the
antichain property.

The modifications we discussed above yield improved algorithms for deciding trace refinement and stable failures refinement, see the pseudocode of Algorithms~\ref{alg:improved_tr} and~\ref{alg:improved_sfr}.
We postpone the discussion of their correctness until after discussing the modifications required to Algorithm~\ref{alg:original_fdr} and its proof of correctness.
The example we present below illustrates the impact of our changes.

\begin{algorithm}[ht]
  \footnotesize
  \caption{The improved trace refinement checking algorithm. 
  The algorithm returns \emph{true} iff $\lts_1 = (\states_1, \init_1, \transitions_1)$ is refined by $\lts_2  = (\states_2, \init_2, \transitions_2)$ in trace semantics.}
  \label{alg:improved_tr}
  \begin{algorithmic}[1]
    \Procedure{\refinesnew{trace}}{$\lts_1, \lts_2$}
    \State {let $\working$ be a stack containing a pair $(\{s \in \states_1 \mid \init_1 \weaktransition{\emptytrace}_1 s\}, \init_2)$ }
    \State {let $\antichain \gets \emptyset \chaininsert (\{s \in \states_1 \mid \init_1 \weaktransition{\emptytrace}_1 s\}, \init_2)$}
    \While {$\working \neq \emptyset$}
      \State {pop ($\spec, \impl$) from $\working$}
      \For {$\impl \transition{a}_2 \impl'$}
        \If {$a = \tau$}
          \State {$\spec' \gets \spec$}
        \Else
          \State {$\spec' \gets \{s' \in \states_1 \mid \exists s \in \spec : s \weaktransition{a}_1 s'\}$}		
        \EndIf
        \If {$\spec' = \emptyset$}
          \Return false
        \EndIf
        \If {$(\spec', \impl') \chainnotin \antichain$}
          \State {$\antichain \gets \antichain \chaininsert (\spec', \impl')$}
          \State {push $(\spec', \impl')$ into $\working$}
        \EndIf	
      \EndFor
    \EndWhile
    \Return true
    \EndProcedure
  \end{algorithmic}
\end{algorithm}

\begin{algorithm}[ht]
  \footnotesize
  \caption{The improved stable failures refinement checking algorithm. 
  The algorithm returns \emph{true} iff  $\lts_1 = (\states_1, \init_1, \transitions_1)$ is refined by $\lts_2  = (\states_2, \init_2, \transitions_2)$ in stable failures semantics.}
  \label{alg:improved_sfr}
  \begin{algorithmic}[1]
    \Procedure{\refinesnew{stable-failures}}{$\lts_1, \lts_2$}
    \State {let $\working$ be a stack containing a pair $(\{s \in \states_1 \mid \init_1 \weaktransition{\emptytrace}_1 s\}, \init_2)$ }
    \State {let $\antichain \gets \emptyset \chaininsert (\{s \in \states_1 \mid \init_1 \weaktransition{\emptytrace}_1 s\}, \init_2)$}
    \While {$\working \neq \emptyset$}
      \State {pop ($\spec, \impl$) from $\working$} 
          \If {$\stable(\impl) \land 
            \refusals(\impl) \not\subseteq \refusals(\spec)$}\label{alg:improved_sfr:refusals}
            \Return false
          \EndIf
          \For {$\impl \transition{a}_2 \impl'$}
            \If {$a = \tau$}
              \State {$\spec' \gets \spec$}
            \Else
              \State {$\spec' \gets \{s' \in \states_1 \mid \exists s \in \spec : s \weaktransition{a}_1 s'\}$}	
            \EndIf
            \If {$\spec' = \emptyset$}
            \Return false
            \EndIf
            \If {$(\spec', \impl') \chainnotin \antichain$}
              \State {$\antichain \gets \antichain \chaininsert (\spec', \impl')$}
              \State {push $(\spec', \impl')$ into $\working$}
            \EndIf	
          \EndFor
    \EndWhile
    \Return true
    \EndProcedure
  \end{algorithmic}
\end{algorithm}

\begin{exa}
  \label{label:analysis}Consider Example~\ref{example:badcase} again, but now using Algorithm~\ref{alg:improved_tr} to check for trace refinement.
  The depth-first variant of this algorithm only adds a successor state to the $\working$ stack once, because for every other outgoing transition it will already be part of $\antichain$ when it is discovered.
  This results in a maximum $\working$ stack size of at most $\functionorder(1)$ entries.
  For each state and each successor $\antichain$ membership is tested once, resulting in $\functionorder(n\cdot k)$ $\antichain$ membership tests.
  This is an improvement compared to the depth-first variant of Algorithm~\ref{alg:improved_tr} of a factor $n\cdot k$ in the maximum $\working$ stack size and a factor $k$ in the number of $\antichain$ membership tests.
  The bounds for the breadth-first variant are identical to the bounds for the depth-first variant, \ie, maximum $\functionorder(1)$ $\working$ queue size and $\functionorder(n\cdot k)$ number of $\antichain$ membership tests.
  Compared to the breadth-first variant of Algorithm~\ref{alg:original_tr}, this is an improvement of a factor $k^n$ in the $\working$ queue size and a factor $k^n/n$ in the number of $\antichain$ membership tests.\qed
\end{exa}

\begin{algorithm}[ht]
  \footnotesize
  \caption{The corrected failures-divergences refinement checking algorithm. 
  The algorithm returns \emph{true} iff $\lts_1 = (\states_1, \init_1, \transitions_1)$ is refined by $\lts_2  = (\states_2, \init_2, \transitions_2)$ in failures-divergences semantics.}
  \label{alg:improved_fdr}
  \begin{algorithmic}[1]
    \Procedure{\refinesnew{failures-divergences}}{$\lts_1, \lts_2$}
    \State {let $\working$ be a stack containing a pair $(\{s \in \states_1 \mid \init_1 \weaktransition{\emptytrace}_1 s\}, \init_2)$ }
    \State {let $\antichain \gets \emptyset \chaininsert (\{s \in \states_1 \mid \init_1 \weaktransition{\emptytrace}_1 s\}, \init_2)$}
    \State {$\{~\Done \gets \emptyset ~\}$}\label{alg:improved_fdr:init_done}
    \While {$\working \neq \emptyset$}\label{alg:improved_fdr:while-working}
      \State {pop ($\spec, \impl$) from $\working$}\label{alg:improved_fdr:pop_working}
      \If {not $\diverges{\spec}$}\label{alg:improved_fdr:spec}
        \If {$\diverges{\impl}$}\label{alg:improved_fdr:divergences}
          \Return false
        \Else
          \If {$\stable(\impl) \land \refusals(\impl) \not\subseteq \refusals(\spec)$}\label{alg:improved_fdr:refusals}
            \Return false
          \EndIf
          \For {$\impl \transition{a}_2 \impl'$}\label{alg:improved_fdr:transition}
            \If {$a = \tau$}\label{alg:improved_fdr:case_tau}
              \State {$\spec' \gets \spec$}
            \Else\label{alg:improved_fdr:case_event}
              \State {$\spec' \gets \{s' \in \states_1 \mid \exists s \in \spec : s \weaktransition{a}_1 s'\}$}	
            \EndIf
            \If {$\spec' = \emptyset$}\label{alg:improved_fdr:unreachable}
              \Return false
            \EndIf
            \If {$(\spec', \impl') \chainnotin \antichain$}\label{alg:improved_fdr:inclusion}
              \State {$\antichain \gets \antichain \chaininsert (\spec', \impl')$}\label{alg:improved_fdr:add-antichain}
              \State {push $(\spec', \impl')$ into $\working$}\label{alg:improved_fdr:push_working}
            \EndIf	
          \EndFor
        \EndIf
      \EndIf
      \State {$\{~\Done \gets \{(\spec, \impl)\} \cup \Done~ \}$}\label{alg:improved_fdr:update_done}
    \EndWhile\label{alg:improved_fdr:while-working-end}
    \Return true
    \EndProcedure
  \end{algorithmic}
\end{algorithm}

\noindent We next focus on the soundness problem of Algorithm~\ref{alg:original_fdr}.
The source of the incorrectness of this algorithm can be traced back to the fact that it (partially) explores the state space of $\normalizesfr(\lts_1) \product \lts_2$, rather than $\normalize(\lts_1) \product \lts_2$.
As illustrated by Example~\ref{example:normalizefdr}, this causes the algorithm to consider states in the product that should not be considered, thus potentially arriving at a wrong verdict.
The fix to this problem is simple yet subtle, requiring a  swap of the divergence tests on lines~\ref{alg:original_fdr:impl_diverges} and~\ref{alg:original_fdr:spec_diverges}, and making the further exploration of the state $(\spec,\impl)$ conditional on the specification not diverging.

As all three algorithms presented in this section fundamentally differ (some even in the relations that they compute) from the original ones, we cannot reuse arguments for the proof of correctness presented in~\cite{WangS0LDWL12}, which are based on invariants that do not hold in our case, and which rely on definitions, some of which are incomparable to ours.
The correctness of our improved algorithms is claimed by the following theorem, which we repeat at the end of this section with an explicit proof.

\begin{restatable}{thm}{refinementtheorem}\label{theorem:algorithm}
  Let $\lts_1 = (\states_1,\init_1,\transitions_1)$ and $\lts_2 = (\states_2,\init_2,\transitions_2)$ be two LTSs. 

  \begin{itemize}
    \item \refinesnew{trace}($\lts_1$, $\lts_2$) returns true if and only if $\lts_1 \refinedbytrace \lts_2$.

    \item \refinesnew{stable-failures}($\lts_1$, $\lts_2$) returns true if and only if $\lts_1 \refinedbysfr \lts_2$.
    
    \item \refinesnew{failures-divergences}($\lts_1$, $\lts_2$) returns true if and only if $\lts_1 \refinedbyfdr \lts_2$.
  \end{itemize}
\end{restatable}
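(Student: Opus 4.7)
The plan is to prove each of the three statements by reducing it to the witness-based characterisations already established in Theorems~\ref{theorem:traces}, \ref{theorem:stable_failures}, and~\ref{theorem:failures}: it suffices to show that \refinesnew{trace}, \refinesnew{stable-failures}, and \refinesnew{failures-divergences} return \emph{true} exactly when no TR-, SF-, or FD-witness, respectively, is reachable in the corresponding product. Since the three algorithms share the same exploration skeleton and differ only in the witness predicates tested on each popped pair and, in the failures-divergences case, in the additional guard that prevents exploration beyond a diverging specification state (mirroring Definition~\ref{def:normalise}), I would organise the proof around a generic correctness argument and specialise the witness-preservation step per algorithm.

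The backbone consists of loop invariants proved by induction on the number of iterations of the outer \textbf{while}-loop: (i) $\antichain$ is always an antichain w.r.t.\ $\statesubset$; (ii) $\working \chainsetin \antichain$; and (iii) every pair in $\antichain$ is reachable in the product. Invariant (i) holds because a successor is added only after failing the $\chainnotin$ test, which is precisely the condition under which $\chaininsert$ preserves the antichain property, and (ii), (iii) follow from the synchronised updates of $\working$ and $\antichain$ together with the initialisation. From these invariants the \emph{only if} direction is essentially immediate: if the algorithm returns \emph{true}, then no popped pair triggered one of the emptiness, stability-refusal, or (for failures-divergences) divergence checks that, by the respective witness definition, characterise a reachable witness; by (iii), every reachable pair eventually makes it onto $\working$ and is popped, so no reachable witness can have been missed.

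The main obstacle is the \emph{if} direction, where I need to show that antichain-based pruning cannot hide a reachable witness. The essential tool is the monotonicity claim preceding Algorithm~\ref{alg:original_tr}, stating that whenever $(U, s) \statesubset (V, s)$ the traces available from $(V, s)$ can also be fired from $(U, s)$ to a $\statesubset$-smaller target; this I would prove by induction on the length of the trace, using that $\emptyset$ is a sink in the normal forms and that the deterministic successor function on normal-form states is monotone under $\subseteq$. Using it, I would show by induction on the length of a path to any reachable product state $(V, t)$ that at termination there exists $(U, t) \in \antichain$ with $(U, t) \statesubset (V, t)$. It then remains to verify that each witness condition is downward closed in the first component: from $\tostates{U}_{\lts_1} \subseteq \tostates{V}_{\lts_1}$ we obtain $\refusals(\tostates{U}_{\lts_1}) \subseteq \refusals(\tostates{V}_{\lts_1})$ and $\neg\diverges{\tostates{V}_{\lts_1}}$ implies $\neg\diverges{\tostates{U}_{\lts_1}}$, while emptiness of $V$ forces $U = \emptyset$. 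Hence whenever $(V, t)$ is a witness so is $(U, t)$, and the algorithm detects failure when processing $(U, t)$. Termination is guaranteed because $\antichain \subseteq \powerset(\states_1) \times \states_2$ is finite and is only extended by $\chainnotin$-incomparable insertions, so the loop executes boundedly many times.
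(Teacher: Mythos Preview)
Your overall plan---reduce to Theorems~\ref{theorem:traces}, \ref{theorem:stable_failures}, \ref{theorem:failures} and then argue that each algorithm returns \emph{false} exactly when the corresponding witness is reachable---matches the paper's structure. The monotonicity property you intend to prove is exactly Proposition~\ref{lemma:subset_trace}, and the downward closure of witnesses is Lemma~\ref{lemma:fdrwitnes_subset}. There are, however, two genuine problems and one methodological difference worth pointing out.

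\textbf{Direction confusion and a false claim.} Your ``only if'' and ``if'' paragraphs argue the \emph{same} implication. ``Returns true $\Rightarrow$ no reachable witness'' and ``a reachable witness cannot be hidden by pruning'' are contrapositives of one another; this is the hard direction. The easy direction---``returns false $\Rightarrow$ some reachable witness''---you never address (it follows from your invariant~(iii) plus the observation that the conditions triggering \emph{false} are precisely the witness clauses). More importantly, the argument you give under ``only if'' is wrong: invariant~(iii) says that every pair in $\antichain$ is reachable, not that every reachable pair is popped. The latter is false; pruning is the whole point of the antichain. So that paragraph should be discarded and replaced by the easy direction.

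\textbf{The coverage invariant fails for failures-divergences as stated.} Your intended invariant ``at termination, for every reachable $(V,t)$ there is $(U,t)\in\antichain$ with $(U,t)\statesubset(V,t)$'' is correct for Algorithms~\ref{alg:improved_tr} and~\ref{alg:improved_sfr}, but can fail for Algorithm~\ref{alg:improved_fdr}. If $\diverges{\tostates{U}_{\lts_1}}$, the guard on line~\ref{alg:improved_fdr:spec} suppresses exploration of $\tau$-successors of $(U,s)$, so a reachable $(V,t)$ with $\diverges{\tostates{V}_{\lts_1}}$ need not be $\chainin$-covered. The repair is to restrict the invariant to reachable $(V,t)$ with $\neg\diverges{\tostates{V}_{\lts_1}}$: along any path in $\normalize(\lts_1)\product\lts_2$ ending in such a state, every intermediate normal-form component is non-diverging (a non-$\tau$ step out of a diverging state is impossible in $\normalize$, and $\tau$-steps preserve the component), so the inductive step goes through because $U\subseteq V$ and $\neg\diverges{\tostates{V}_{\lts_1}}$ force $\neg\diverges{\tostates{U}_{\lts_1}}$. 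Since every FD-witness has a non-diverging first component, this restricted invariant suffices.

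\textbf{Comparison with the paper.} The paper does not use your coverage invariant. Instead it maintains, for the hard direction, the distance invariant $\Dist_\witnesses(\Done) > \Dist_\witnesses(\working) = \Dist_\witnesses(\antichain)$ (Lemma~\ref{lemma:invariant_distance}), where $\witnesses$ is the set of reachable witnesses; the diverging guard is absorbed by Lemma~\ref{lemma:diverging_spec}, which says that a pair with diverging specification component has infinite distance to $\witnesses$ and can therefore safely be skipped. Your coverage argument is arguably more elementary and closer to standard antichain proofs (e.g.\ for language inclusion), while the distance invariant makes progress towards a witness explicit and handles the failures-divergences guard uniformly without a separate case analysis on the path. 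Either approach is fine once you fix the two issues above.
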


\noindent For the remainder of this section we fix two LTSs $\lts_1 = (\states_1,\init_1,\transitions_1)$ and $\lts_2 = (\states_2,\init_2,\transitions_2)$.
We focus on the proof of correctness of Algorithm~\ref{alg:improved_fdr}; the correctness proofs for Algorithm~\ref{alg:improved_tr} for deciding trace refinement and Algorithm~\ref{alg:improved_sfr} for deciding stable failures refinement proceed along the same lines.

First we show termination of Algorithm~\ref{alg:improved_fdr}.
To reason about the states that have been processed, we have introduced a ghost variable $\Done$ which is initialised as the empty set (see line~\ref{alg:improved_fdr:init_done}) and each pair $(\spec, \impl)$ that is popped from $\working$ at line~\ref{alg:improved_fdr:pop_working} is added to $\Done$ (line~\ref{alg:improved_fdr:update_done}).
For termination of the algorithm, we argue that every state in the product gets visited, and is added to $\Done$, at most once.
A crucial observation in our reasoning is the following property of an antichain: adding elements to an antichain does not affect the membership test of elements already included.  
This is formalised by the lemma below.

\begin{lem}\label{lemma:consistent}
  Let  $(Z, \leq)$ be a partially ordered set and $\Antichain \subseteq Z$ an antichain.
  For all elements $x, y \in Z$ if $x \chainin \Antichain$ and $y \chainnotin \Antichain$ then $x \chainin (\Antichain \chaininsert y)$ holds. 
\end{lem}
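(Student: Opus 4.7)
The plan is to unfold the definitions of $\chainin$ and $\chaininsert$ and do a short case split. Assume $x \chainin \Antichain$, so pick a witness $z \in \Antichain$ with $z \leq x$; I will show that the insertion $\Antichain \chaininsert y$ still contains some element below $x$, and the only question is whether $z$ itself survives the insertion.

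Recall $\Antichain \chaininsert y = \{w \mid w = y \vee (w \in \Antichain \wedge y \not\leq w)\}$. So $z$ remains in $\Antichain \chaininsert y$ precisely when $y \not\leq z$. This suggests the following split.

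\textbf{Case 1:} $y \not\leq z$. Then $z \in \Antichain \chaininsert y$ by the defining clause for elements of $\Antichain$, and since $z \leq x$ we immediately obtain $x \chainin \Antichain \chaininsert y$.

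\textbf{Case 2:} $y \leq z$. Then by transitivity of $\leq$ together with $z \leq x$ we get $y \leq x$. Since $y \in \Antichain \chaininsert y$ (by the $w = y$ clause), the element $y$ witnesses $x \chainin \Antichain \chaininsert y$.

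I do not expect any real obstacle: the hypothesis $y \chainnotin \Antichain$ is in fact not used, but it is consistent with the intended use of the lemma (insertion is only applied to elements that fail the membership test, to preserve the antichain property). The entire argument is a two-line case analysis, so I would keep the write-up correspondingly terse.
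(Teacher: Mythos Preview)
Your proof is correct and follows essentially the same two-case analysis as the paper; the only cosmetic difference is that the paper splits on whether $y \leq x$ rather than on whether $y \leq z$, which amounts to the same thing. Your observation that the hypothesis $y \chainnotin \Antichain$ is not actually needed for the membership conclusion is also correct; the paper only invokes it to remark that the resulting set is again an antichain.
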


\begin{proof}  
  Assume arbitrary elements $x, y \in Z$ such that $x \chainin \Antichain$ and $y \chainnotin \Antichain$.
  Recall that the definition of $\Antichain \chaininsert y$ results in an antichain $\{ z \mid z = y \vee (z \in \mathcal{A} \wedge y \not\leq z)\}$, because $y \chainnotin \Antichain$ by assumption.
  Consider the following two cases:
  \begin{itemize}
    \item Case $y \statesubset x$.
    Then $x \chainin (\Antichain \chaininsert y)$ follows from the fact that $y \in (\Antichain \chaininsert y)$. 
    
    \item Case $y \statensubset x$.
    There is an element $z \in \Antichain$ such that $z \statesubset x$ by assumption that $x \chainin \Antichain$.
    Because $y \statensubset x$ and $z \statesubset x$ we also know that $y \statensubset z$.
    Consequently, $z \in (\Antichain \chaininsert y)$ and thus also $x \chainin (\Antichain \chaininsert y)$.\qedhere
  \end{itemize}
\end{proof}

\noindent Next, we prove that $\Done$ and $\working$ are disjoint, which implies that pairs present in $\Done$ (which is a set that is easily seen to only grow) are not added to $\working$ again.
Showing this property to be true requires two additional observations, \viz, (1) pairs in $\working$ and $\Done$ are contained in $\antichain$, and (2), $\working$ contains only \emph{unique} pairs, thus representing a proper set (and, by abuse of notation, we will treat it as such).
For the purpose of identifying elements in $\working$ we define, for a given index $i$, the notation $\working^i$ to represent the $i$th pair on the stack.
Now we can describe that all elements in $\working$ are unique by showing that $\forall i \neq j : \working^i \neq \working^j$ holds true.
The lemma below formalises these insights.

\begin{lem}\label{lemma:disjoint}
  The following invariant holds in the while loop (lines~\ref{alg:improved_fdr:while-working}-\ref{alg:improved_fdr:while-working-end}) of Algorithm~\ref{alg:improved_fdr}:
  \begin{equation}\tag{\text{I}}\label{invariant:working}
    (\Done \cup \working) \chainsetin \antichain \land (\forall i \neq j : \working^i \neq \working^j) \land (\Done \cap \working) = \emptyset
  \end{equation}
\end{lem}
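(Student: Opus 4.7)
The plan is to prove the invariant \eqref{invariant:working} by induction on the number of iterations of the while loop. For the base case, I would observe that just before entering the loop $\Done = \emptyset$ and $\working$ consists only of the initial pair $p_0 = (\{s \in \states_1 \mid \init_1 \weaktransition{\emptytrace}_1 s\}, \init_2)$, which has also just been inserted into the (previously empty) antichain, so $p_0 \in \antichain$ and hence $p_0 \chainin \antichain$; uniqueness and disjointness are trivial because $\working$ has one element and $\Done$ is empty.

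For the inductive step, I would fix an iteration and consider the sequence of modifications to $\working$, $\antichain$ and $\Done$ that occur between the \textbf{pop} at line~\ref{alg:improved_fdr:pop_working} and the ghost update at line~\ref{alg:improved_fdr:update_done}. Let $(\spec,\impl)$ denote the popped pair. I would verify each conjunct separately. Uniqueness of $\working$ is preserved because (i) the popped pair disappears and by the inductive hypothesis it was the unique copy of $(\spec,\impl)$; (ii) any successor $(\spec',\impl')$ pushed at line~\ref{alg:improved_fdr:push_working} passes the test $(\spec',\impl') \chainnotin \antichain$, but by the inductive hypothesis every element of $\working$ is in $\antichain$, so $(\spec',\impl')$ was not already in $\working$; and (iii) after the first such push, $\antichain$ has been extended with $(\spec',\impl')$, so a second successor equal to $(\spec',\impl')$ would fail the membership test and not be pushed. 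For $(\Done\cup\working) \chainsetin \antichain$, any pre-existing element is still in $\antichain$ by repeated application of Lemma~\ref{lemma:consistent} to every insertion at line~\ref{alg:improved_fdr:add-antichain}; newly pushed pairs are in $\antichain$ by construction; and the pair $(\spec,\impl)$ that is added to $\Done$ at line~\ref{alg:improved_fdr:update_done} satisfied $(\spec,\impl) \chainin \antichain$ at the start of the iteration (since it was in $\working$), and this membership is again preserved by Lemma~\ref{lemma:consistent}.

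For the disjointness clause $\Done \cap \working = \emptyset$, I would argue in two directions. After popping, $(\spec,\impl)$ is no longer in $\working$, so the only way disjointness could be violated is either by pushing a pair already present in $\Done$, or by adding $(\spec,\impl)$ to $\Done$ while it is simultaneously in $\working$. The first case is impossible because any pushed $(\spec',\impl')$ satisfies $(\spec',\impl') \chainnotin \antichain$, while elements of $\Done$ satisfy $\chainin \antichain$ by the invariant just established. The second case is impossible because $(\spec,\impl) \chainin \antichain$ holds throughout the iteration (again by Lemma~\ref{lemma:consistent}), so no successor equal to $(\spec,\impl)$ could pass the test at line~\ref{alg:improved_fdr:inclusion}, hence $(\spec,\impl)$ is not re-pushed between its removal from $\working$ and its insertion into $\Done$.

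The main obstacle is that the three conjuncts are genuinely interdependent: the disjointness argument relies on $\Done \chainsetin \antichain$, which in turn depends on Lemma~\ref{lemma:consistent} being applied uniformly across every intermediate insertion into $\antichain$ during the \textbf{for} loop. The proof therefore needs to reason about the state of $\antichain$ after each individual push, rather than only at the iteration boundaries; once this intra-iteration bookkeeping is made explicit, all three conjuncts fall out from the same combination of the inductive hypothesis, Lemma~\ref{lemma:consistent}, and the guard at line~\ref{alg:improved_fdr:inclusion}.
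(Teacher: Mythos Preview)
Your proposal is correct and follows essentially the same approach as the paper: an induction on loop iterations, with the base case handled by the initial push into both $\working$ and $\antichain$, and the maintenance step driven by the guard at line~\ref{alg:improved_fdr:inclusion} together with Lemma~\ref{lemma:consistent}. Your treatment is in fact slightly more thorough than the paper's, since you explicitly track the state of $\antichain$ across \emph{each} push inside the \textbf{for} loop (the paper's maintenance argument effectively reasons about a single successor insertion and leaves the iteration over all successors implicit); this extra care is precisely what justifies uniqueness when two distinct transitions from $\impl$ lead to the same successor pair.
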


\begin{proof}
  Initially, the initial pair is added to both $\working$ and $\antichain$, and $\Done$ is empty, so the invariant holds trivially upon entry of the while loop.
  
  Maintenance.
  At line~\ref{alg:improved_fdr:pop_working} we know that $(\spec, \impl) \chainin \antichain$ from $\working \chainsetin \antichain$.
  Therefore, it holds that $(\Done \cup \{(\spec,\impl)\}) \chainsetin \antichain$ and $(\working \setminus \{(\spec, \impl)\}) \chainsetin \antichain$.
  Furthermore, from $\forall i \neq j : \working^i \neq \working^j$ it follows that $(\spec, \impl) \notin (\working \setminus \{(\spec, \impl)\})$.
  Upon executing line~\ref{alg:improved_fdr:pop_working} we may therefore conclude that
  $((\Done \cup \{(\spec,\impl)\}) \cap (\working \setminus \{(\spec, \impl)\})) = \emptyset$.
        
  Next, notice that as a result of condition $(\spec', \impl') \chainnotin \antichain$ on line~\ref{alg:improved_fdr:inclusion}, we have $(\spec', \impl') \notin (\Done \cup \working)$.
  Let $\working' = \{(\spec', \impl')\} \cup (\working \setminus \{(\spec, \impl)\})$ and $\Done' = \{(\spec,\impl)\} \cup \Done$.
  From the fact that $(\spec', \impl') \notin \working$ it follows that $\forall i \neq j : \working'^i \neq \working'^j$ holds true.
  At line~\ref{alg:improved_fdr:add-antichain} the $(\spec', \impl')$ pair is added to $\antichain$ and Lemma~\ref{lemma:consistent} ensures that $(\Done' \cup \working') \chainsetin (\antichain \chaininsert (\spec', \impl'))$ holds.
  Finally, from $((\Done \cup \{(\spec,\impl)\}) \cap (\working \setminus \{(\spec, \impl)\})) = \emptyset$ we can also conclude that $(\Done' \cap \working') = \emptyset$. 
\end{proof}

\noindent Finally, we need to show that the elements in $\Done$ and $\working$ are bounded by the state space of the product $\normalize(\lts_1) \product \lts_2$.
 
\begin{lem}\label{lemma:reachable}
  The invariant $(\Done \cup \working) \subseteq \reachable(\normalize(\lts_1) \product \lts_2)$ holds in the while loop (lines~\ref{alg:improved_fdr:while-working}-\ref{alg:improved_fdr:while-working-end}) of Algorithm~\ref{alg:improved_fdr}.
\end{lem}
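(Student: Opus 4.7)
The plan is to prove the invariant by induction on the number of iterations of the while loop of Algorithm~\ref{alg:improved_fdr}, exploiting the fact that every pair added to $\working$ is a successor (in $\normalize(\lts_1) \product \lts_2$) of some pair that was already known to be reachable.

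For the base case, at the moment of entering the loop we have $\Done = \emptyset$ and $\working = \{(\{s \in \states_1 \mid \init_1 \weaktransition{\emptytrace}_1 s\}, \init_2)\}$. By Definitions~\ref{def:normalise} and~\ref{def:product}, the sole element of $\working$ is exactly the initial state of $\normalize(\lts_1) \product \lts_2$, which is trivially reachable. Hence the invariant holds upon entry.

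For the inductive step, assume the invariant holds at the start of an iteration and consider the body. Popping $(\spec, \impl)$ from $\working$ at line~\ref{alg:improved_fdr:pop_working} and later inserting it into $\Done$ at line~\ref{alg:improved_fdr:update_done} preserves the invariant, since $(\spec,\impl)$ was already known to be reachable. The only other change to $\working$ occurs at line~\ref{alg:improved_fdr:push_working}, where a pair $(\spec', \impl')$ is pushed. I would argue that $(\spec, \impl) \transition{a} (\spec', \impl')$ is a transition of $\normalize(\lts_1) \product \lts_2$, so that reachability of $(\spec', \impl')$ follows from reachability of $(\spec, \impl)$ by transitivity. Two cases need to be checked against Definition~\ref{def:product}: if $a = \tau$ then $\spec' = \spec$ and the transition $\impl \transition{\tau}_2 \impl'$ yields $(\spec, \impl) \transition{\tau} (\spec, \impl')$ directly; if $a \in \actions$, then $\spec'$ is computed as $\{s' \mid \exists s \in \spec : s \weaktransition{a}_1 s'\}$, which matches Definition~\ref{def:normalise} provided $\neg \diverges{\spec}$ holds --- and this condition is precisely ensured by the test on line~\ref{alg:improved_fdr:spec} that guards the exploration block, so that $\spec \transition{a} \spec'$ in $\normalize(\lts_1)$ and thus $(\spec, \impl) \transition{a} (\spec', \impl')$ in the product.

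The main obstacle, though really a matter of bookkeeping rather than depth, is aligning the case analysis in the algorithm with the two-rule definition of the product and verifying that the guard on line~\ref{alg:improved_fdr:spec} is exactly what is needed to license the normal-form transition of Definition~\ref{def:normalise}. Once this correspondence is in place, the invariant follows routinely, noting additionally that the $\spec' = \emptyset$ branch returns \emph{false} and thus never contributes to $\working$, and that the guard on line~\ref{alg:improved_fdr:inclusion} only further restricts which pairs are pushed but never pushes a pair that is not a product successor of $(\spec, \impl)$.
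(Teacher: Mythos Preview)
Your proposal is correct and follows essentially the same approach as the paper's proof: an initialisation/maintenance argument where the base case uses that the initial pair is the initial state of $\normalize(\lts_1) \product \lts_2$, and the maintenance step checks that the two branches at lines~\ref{alg:improved_fdr:case_tau} and~\ref{alg:improved_fdr:case_event} match the two rules of Definition~\ref{def:product}, with the guard on line~\ref{alg:improved_fdr:spec} supplying the non-divergence side condition required by Definition~\ref{def:normalise}. Your extra remarks about the $\spec' = \emptyset$ branch and the antichain guard on line~\ref{alg:improved_fdr:inclusion} are harmless bookkeeping that the paper leaves implicit.
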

\begin{proof}
  Initially, the pair $(\{s \in \states_1 \mid \init_1 \weaktransition{\emptytrace}_1 s\}, \init_2)$ is reachable by the empty trace because this pair is the initial state of $\normalize(\lts_1) \product \lts_2$ by definition. 
  Therefore, $\working$, which only consists of this pair, contains pairs that are reachable as well.
  Moreover, $\Done$ is empty, so the invariant holds upon entry of the while loop.
  
  Maintenance. 
  Let $(\spec,\impl)$ be a pair that is popped from $\working$ and assume that $\diverges{\tostates{\spec}_{\lts_1}}$ does not hold.
  Note that, by our invariant, there is a trace $\trace \in \actionstau^*$, such that $\init \tracetransition{\trace} (\spec, \impl)$.
  At line~\ref{alg:improved_fdr:transition} the outgoing transition $(\impl, a, \impl')$ is an element of $\transitions_2$.
  Line~\ref{alg:improved_fdr:case_tau} corresponds exactly to the first case of the product definition (Def.~\ref{def:product}).
  Similarly, line~\ref{alg:improved_fdr:case_event} corresponds exactly to the second case of the product definition where $(\spec, a, \spec')$ is a transition in $\normalize(\lts_1)$ because $\diverges{\tostates{\spec}_{\lts_1}}$ does not hold.
  As such, there is a transition $(\spec, \impl) \transition{a} (\spec', \impl')$ in the product LTS.
  By definition of a trace and the definition of reachable this means that $(\working \cup \{(\spec', \impl')\}) \subseteq \reachable(\normalize(\lts_1) \product \lts_2)$.
  From the observation that $(\spec, \impl)$ was reachable we can conclude that $(\Done \cup \{(\spec, \impl)\})$ is a subset of the reachable states as well.
\end{proof}

\begin{thm}\label{theorem:termination}
  Algorithm~\ref{alg:improved_fdr} terminates for finite state, finitely branching LTSs.
\end{thm}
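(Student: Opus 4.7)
The plan is to show termination by combining the invariants established in Lemmas~\ref{lemma:disjoint} and~\ref{lemma:reachable} with a finiteness argument on the product state space. I will first observe that each iteration of the outer while loop either causes an immediate return (at which point termination is trivial), or proceeds to pop exactly one pair $(\spec,\impl)$ from $\working$ and, via the ghost update on line~\ref{alg:improved_fdr:update_done}, extend $\Done$ with that pair. Therefore, to bound the number of iterations, it suffices to bound the total number of distinct pairs that can ever be added to $\Done$.

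Next I would combine the two main invariants. By Lemma~\ref{lemma:disjoint}, the sets $\Done$ and $\working$ are disjoint throughout the loop, and together they are contained in $\antichain$. Crucially, this disjointness, together with the fact that a popped pair is placed into $\Done$ and never re-enters $\working$ (new pairs pushed on line~\ref{alg:improved_fdr:push_working} are required by line~\ref{alg:improved_fdr:inclusion} to fail the $\antichain$-membership test, and by invariant~\ref{invariant:working} all elements of $\Done$ are $\chainin \antichain$, so those pushed pairs are in particular not in $\Done$), guarantees that $|\Done|$ strictly increases at every iteration that does not return. By Lemma~\ref{lemma:reachable}, we have $\Done \subseteq \reachable(\normalize(\lts_1) \product \lts_2)$ throughout.

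It then remains to argue that $\reachable(\normalize(\lts_1) \product \lts_2)$ is finite. Since $\lts_1$ is finite state, the state space $\powerset(\states_1)$ of $\normalize(\lts_1)$ is finite; since $\lts_2$ is also finite state, the product state space $\powerset(\states_1) \times \states_2$ is finite, and the reachable portion is a subset of it. Consequently, $|\Done|$ is bounded above by a constant, so the outer while loop executes finitely many times. Finally, I would observe that each individual iteration performs only finitely much work: the divergence tests on lines~\ref{alg:improved_fdr:spec} and~\ref{alg:improved_fdr:divergences} are decidable on finite state, finitely branching LTSs, the refusal comparison on line~\ref{alg:improved_fdr:refusals} involves finite sets, and the inner for loop on line~\ref{alg:improved_fdr:transition} ranges over the finitely many outgoing transitions of $\impl$ by finite branching of $\lts_2$.

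The main conceptual obstacle is making precise that a popped pair cannot re-enter $\working$; I would address this by explicitly chaining the inclusions $\Done \chainsetin \antichain$ (from invariant~\ref{invariant:working}) with the strict freshness test $(\spec',\impl') \chainnotin \antichain$ at line~\ref{alg:improved_fdr:inclusion}, which rules out any pair that is $\chainin \antichain$, in particular any element of $\Done$ itself. Everything else is a routine finiteness argument.
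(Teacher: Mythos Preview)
Your proposal is correct and follows essentially the same approach as the paper: both arguments use Lemma~\ref{lemma:reachable} to bound $\Done$ by the finite set $\reachable(\normalize(\lts_1)\product\lts_2)$, use the disjointness $\Done\cap\working=\emptyset$ from Lemma~\ref{lemma:disjoint} to conclude that $\Done$ strictly grows each non-returning iteration, and note that the inner for-loop is bounded by finite branching. Your write-up is simply more detailed (spelling out why popped pairs cannot re-enter $\working$ and why the per-iteration work is finite), but the structure and key lemmas are the same.
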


\begin{proof}
  The inner for-loop is bounded as the number of outgoing transitions $\transitions_2$ is finite. 
  The total number of state pairs in $\normalize(\lts_1) \product \lts_2$ is finite since $\states_1$ and $\states_2$ are finite.
  From Lemma~\ref{lemma:reachable} it follows that $\Done$ is a subset of the reachable state pairs.
  Furthermore, as $(\Done \cap \working) = \emptyset$ by Lemma~\ref{lemma:disjoint} we conclude that $\Done$ strictly increases with every iteration.
  So, only a finite number of iterations of the while loop are possible.
\end{proof}

\noindent Note that these observations give an upper bound on the number of states that can be explored.
Especially the absence of duplicates in $\working$ and the maximisation of $\antichain$ following from $(\Done \cup \working) \chainsetin \antichain$ do not hold for Algorithm~\ref{alg:original_tr},~\ref{alg:original_sfr} and~\ref{alg:original_fdr}, as already observed in Example~\ref{example:badcase}.

The remainder of this section is dedicated to proving the partial correctness of Algorithm~\ref{alg:improved_fdr}, \viz, that when it terminates, the algorithm correctly decides failures-divergences refinement.
We first revisit the claim we made in Section~\ref{section:original-algorithm}; before we restate and prove this claim, we prove a simplified version thereof in the next lemma.

\begin{lem}\label{lemma:subset_transitions}
  For all states $(U, s), (V, s)$ of $\normalize(\lts_1) \product \lts_2$ satisfying $(U, s) \statesubset (V, s)$ and actions $a \in \actionstau$ such that $(V, s) \transition{a} (V', t)$ there is a state $(U', t)$ such that $(U, s) \transition{a} (U', t)$ and $(U', t) \statesubset (V', t)$. 
\end{lem}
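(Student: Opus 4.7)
The plan is to prove the lemma by case analysis on the action $a$, leveraging the two cases in the product definition (Def.~\ref{def:product}) together with the explicit description of the transition relation in the failures-divergences normal form (Def.~\ref{def:normalise}). Throughout, I rely on the fact that $(U,s) \statesubset (V,s)$ unfolds to $\tostates{U}_{\lts_1} \subseteq \tostates{V}_{\lts_1}$.

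In the case $a = \tau$, the only way to derive $(V,s) \transition{\tau} (V',t)$ in the product is via $V' = V$ and $s \transition{\tau}_2 t$. I would then pick $U' = U$. The same product rule immediately yields $(U,s) \transition{\tau} (U,t)$, and the ordering $(U,t) \statesubset (V,t)$ holds because $\tostates{U}_{\lts_1} \subseteq \tostates{V}_{\lts_1}$ is inherited from the assumption.

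In the case $a \in \actions$, the derivation of $(V,s) \transition{a} (V',t)$ must use the second clause of the product definition, so $s \transition{a}_2 t$ and $V \transition{a}' V'$ in $\normalize(\lts_1)$. By Definition~\ref{def:normalise} this means no state in $\tostates{V}_{\lts_1}$ diverges and $V' = \{r \in \states_1 \mid \exists q \in V : q \weaktransition{a} r\}$. I would then define $U' = \{r \in \states_1 \mid \exists q \in U : q \weaktransition{a} r\}$. Since $\tostates{U}_{\lts_1} \subseteq \tostates{V}_{\lts_1}$ and no state of the larger set diverges, no state of $\tostates{U}_{\lts_1}$ diverges either, so Definition~\ref{def:normalise} indeed yields a transition $U \transition{a}' U'$, hence $(U,s) \transition{a} (U',t)$ in the product. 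The inclusion $U' \subseteq V'$ follows by a one-line witness-chasing argument: any $r \in U'$ arises from some $q \in U \subseteq V$ with $q \weaktransition{a} r$, so $r \in V'$; hence $(U',t) \statesubset (V',t)$.

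There is no real obstacle here: the lemma is essentially a monotonicity property of the normal-form transition rule under subset-inclusion of the specification component, and the only subtlety is to check that the divergence side condition on $\transitions'$ is preserved when passing from $V$ to a subset $U$, which follows immediately from the definition of $\diverges{\cdot}$ on sets of states.
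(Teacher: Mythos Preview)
Your proposal is correct and follows essentially the same approach as the paper's proof: a case split on whether $a = \tau$ or $a \in \actions$, with $U' := U$ in the first case and $U' := \{r \in \states_1 \mid \exists q \in U : q \weaktransition{a}_1 r\}$ in the second, using the monotonicity of the divergence predicate under subset inclusion to ensure the normal-form transition from $U$ exists. The paper's proof is virtually identical in structure and detail.
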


\begin{proof}
  Let $\normalize(\lts_1) \product \lts_2 = (\states, \init, \transitions)$ and let $\normalize(\lts_1) = (\states_1', \init_1', \transitions_1')$.
  Take any two state pairs such that $(U, s) \statesubset (V, s)$.
  Pick an arbitrary pair $(V', t) \in \states$ and action $a \in \actionstau$ such that $(V, s) \transition{a} (V', t)$.
  Now there are two cases to distinguish:
  
  \begin{itemize}
  \item Case $a = \tau$.
    Then a transition $s \transition{\tau}_2 t$ exists and $V = V'$.
    Therefore, there is also a transition $(U, s) \transition{\tau} (U', t)$ and $U = U'$.
    By the assumption that $(U, s) \statesubset (V, s)$ we know that $(U', t) \statesubset (V', t)$.
  
  \item Case $a \neq \tau$.
    Then there are transitions $V \transition{a}_1' V'$ and $s \transition{a}_2 t$.
    The normalisation has, by definition, transition $V \transition{a}_1' V'$ if and only if $V' = \{v' \in \states_1 \mid \exists v \in \tostates{V}_{\lts_1} : v \weaktransition{a}_1 v'\}$ and not $\diverges{\tostates{V}_{\lts_1}}$.
    Let $U'$ be equal to $\{u' \in \states_1 \mid \exists u \in \tostates{U}_{\lts_1} : u \weaktransition{a}_1 u'\}$.
    From $\tostates{U}_{\lts_1} \subseteq \tostates{V}_{\lts_1}$ it follows that $\tostates{U'}_{\lts_1} \subseteq \tostates{V'}_{\lts_1}$.
    Furthermore, as $\diverges{\tostates{V}_{\lts_1}}$ does not hold it follows that not $\diverges{\tostates{U}_{\lts_1}}$.    
    Therefore, $U \transition{a}_1' U'$ exists and $(U, s) \transition{a} (U', t)$ is a transition in the product with $(U', t) \statesubset (V', t)$.\qedhere
  \end{itemize}
\end{proof}

\noindent We are now in a position to formally prove the claim that we made in Section~\ref{section:original-algorithm}.
For convenience, we repeat the claim as a proposition below.

\begin{prop}\label{lemma:subset_trace}
  For all states $(U, s), (V, s)$ of $\normalize(\lts_1) \product \lts_2$ satisfying $(U, s) \statesubset (V, s)$ and for every sequence $\trace \in \actionstau^*$ such that $(V, s) \tracetransition{\trace} (V', t)$ there is a state $(U',t)$ such that  $(U, s) \tracetransition{\trace} (U', t)$ and $(U', t) \statesubset (V', t)$.
\end{prop}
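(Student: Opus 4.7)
The plan is to prove Proposition~\ref{lemma:subset_trace} by straightforward induction on the length of the sequence $\trace \in \actionstau^*$, using Lemma~\ref{lemma:subset_transitions} as the single-step workhorse. The intuition is that Lemma~\ref{lemma:subset_transitions} already establishes that the $\statesubset$-ordering is preserved under individual transitions; the present statement simply iterates this observation along a trace.

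For the base case, I would take $\trace = \emptytrace$. Then the definition of $\tracetransition{\emptytrace}$ forces $(V, s) = (V', t)$, so choosing $(U', t) \gets (U, s)$ yields $(U, s) \tracetransition{\emptytrace} (U, s)$ by definition, and the hypothesis $(U, s) \statesubset (V, s)$ immediately gives $(U', t) \statesubset (V', t)$.

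For the inductive step, assume the statement for all sequences of length $n$, and consider a sequence $\trace\,a$ of length $n+1$ with $(V, s) \tracetransition{\trace\,a} (V', t)$. By the generalised transition definition, there is an intermediate state $(V'', u)$ such that $(V, s) \tracetransition{\trace} (V'', u) \transition{a} (V', t)$. The induction hypothesis, applied to $\trace$ and the pair $(U, s) \statesubset (V, s)$, supplies a pair $(U'', u)$ with $(U, s) \tracetransition{\trace} (U'', u)$ and $(U'', u) \statesubset (V'', u)$. Applying Lemma~\ref{lemma:subset_transitions} to this matched pair and the transition $(V'', u) \transition{a} (V', t)$ then yields a pair $(U', t)$ with $(U'', u) \transition{a} (U', t)$ and $(U', t) \statesubset (V', t)$. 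Concatenation gives $(U, s) \tracetransition{\trace\,a} (U', t)$, as required.

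There is no real obstacle here: the argument is a textbook induction, and all the work has already been done in Lemma~\ref{lemma:subset_transitions}, which handles the non-trivial interplay between the normalisation's divergence-sensitive transition condition and the $\statesubset$-ordering (namely, that $\diverges{\tostates{U}_{\lts_1}}$ does not hold whenever $\diverges{\tostates{V}_{\lts_1}}$ does not hold and $\tostates{U}_{\lts_1} \subseteq \tostates{V}_{\lts_1}$). The only thing worth double-checking is the direction of the subset-condition in the definition of $\tracetransition{\cdot}$: the inductive clause is $s \tracetransition{\trace\,a} t$ iff $s \tracetransition{\trace} u$ and $u \transition{a} t$ for some $u$, which is exactly the decomposition I rely on above.
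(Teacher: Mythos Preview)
Your proposal is correct and follows essentially the same approach as the paper: induction on the length of $\trace$, with the base case handled by reflexivity of $\tracetransition{\emptytrace}$ and the inductive step by decomposing $\trace\,a$ and invoking Lemma~\ref{lemma:subset_transitions} for the final transition. The paper's proof is virtually identical, differing only in variable naming.
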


\begin{proof}
  Let $\normalize(\lts_1) \product \lts_2 = (\states, \init, \transitions)$.
  The proof is by induction on the length of sequences in $\actionstau^*$.
    
  Base case.
  Take two pairs $(U, s), (V, s) \in \states$ satisfying $(U, s) \statesubset (V, s)$.
  The empty trace can only reach $(U, s) \tracetransition{\emptytrace} (U, s)$; similarly, we have $(V,s) \tracetransition{\emptytrace} (V,s)$.
  Therefore,  $(U, s) \statesubset (V, s)$ follows by assumption.
    
  Inductive step.
  Suppose that the statement holds for all sequences in $\actionstau^*$ of length $i$ and take a sequence $\trace \in \actionstau^*$ of length $i$.
  Take arbitrary states $(V, s), (V'', r) \in \states$ and action $a \in \actionstau$ such that $(V, s) \tracetransition{\trace\,a} (V'', r)$.
  Then there is a state $(V', t) \in \states$ such that  $(V, s) \tracetransition{\trace} (V', t)$ and $(V', t) \transition{a} (V'', r)$.  
  From the induction hypothesis it follows that for all $(U, s) \statesubset (V, s)$ there is a state $(U', t) \statesubset (V', t)$ such that $(U, s) \tracetransition{\trace} (U', t)$.
  By Lemma~\ref{lemma:subset_transitions} and the existence of $(V', t) \transition{a} (V'', r)$ there is a state $(U'', r) \statesubset (V'', r)$ such that $(U', t) \transition{a} (U'', r)$.
  We thus conclude that $(U, s) \tracetransition{\trace\,a} (U'', r)$.\qedhere
\end{proof}

\noindent The correctness arguments of Algorithm~\ref{alg:improved_fdr} furthermore require a lemma showing the anti-monotonicity of FD-witnesses.
Such a result is needed because the antichain algorithms may explore only part of the reachable state space of a product.
The anti-monotonicity property helps to show, however, that the part that \emph{is} explored contains all relevant information.

\begin{lem}\label{lemma:fdrwitnes_subset}
  For all states $(U, s), (V, s)$ of $\normalize(\lts_1) \product \lts_2$ satisfying $(U, s) \statesubset (V, s)$ it holds that if $(V, s)$ is an FD-witness then $(U, s)$ is an FD-witness.
\end{lem}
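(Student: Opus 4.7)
The plan is to proceed by a direct case analysis on which of the three disjuncts in the definition of an FD-witness is satisfied by $(V, s)$, exploiting the fact that by the definition of $\statesubset$ we have $\tostates{U}_{\lts_1} \subseteq \tostates{V}_{\lts_1}$. The whole argument is essentially a monotonicity argument: each of the three disjuncts, as well as the non-divergence side-condition, behaves well under taking subsets of the normal form component.

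First I would observe that, since divergence of a set is witnessed by divergence of some element, the predicate $\diverges{\cdot}$ is monotonic with respect to set inclusion. Hence from $\neg\diverges{\tostates{V}_{\lts_1}}$ and $\tostates{U}_{\lts_1} \subseteq \tostates{V}_{\lts_1}$ we immediately obtain $\neg\diverges{\tostates{U}_{\lts_1}}$, which discharges the side-condition required for $(U,s)$ to be an FD-witness. Then I would split on the disjunct witnessing $(V,s)$: if $V = \emptyset$, then $\tostates{U}_{\lts_1} \subseteq \tostates{V}_{\lts_1} = \emptyset$ forces $U = \emptyset$; if $\diverges{s}$, then this condition depends only on $s$, which is shared between the two states, so it carries over verbatim.

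The remaining case, where $\stable(s)$ and $\refusals(s) \not\subseteq \refusals(\tostates{V}_{\lts_1})$, is the one that requires a small additional step, but is still routine. I would establish (either inline or as a one-line sub-observation) the monotonicity of $\refusals(\cdot)$ on sets of states: if $A \subseteq B$ then $\refusals(A) \subseteq \refusals(B)$, which follows directly from Definition~\ref{def:refusals} since any witness $s \in A$ for $X \in \refusals(A)$ is also in $B$. Applying this with $A = \tostates{U}_{\lts_1}$ and $B = \tostates{V}_{\lts_1}$, any inclusion $\refusals(s) \subseteq \refusals(\tostates{U}_{\lts_1})$ would propagate to $\refusals(s) \subseteq \refusals(\tostates{V}_{\lts_1})$, contradicting the hypothesis. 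Hence $\refusals(s) \not\subseteq \refusals(\tostates{U}_{\lts_1})$, and combined with $\stable(s)$ this shows $(U,s)$ also satisfies the second disjunct.

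I do not anticipate a genuine obstacle here; the only subtlety is making sure the direction of the monotonicity of $\refusals$ lines up correctly (smaller set of states gives a smaller set of refusals, so non-inclusion propagates from the larger to the smaller set). Everything else is bookkeeping over the definition of FD-witness and $\statesubset$.
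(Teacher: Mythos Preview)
Your proposal is correct and follows essentially the same approach as the paper's own proof: both first use monotonicity of $\diverges{\cdot}$ on sets to discharge the side-condition, then use monotonicity of $\refusals(\cdot)$ on sets together with a case split over the three disjuncts in the definition of an FD-witness. The paper is slightly more terse, but the structure and the key observations (in particular that $\tostates{U}_{\lts_1}\subseteq\tostates{V}_{\lts_1}$ implies $\refusals(\tostates{U}_{\lts_1})\subseteq\refusals(\tostates{V}_{\lts_1})$) are identical.
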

\begin{proof}
  Take arbitrary states $(U, s), (V, s)$ of $\normalize(\lts_1) \product \lts_2$ satisfying $(U, s) \statesubset (V, s)$ and let $(V, s)$ be an FD-witness.
  It follows that $\diverges{\tostates{V}_{\lts_1}}$ does not hold and one of the following holds: $V = \emptyset$ or $\stable(s) \land \refusals(s) \nsubseteq \refusals(\tostates{V}_{\lts_1})$ or $\diverges{s}$.
  By monotonicity, $\tostates{U}_{\lts_1} \subseteq \tostates{V}_{\lts_1}$ implies $\refusals(\tostates{U}_{\lts_1}) \subseteq \refusals(\tostates{V}_{\lts_1})$, and not $\diverges{\tostates{V}_{\lts_1}}$ implies not $\diverges{\tostates{U}_{\lts_1}}$.
  Now, $(U, s)$ is an FD-witness, because $\diverges{\tostates{U}_{\lts_1}}$ does not hold and if $V = \emptyset$ then $U = \emptyset$, or if $\stable(s) \land \refusals(s) \nsubseteq \refusals(\tostates{V}_{\lts_1})$ then $\stable(s) \land \refusals(s) \nsubseteq \refusals(\tostates{U}_{\lts_1})$, or $\diverges{s}$.
\end{proof}

\begin{cor}\label{lemma:fdrwitness_trace}
  For all states $(U, s), (V, s)$ of $\normalize(\lts_1) \product \lts_2$ where $(U, s) \statesubset (V, s)$ and for every sequence $\trace \in \actionstau^*$ it holds that if $(V, s)$ can reach an FD-witness with $\trace$ then $(U, s)$ can reach an FD-witness with $\trace$ as well.
\end{cor}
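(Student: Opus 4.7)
The plan is to obtain this corollary by directly composing the two immediately preceding results: Proposition~\ref{lemma:subset_trace} (which lifts the simulation-style property from single transitions to arbitrary traces) and Lemma~\ref{lemma:fdrwitnes_subset} (which provides the anti-monotonicity of the FD-witness predicate with respect to $\statesubset$). Neither step requires revisiting the underlying definitions of divergence, refusals, or the product construction, so the proof should be essentially a one-line argument once the two ingredients are chained correctly.

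Concretely, I would proceed as follows. Fix states $(U,s), (V,s)$ of $\normalize(\lts_1) \product \lts_2$ with $(U,s) \statesubset (V,s)$, a sequence $\trace \in \actionstau^*$, and assume that $(V,s)$ can reach an FD-witness via $\trace$; that is, there exists a state $(V', t)$ such that $(V, s) \tracetransition{\trace} (V', t)$ and $(V', t)$ is an FD-witness. By Proposition~\ref{lemma:subset_trace}, there exists a state $(U', t)$ satisfying $(U, s) \tracetransition{\trace} (U', t)$ and $(U', t) \statesubset (V', t)$. Applying Lemma~\ref{lemma:fdrwitnes_subset} to $(U',t)$ and $(V',t)$, the FD-witness status transfers: $(U',t)$ is itself an FD-witness. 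Hence $(U,s)$ reaches an FD-witness via $\trace$, which is what we needed.

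There is no real obstacle here; the whole point of having stated Proposition~\ref{lemma:subset_trace} and Lemma~\ref{lemma:fdrwitnes_subset} separately is that the corollary becomes a trivial composition. The only thing to be careful about is that the second component of the reached pair is genuinely the same $t$ in both executions (which is precisely what Proposition~\ref{lemma:subset_trace} guarantees), so that Lemma~\ref{lemma:fdrwitnes_subset}, whose statement requires equality of the implementation component, is applicable without modification.
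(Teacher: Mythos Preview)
Your proof is correct and matches the paper's own argument essentially line for line: the paper also fixes $(U,s) \statesubset (V,s)$ and an FD-witness $(V',t)$ reached via $\trace$, invokes Proposition~\ref{lemma:subset_trace} to obtain $(U',t) \statesubset (V',t)$ with $(U,s) \tracetransition{\trace} (U',t)$, and then applies Lemma~\ref{lemma:fdrwitnes_subset} to conclude that $(U',t)$ is an FD-witness.
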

\begin{proof}
  Let $\normalize(\lts_1) \product \lts_2 = (\states, \init, \transitions)$.
  Take arbitrary states $(U, s), (V, s) \in \states$ satisfying $(U, s) \statesubset (V, s)$.
  Let $(V', t)$ be an FD-witness and $\trace \in \actionstau^*$ a trace such that $(V, s) \tracetransition{\trace} (V', t)$.
  By Lemma~\ref{lemma:subset_trace} there is a pair $(U', t) \statesubset (V', t)$ such that $(U, s) \tracetransition{\trace} (U', t)$.
  From Lemma~\ref{lemma:fdrwitnes_subset} it follows that state $(U', t)$ is an FD-witness.
\end{proof}

\noindent For a set of states $S'$ of $\normalize(\lts_1) \product \lts_2$, let $\FDR(S')$ be the predicate that is true if and only if $S'$ contains an FD-witness. 
For a state $s$ in the product, we define the \emph{distance} to a set of states $S'$ of the product as the \emph{shortest} distance from state $s$ to a state in $S'$.
If $S'$ is unreachable, the distance is set to infinity.
Formally, $\Dist_{S'}(s) = \min\{\length{\trace} \mid \trace \in \traces(\lts) \land t \in {S'} \land s \tracetransition{\trace} t\}$, where $\min\{\emptyset\}$ is defined as $\infty$.
For a set of states $S''$, let $\Dist_{S'}(S'')$ denote the shortest distance among all states in $S''$, formally $\Dist_{S'}(S'') = \min\{\Dist_{S'}(s) \mid s \in S'' \}$.
We denote the set of all reachable FD-witnesses in the product $\normalize(\lts_1) \product \lts_2$ by $\witnesses$.

\begin{lem}\label{lemma:fdrwitness_distance}
  For all states $(U, s), (V, s)$ of $\normalize(\lts_1) \product \lts_2$ satisfying $(U, s) \statesubset (V, s)$ it holds that $\Dist_\witnesses((U, s)) \leq \Dist_\witnesses((V, s))$.
\end{lem}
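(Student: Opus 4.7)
The plan is to derive the distance inequality directly from Corollary~\ref{lemma:fdrwitness_trace}, which already carries out the heavy lifting by transporting, along the ordering $\statesubset$, any trace that reaches an FD-witness. Concretely, I would split on whether $\Dist_\witnesses((V, s))$ is finite.

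First, consider the trivial case $\Dist_\witnesses((V, s)) = \infty$. Then the desired inequality $\Dist_\witnesses((U, s)) \leq \Dist_\witnesses((V, s))$ holds vacuously, regardless of $\Dist_\witnesses((U, s))$, since $\infty$ is the maximum element in the extended order used for $\Dist$.

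Second, suppose $\Dist_\witnesses((V, s)) = n$ for some $n \in \naturalnumbers$. By definition of $\Dist$ as a minimum, there exist a trace $\trace \in \actionstau^*$ with $\length{\trace} = n$ and an FD-witness $(V', t) \in \witnesses$ with $(V, s) \tracetransition{\trace} (V', t)$. Applying Corollary~\ref{lemma:fdrwitness_trace} to $(U, s) \statesubset (V, s)$ and the trace $\trace$ yields an FD-witness $(U', t')$ (necessarily reachable in the product, since $(U, s)$ is reachable by the invariant of the exploration, though for the statement of this lemma all we need is that the constructed witness lies in $\witnesses$) with $(U, s) \tracetransition{\trace} (U', t')$. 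Because the trace from $(U, s)$ to an FD-witness has length $n$, the minimum over all such traces from $(U, s)$ satisfies $\Dist_\witnesses((U, s)) \leq n = \Dist_\witnesses((V, s))$.

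I do not anticipate any real obstacle here, since Corollary~\ref{lemma:fdrwitness_trace} already packages both the trace-lifting (Proposition~\ref{lemma:subset_trace}) and the witness preservation (Lemma~\ref{lemma:fdrwitnes_subset}). The only minor subtlety is taking care with the $\min$ over the empty set when $(V, s)$ reaches no FD-witness; handling that case first cleanly avoids any issue with $\infty$-arithmetic.
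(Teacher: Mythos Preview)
Your proposal is correct and follows essentially the same route as the paper: both arguments invoke Corollary~\ref{lemma:fdrwitness_trace} to transport a shortest witness-reaching trace from $(V,s)$ to $(U,s)$ and conclude the distance inequality (the paper just leaves the $\Dist_\witnesses((V,s)) = \infty$ case implicit). One small remark: your parenthetical about $(U,s)$ being ``reachable by the invariant of the exploration'' is out of place here, since the lemma is stated for arbitrary states of the product independently of the algorithm; you can simply drop that aside without affecting the argument.
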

\begin{proof}
  Let $\normalize(\lts_1) \product \lts_2 = (\states, \init, \transitions)$.
  Take arbitrary states $(U, s), (V, s) \in \states$ satisfying $(U, s) \statesubset (V, s)$.
  From Corollary~\ref{lemma:fdrwitness_trace} it follows that if $(V, s)$ can reach an FD-witness by the shortest trace $\trace$ then $(U, s)$ can also reach an FD-witness with trace $\trace$, which by definition means that $\Dist_\witnesses((U, s)) \leq \Dist_\witnesses((V, s))$.
\end{proof}

\noindent The last lemma implies that whenever a pair is removed from the $\antichain$ due to an insertion of a smaller pair, the inserted (smaller) state pair has a shorter or equal distance to its closest FD-witness.
This property can be used to show that the algorithm always closes in on an FD-witness during exploration and that pruning parts of the state space does not remove essential FD-witnesses from the reachable states.
The latter property is captured by the following lemmas.

\begin{lem}\label{lemma:diverging_spec}
  For all states $(U, s)$ of $\normalize(\lts_1) \product \lts_2$ it holds that if $\diverges{\tostates{U}_{\lts_1}}$ then $\Dist_\witnesses((U, s))$ is $\infty$.
\end{lem}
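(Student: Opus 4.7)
The plan is to observe that whenever $\diverges{\tostates{U}_{\lts_1}}$ holds, the normal form state $U$ is a ``dead end'' in $\normalize(\lts_1)$, and therefore the $U$-component of $(U,s)$ is frozen along every outgoing path of the product. Since the definition of an FD-witness explicitly excludes states with diverging first component, no successor of $(U,s)$ can be an FD-witness, yielding $\Dist_\witnesses((U,s))=\infty$.

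Concretely, I would first invoke Definition~\ref{def:normalise}: the transition $U \transition{a}' V$ in $\normalize(\lts_1)$ requires $\neg (\exists t \in \tostates{U}_{\lts_1} : \diverges{t})$, so the assumption $\diverges{\tostates{U}_{\lts_1}}$ rules out every outgoing transition from $U$ in $\normalize(\lts_1)$. Next I would appeal to Definition~\ref{def:product}: the only transitions leaving $(U,s)$ in $\normalize(\lts_1) \product \lts_2$ are the $\tau$-transitions $(U,s) \transition{\tau} (U, s')$ induced by $s \transition{\tau}_2 s'$, since every $a$-transition with $a \in \actions$ would require an $a$-transition from $U$ in the normal form. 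A straightforward induction on the length of a trace $\trace$ then shows that whenever $(U,s) \tracetransition{\trace} (U', s')$, we have $U' = U$ (and $\trace \in \{\tau\}^*$).

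It remains to argue that no such reachable pair $(U, s')$ is an FD-witness. By definition, an FD-witness requires $\neg \diverges{\tostates{\cdot}_{\lts_1}}$ on its first component; since the first component remains $U$ along every reachable path, and $\diverges{\tostates{U}_{\lts_1}}$ holds by assumption, no reachable successor of $(U,s)$ satisfies the FD-witness condition. Hence $\witnesses$ contains no state reachable from $(U,s)$, so by the convention $\min \emptyset = \infty$ we conclude $\Dist_\witnesses((U,s)) = \infty$.

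There is no serious obstacle here: the entire argument is driven by unfolding Definitions~\ref{def:product} and~\ref{def:normalise} together with the defining clause of FD-witness. The only care required is to make the propagation of the ``frozen first component'' precise, which is handled by the short induction sketched above.
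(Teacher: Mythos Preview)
Your proposal is correct and follows essentially the same approach as the paper: both observe that a diverging $U$ has no outgoing transitions in $\normalize(\lts_1)$, so only $\tau$-steps from $\lts_2$ are possible in the product, the first component stays fixed at $U$, and therefore no reachable state can satisfy the non-divergence clause of an FD-witness. The only cosmetic difference is that the paper appeals to Proposition~\ref{lemma:product_traces} where you do a short direct induction.
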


\begin{proof}
  Let $\normalize(\lts_1) \product \lts_2 = (\states, \init, \transitions)$ and let $\normalize(\lts_1) = (\states_1', \init_1', \transitions_1')$.
  Take an arbitrary state $(U, s) \in \states$ such that $\diverges{\tostates{U}_{\lts_1}}$.
  For any action $a \in \actionstau$ and state $V \in \states_1'$ there is no 
  transition $U \transition{a}_1' V$ by definition of $\normalize$.
  Consequently, from $(U,s)$, only $\tau$-transitions due to $\lts_2$ can be taken.
  As a result, by definition of the product and Lemma~\ref{lemma:product_traces}, for any state $(V, t) \in \states$ such that $(U, s) \weaktransition{\emptytrace} (V, t)$ it holds that $U = V$.
  Thus, any reachable state $(V, t)$ also satisfies $\diverges{\tostates{V}_{\lts_1}}$ and therefore cannot be an FD-witness.
  Hence, $\Dist_\witnesses((U, s))$ is $\infty$.
\end{proof}

\begin{lem}\label{lemma:invariant_distance}
  If $\FDR(\reachable(\normalize(\lts_1) \product \lts_2))$ is true then invariant~\ref{invariant:distance} holds for every iteration of the while loop (lines~\ref{alg:improved_fdr:while-working}-\ref{alg:improved_fdr:while-working-end}) of Algorithm~\ref{alg:improved_fdr}:  
  \begin{equation}\tag{\text{II}}\label{invariant:distance}
    \Dist_\witnesses(\Done) > \Dist_\witnesses(\working) \land \Dist_\witnesses(\working) = \Dist_\witnesses(\antichain)
  \end{equation}
\end{lem}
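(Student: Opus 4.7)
The plan is to prove invariant~\ref{invariant:distance} by induction on the iterations of the while loop, leaning on invariant~\ref{invariant:working} from Lemma~\ref{lemma:disjoint}, Lemma~\ref{lemma:fdrwitness_distance} (anti-monotonicity of $\Dist_\witnesses$ with respect to $\statesubset$), Lemma~\ref{lemma:diverging_spec}, and an auxiliary invariant $\antichain \subseteq \Done \cup \working$ that I would establish in passing by inspection of the algorithm: every insertion on line~\ref{alg:improved_fdr:add-antichain} is paired with one on line~\ref{alg:improved_fdr:push_working}, and pairs only migrate from $\working$ to $\Done$. The base case is immediate: $\Done$ is empty so $\Dist_\witnesses(\Done) = \infty$, while $\Dist_\witnesses(\init) < \infty$ by the assumption that a reachable FD-witness exists, and $\init$ is the single element of both $\working$ and $\antichain$.

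For the inductive step I would write $m = \Dist_\witnesses(\working)$ and case-split on the guards of the algorithm for the popped pair $(\spec, \impl)$, setting $d = \Dist_\witnesses((\spec, \impl)) \geq m$. If $(\spec, \impl)$ is an FD-witness the algorithm returns false and there is nothing to prove. If $\diverges{\spec}$, Lemma~\ref{lemma:diverging_spec} forces $d = \infty > m$, so $(\spec, \impl)$ does not realise the minimum; consequently $\Dist_\witnesses(\working \setminus \{(\spec, \impl)\}) = m$ and $\Dist_\witnesses(\Done \cup \{(\spec, \impl)\}) = \Dist_\witnesses(\Done)$. Otherwise, the definition of $\Dist_\witnesses$ furnishes a successor $(\spec^*, \impl^*)$ at distance $d - 1$ that is computed on line~\ref{alg:improved_fdr:transition}; when $d > m$ every successor has distance at least $d - 1 \geq m$, so $\Dist_\witnesses(\working')$ stays at $m$ while $\Dist_\witnesses(\Done \cup \{(\spec, \impl)\}) > m$.

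The delicate sub-case, and the main obstacle, is $d = m$: one must show that $(\spec^*, \impl^*)$ actually enters $\working'$ rather than being silently pruned at the antichain-membership test. My argument would be that a subsumer $(U, \impl^*)$ with $U \statesubset \spec^*$ in $\antichain$ would inherit, via Lemma~\ref{lemma:fdrwitness_distance}, distance at most $m - 1$; but $\antichain \subseteq \Done \cup \working$ forces the subsumer into $\Done$ or $\working$, and either option contradicts the inductive hypothesis $\Dist_\witnesses(\Done) > m = \Dist_\witnesses(\working)$. Hence $(\spec^*, \impl^*)$ enters $\working'$, pushing $\Dist_\witnesses(\working')$ strictly below $m$ while $\Dist_\witnesses(\Done \cup \{(\spec, \impl)\})$ stays at $m$, so the strict first conjunct is preserved. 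The second conjunct then follows mechanically: $\antichain' \subseteq \Done' \cup \working'$ together with the just-established first conjunct means the minimum over $\antichain'$ cannot be attained in $\Done'$, and invariant~\ref{invariant:working} together with Lemma~\ref{lemma:fdrwitness_distance} supplies both directions of the equality $\Dist_\witnesses(\working') = \Dist_\witnesses(\antichain')$.
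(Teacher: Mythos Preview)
Your overall strategy matches the paper's: both proceed by induction on while-loop iterations, handle the base case identically, and case-split on the distance $d$ of the popped pair relative to $m = \Dist_\witnesses(\working)$. The main difference is in the critical case $d = m$: the paper states (without proof) a separate invariant for the inner for-loop, whereas you argue more directly via the auxiliary containment $\antichain \subseteq \Done \cup \working$ together with Lemma~\ref{lemma:fdrwitness_distance}. Your route is cleaner and avoids the inner invariant entirely; the use of the auxiliary containment to derive the second conjunct from the first is also a nice touch that the paper leaves implicit.

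There is, however, a small timing issue you should tighten. The containment $\antichain \subseteq \Done \cup \working$ holds at the \emph{top} of each while-iteration, but at the moment you invoke it---inside the for-loop, after $(\spec,\impl)$ has been popped and possibly after some successors have already been inserted---the correct statement is $\antichain \subseteq \Done \cup \working_{\text{current}} \cup \{(\spec,\impl)\}$, where $\working_{\text{current}}$ already contains the successors processed so far. Hence a subsumer $(U,\impl^*)$ with $\Dist_\witnesses((U,\impl^*)) \le m-1$ might be either $(\spec,\impl)$ itself or a freshly pushed successor, and in neither case does the inductive hypothesis by itself give a contradiction. Both sub-cases are easily closed: if the subsumer is $(\spec,\impl)$ then $d \le m-1 < m = d$, which is absurd; if it is a freshly pushed successor then it already sits in the new $\working'$ with distance at most $m-1$, which is precisely the conclusion $\Dist_\witnesses(\working') < m$ you are after. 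With this refinement your argument is complete.
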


\begin{proof}
  Assume that $\FDR(\reachable(\normalize(\lts_1) \product \lts_2))$ holds, so there is a reachable FD-witness.
  
  Initialisation. 
  The set $\Done$ is empty, so $\Dist_\witnesses(\Done) = \Dist_\witnesses(\emptyset) = \infty$. 
  For $\working$, which at this point only contains the initial state, the witness is reachable and therefore $\Dist_\witnesses(\working) < \infty$.
  The initial state is also added to $\antichain$.
  Thus $\Dist_\witnesses(\Done) > \Dist_\witnesses(\working) \land \Dist_\witnesses(\working) = \Dist_\witnesses(\antichain)$.
  
  Maintenance. 
  Assume that $\working$ is not empty and that $\Dist_\witnesses(\Done) > \Dist_\witnesses(\working)$ and $\Dist_\witnesses(\working) = \Dist_\witnesses(\antichain)$ hold.
  At line~\ref{alg:improved_fdr:pop_working} a pair $(\spec, \impl)$ is taken from $\working$, so $\working$, which by invariant~I represents a set, becomes equal to $\working \setminus \{(\spec, \impl)\}$.
  Let $\Done' = \Done \cup \{(\spec, \impl)\}$ and let $N = \Dist_\witnesses((\spec, \impl))$.
  There are three cases to distinguish.
  
  \begin{itemize}
  \item Case $N > \Dist_\witnesses(\working \cup \{(\spec, \impl)\})$. 
    Removing $(\spec, \impl)$ from $\working$ did not change its distance, so $\Dist_\witnesses(\working) = \Dist_\witnesses(\working \cup \{(\spec, \impl)\})$.
    Because $N > \Dist_\witnesses(\working)$, adding this pair to $\Done$ results in $\Dist_\witnesses(\working) < \Dist_\witnesses(\Done') \leq \Dist_\witnesses(\Done)$.
    Consider the outgoing transitions $(\impl,a,\impl') \in\,\transitions_2$ at line~\ref{alg:improved_fdr:transition}.
    The resulting pairs $(\spec', \impl')$ must have a distance of at least $\Dist_\witnesses(\working)$, because $N - 1 \geq \Dist_\witnesses(\working)$.
    Let $\working' = \working \cup \{(\spec', \impl')\}$.
    Then $\Dist_\witnesses(\working) = \Dist_\witnesses(\working')$.    
    Let $\antichain'$ be $\antichain$ if $(\spec', \impl')$ was not inserted and let it be $\antichain \chaininsert (\spec', \impl')$ otherwise.
    By the invariant it follows that $N - 1 \geq \Dist_\witnesses(\antichain)$ and so by Lemma~\ref{lemma:fdrwitness_distance}
    if $(\spec', \impl')$ is inserted into $\antichain$ its distance will also not change.
    Therefore, $\Dist_\witnesses(\Done') > \Dist_\witnesses(\working') \land \Dist_\witnesses(\working') = \Dist_\witnesses(\antichain')$.
     
  \item Case $0 < N \leq \Dist_\witnesses(\working \cup \{(\spec, \impl)\})$.
    Observe that $N$ must be equal to $\Dist_\witnesses(\working \cup \{(\spec, 
    \impl)\})$.
    From Lemma~\ref{lemma:diverging_spec} it follows that $\diverges{\tostates{\spec}_{\lts_1}}$ does not hold and so the successors of $(\spec, \impl)$ are explored.
    Invariant~\ref{invariant:distance} holds upon termination of the inner for-loop at lines~13 to~22.
    This follows from an invariant for the inner for-loop, which we state next.

    Let $\antichain'$ be equal to the value of variable $\antichain$ after line~13 at each iteration and $\working'$ be equal to the value of variable $\working$.
    Furthermore, let $T$ be the set of successors of $(\spec, \impl)$, due to the transitions emanating from $\impl$, and let $\Done'$ be the successors that have been processed, \ie, $\Done'$ is initially empty and $(\spec', \impl')$ is inserted into it after line~17.
		  It can be shown, using Lemma~\ref{lemma:fdrwitness_distance}, that the following is an invariant for the inner for-loop: 
    \begin{align*}
      (\Dist_\witnesses(T \setminus \Done') < N \lor \Dist_\witnesses(\working') < N) \\ \land \Dist_\witnesses(\working') = \Dist_\witnesses(\antichain')
    \end{align*}
    
    Upon termination we conclude that $(T \setminus \Done') = \emptyset$.
    It then follows that $\Dist_\witnesses(T \setminus \Done') = \infty$.
    As a consequence, we find that $\Dist_\witnesses(\working') < N$ and therefore $\Dist(\working') < \Dist_\witnesses(\Done \cup \{(\spec, \impl)\})$.    
    
  \item Case $N = 0$. 
    The state $(\spec, \impl)$ is checked for the FD-witness conditions and the algorithm terminates.\qedhere
  \end{itemize}
\end{proof}

\noindent We conclude with the following result, which underlies the correctness of Algorithm~\ref{alg:improved_fdr}.

\begin{thm}\label{theorem:fdrwitness}
  Algorithm~\ref{alg:improved_fdr} returns false if and only if an FD-witness is reachable in the product $\normalize(\lts_1) \product \lts_2$.
\end{thm}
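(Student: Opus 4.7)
The plan is to prove the equivalence by handling each direction separately, relying heavily on invariants~\ref{invariant:working} and~\ref{invariant:distance} together with Lemmas~\ref{lemma:reachable},~\ref{lemma:fdrwitnes_subset} and~\ref{lemma:fdrwitness_distance}. Termination of the algorithm is already established by Theorem~\ref{theorem:termination}, so only partial correctness needs to be addressed.

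For the ($\Rightarrow$) direction I would do a case analysis on the three sites in Algorithm~\ref{alg:improved_fdr} where \emph{false} is returned. In each case I exhibit a reachable FD-witness in $\normalize(\lts_1) \product \lts_2$, using Lemma~\ref{lemma:reachable} (which guarantees $\working \subseteq \reachable(\normalize(\lts_1) \product \lts_2)$) to establish reachability of the relevant pair. Concretely: (i) when the divergence check returns false, the condition not $\diverges{\tostates{\spec}_{\lts_1}}$ together with $\diverges{\impl}$ makes $(\spec,\impl)$ itself an FD-witness; (ii) when the refusals check fires, $(\spec,\impl)$ is again an FD-witness since additionally $\stable(\impl)$ and $\refusals(\impl) \not\subseteq \refusals(\tostates{\spec}_{\lts_1})$; (iii) when $\spec' = \emptyset$, the successor pair $(\emptyset,\impl')$ is reachable via one extra step, and since $\emptyset$ does not diverge, it is an FD-witness by definition.

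For the ($\Leftarrow$) direction I would argue by contradiction: assume a reachable FD-witness exists, yet the algorithm returns true, so the while-loop exits with $\working = \emptyset$. The initial pair has $\Dist_\witnesses(\cdot) < \infty$ because the witness is reachable from it, and this pair is inserted into $\antichain$ at initialisation, so $\Dist_\witnesses(\antichain) < \infty$ initially. A small auxiliary observation, which follows directly from Lemma~\ref{lemma:fdrwitness_distance} and the definition of $\chaininsert$, is that every insertion into $\antichain$ can only replace pairs by $\statesubset$-smaller ones, so $\Dist_\witnesses(\antichain)$ is monotonically non-increasing throughout the loop and thus remains finite. Invariant~\ref{invariant:distance} of Lemma~\ref{lemma:invariant_distance} then gives $\Dist_\witnesses(\working) = \Dist_\witnesses(\antichain) < \infty$ at every iteration. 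But $\working = \emptyset$ forces $\Dist_\witnesses(\working) = \infty$, a contradiction; hence the algorithm must have returned false.

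The main obstacle is not really in either direction taken individually but in ensuring the invariants that carry the argument are correctly maintained across the branch where $(\spec,\impl)$ is popped but not processed (when $\diverges{\tostates{\spec}_{\lts_1}}$): I have to check that in this case Lemma~\ref{lemma:diverging_spec} guarantees $\Dist_\witnesses((\spec,\impl)) = \infty$, so skipping the exploration of its successors does not lose any witness and invariant~\ref{invariant:distance} is preserved when $(\spec,\impl)$ migrates from $\working$ into $\Done$. The remainder is bookkeeping using invariant~\ref{invariant:working} to justify that the added pair $(\spec,\impl)$ is genuinely new in $\Done$ and that $\working$ remains duplicate-free.
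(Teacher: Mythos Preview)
Your proposal is correct and follows essentially the same route as the paper: the forward direction is a case analysis over the three \emph{false}-return sites using Lemma~\ref{lemma:reachable}, and the backward direction is the same contradiction via invariant~\ref{invariant:distance}, arguing that $\Dist_\witnesses(\antichain)$ stays finite while $\Dist_\witnesses(\working)$ must become $\infty$. The only cosmetic difference is that the paper obtains $\Dist_\witnesses(\antichain) < \infty$ by observing (via Lemma~\ref{lemma:consistent}) that $\init \chainin \antichain$ persists and then applying Lemma~\ref{lemma:fdrwitness_distance}, whereas you phrase it as monotonicity of $\Dist_\witnesses(\antichain)$ under insertions; and your discussion of the $\diverges{\tostates{\spec}_{\lts_1}}$ branch is already absorbed into the proof of Lemma~\ref{lemma:invariant_distance}, so it need not be re-argued here.
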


\begin{proof}\leavevmode
  \begin{itemize}[align=left,beginpenalty=999]
    
  \item[$(\Longrightarrow$)]
    Assume that Algorithm~\ref{alg:improved_fdr} returns false. 
    This occurs when the current pair $(\spec, \impl)$ satisfies the conditions of an FD-witness, as shown in lines~\ref{alg:improved_fdr:spec}, ~\ref{alg:improved_fdr:divergences}, \ref{alg:improved_fdr:refusals} and \ref{alg:improved_fdr:unreachable} of Algorithm~\ref{alg:improved_fdr}. All pairs taken from $\working$ are reachable according to Lemma~\ref{lemma:reachable}, so this FD-witness is also reachable.
  
  \item[($\Longleftarrow$)]
    Assume that an FD-witness is reachable in the product of $\normalize(\lts_1) \product \lts_2$, \ie, $\witnesses \neq \emptyset$.
    Then invariant~II of Lemma~\ref{lemma:invariant_distance} holds:    
    \begin{equation*}
      \Dist_\witnesses(\Done) > \Dist_\witnesses(\working) \land \Dist_\witnesses(\working) = \Dist_\witnesses(\antichain)
    \end{equation*}
    
    \noindent Towards a contradiction, assume that Algorithm~\ref{alg:improved_fdr} returns true.
    The algorithm returns true if and only if $\working$ is empty, which means that $\Dist_\witnesses(\working) = \Dist_\witnesses(\emptyset) = \infty$.
    The initial state $\init$ of $\normalize(\lts_1) \product \lts_2$ is equal to $(\{s \in \states_1 \mid \init_1 \weaktransition{\emptytrace}_1 s\}, \init_2)$ and can reach an FD-witness by assumption.
    Therefore, $\Dist_\witnesses(\init) < \infty$.
    Initially $\init$ was inserted into $\antichain$ so by Lemma~\ref{lemma:consistent} follows that $\init \chainin \antichain$ and from Lemma~\ref{lemma:fdrwitness_distance} it follows that $\Dist_\witnesses(\antichain) < \infty$. Contradiction, so we conclude that if Algorithm~\ref{alg:improved_fdr} terminates then it returns false.
    Since termination is shown in Theorem~\ref{theorem:termination} we establish that the algorithm returns false.\qedhere
  \end{itemize}
\end{proof}

\noindent
We here note that analogues of Theorem~\ref{theorem:fdrwitness} for
Algorithms~\ref{alg:improved_tr} and~\ref{alg:improved_sfr} can be proved along the same lines.
In particular, invariants~I and~II, fundamental in proving termination, and proved in Lemmas~\ref{lemma:disjoint} and~\ref{lemma:reachable}, can be shown to hold for both algorithms using the same arguments (where, of course, the counterpart of invariant~II relies on a distance to the set of TR-witnesses or SF-witnesses).
Proposition~\ref{lemma:subset_trace} also holds for the product $\normalizesfr(\lts_1) \product \lts_2$, and Lemma~\ref{lemma:fdrwitnes_subset} and Corollary~\ref{lemma:fdrwitness_trace} hold for TR-witnesses and SF-witnesses in the product $\normalizesfr(\lts_1) \product \lts_2$.
Without going into these details, we here claim the correctness for Algorithms~\ref{alg:improved_tr} and~\ref{alg:improved_sfr}. 
\begin{thm}\label{theorem:sfrwitness}
  Algorithm~\ref{alg:improved_tr} returns false if and only if a TR-witness is reachable in the product $\normalizesfr(\lts_1) \product \lts_2$. Algorithm~\ref{alg:improved_sfr} returns false if and only if an SF-witness is reachable in the product $\normalizesfr(\lts_1) \product \lts_2$.
\end{thm}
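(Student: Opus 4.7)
The plan is to establish Theorem~\ref{theorem:sfrwitness} by mirroring the development carried out for Algorithm~\ref{alg:improved_fdr} in the preceding pages, but transposed to the product $\normalizesfr(\lts_1) \product \lts_2$ and the TR/SF-witness definitions. Since the normalisation $\normalizesfr$ imposes no side condition on transitions (unlike $\normalize$, which blocks transitions out of diverging sets), every argument simplifies. Concretely, I would first re-derive Lemmas~\ref{lemma:consistent},~\ref{lemma:disjoint} and~\ref{lemma:reachable} verbatim for Algorithms~\ref{alg:improved_tr} and~\ref{alg:improved_sfr}: the proofs only use the structure of the main loop (pop/inspect/insert-into-$\antichain$), which is identical, so the same invariant (\ref{invariant:working}) and the same reachability invariant carry over. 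Termination (the analogue of Theorem~\ref{theorem:termination}) then follows immediately.

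Next, I would establish the counterpart of Lemma~\ref{lemma:subset_transitions} for $\normalizesfr$: for all $(U,s) \statesubset (V,s)$ and transitions $(V,s) \transition{a} (V',t)$, there exists $(U',t)$ with $(U,s) \transition{a} (U',t)$ and $(U',t) \statesubset (V',t)$. The proof is the $\tau$-case from Lemma~\ref{lemma:subset_transitions} unchanged, while the visible-action case is strictly easier because $\normalizesfr$ has no divergence precondition, so the set $U' = \{u' \mid \exists u \in \tostates{U}_{\lts_1} : u \weaktransition{a} u'\}$ is always reachable via a normalisation step from $U$. Proposition~\ref{lemma:subset_trace} then lifts to $\normalizesfr(\lts_1) \product \lts_2$ by the same induction on trace length.

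The crux is then the anti-monotonicity analogue of Lemma~\ref{lemma:fdrwitnes_subset}: for $(U,s) \statesubset (V,s)$, if $(V,s)$ is a TR-witness (resp.\ SF-witness) then so is $(U,s)$. For the TR case this is immediate, since $\tostates{U}_{\lts_1} \subseteq \tostates{V}_{\lts_1} = \emptyset$ forces $U = \emptyset$. For the SF case I would use monotonicity of $\refusals$, i.e.\ $\tostates{U}_{\lts_1} \subseteq \tostates{V}_{\lts_1}$ implies $\refusals(\tostates{U}_{\lts_1}) \subseteq \refusals(\tostates{V}_{\lts_1})$, so $\refusals(s) \not\subseteq \refusals(\tostates{V}_{\lts_1})$ gives $\refusals(s) \not\subseteq \refusals(\tostates{U}_{\lts_1})$. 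Corollary~\ref{lemma:fdrwitness_trace} and Lemma~\ref{lemma:fdrwitness_distance} then transfer without change, using TR-witnesses (resp.\ SF-witnesses) in place of FD-witnesses.

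With these ingredients the distance invariant (counterpart of~\ref{invariant:distance}) can be proved exactly as in Lemma~\ref{lemma:invariant_distance}, the case split on $N$ versus $\Dist_\witnesses(\working \cup \{(\spec,\impl)\})$ being entirely parallel (the ``$\diverges{\tostates{\spec}_{\lts_1}}$'' obstacle from Lemma~\ref{lemma:diverging_spec} does not even arise here, which further shortens the argument). Finally, Theorem~\ref{theorem:sfrwitness} follows as in Theorem~\ref{theorem:fdrwitness}: the $(\Longrightarrow)$ direction uses the TR/SF-witness conditions checked inside the loop together with the reachability invariant, and $(\Longleftarrow)$ proceeds by contradiction, assuming the algorithm returns true (so $\working = \emptyset$, hence $\Dist_\witnesses(\working) = \infty$), which via the distance invariant and Lemma~\ref{lemma:consistent} contradicts $\Dist_\witnesses(\antichain) < \infty$ coming from the initial insertion. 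I expect the only mildly delicate step is verifying the inner-for-loop invariant in the SF case under the refinement that the refusal check is guarded by $\stable(\impl)$ on line~\ref{alg:improved_sfr:refusals}; this is handled by noting that unstable states cannot be SF-witnesses under the second disjunct of the SF-witness definition, so skipping the refusal check there does not miss any witnesses.
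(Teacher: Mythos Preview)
Your proposal is correct and follows exactly the approach the paper itself takes: the paper does not give a self-contained proof of Theorem~\ref{theorem:sfrwitness} but explicitly states that it ``can be proved along the same lines'' as Theorem~\ref{theorem:fdrwitness}, pointing out that invariants~I and~II, Proposition~\ref{lemma:subset_trace}, Lemma~\ref{lemma:fdrwitnes_subset} and Corollary~\ref{lemma:fdrwitness_trace} all transfer to $\normalizesfr(\lts_1) \product \lts_2$ and to TR/SF-witnesses. Your write-up supplies precisely those transferred ingredients (and correctly observes that the divergence side-condition, and hence the analogue of Lemma~\ref{lemma:diverging_spec}, disappears), so you are in fact giving more detail than the paper does while following the same route.
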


\noindent
We finish with restating the formal claim of correctness of all three improved algorithms.

\refinementtheorem*

\begin{proof}
  From Theorem~\ref{theorem:fdrwitness} we can conclude that Algorithm~\ref{alg:improved_fdr} returns false if and only if an FD-witness is reachable.
  By Theorem~\ref{theorem:failures} an FD-witness is only reachable if and only if $\lts_1$ does not refine $\lts_2$ in failures-divergences semantics.
  Virtually the same arguments apply for trace and stable failures refinement.
\end{proof}

\section{Experimental Validation}\label{section:experiments}

We have conducted several experiments to compare the run time of the various algorithms to show that solving the identified issues actually improves the run time performance in practice.
For this purpose we have implemented a depth-first and breadth-first variant for each of the original algorithms (Algorithms~\ref{alg:original_tr},~\ref{alg:original_sfr} and~\ref{alg:original_fdr}) and improved algorithms (Algorithms~\ref{alg:improved_tr},~\ref{alg:improved_sfr} and~\ref{alg:improved_fdr}) in a branch of the mCRL2\footnote{www.mcrl2.org} toolset~\cite{BunteGKLNVWWW19} as part of the \emph{ltscompare} tool, which is implemented in \cpp.
As the name of the tool suggests it can be used to check for various preorder and equivalence relations between labelled transition systems.

The data structures used in these implementations compute most concepts, \eg, the antichain membership test and insertion, in the same way. 
However, the implementations of Algorithms~\ref{alg:improved_sfr} and~\ref{alg:improved_fdr} perform the check at line~\ref{alg:improved_sfr:refusals}, or line~\ref{alg:improved_fdr:refusals} respectively, according to the definition of $\refusals$ we presented in Definition~\ref{def:refusals}, whereas the implementations of Algorithms~\ref{alg:original_sfr} and~\ref{alg:original_fdr} compute the refusal check with an additional local search, according to the definition given in~\cite{WangS0LDWL12}, see also Remark~\ref{remark:refusals}.

We first revisit Example~\ref{example:badcase} in Section~\ref{section:badcase}, illustrating that the performance overhead we predict for the original algorithm for checking trace refinement also manifests itself in practice.
In Section~\ref{section:cases}, we then analyse the performance of the algorithms on practical examples consisting of a model of an industrial system and models of concurrent data structures.
Finally, in Section~\ref{section:preprocessing}, we analyse the effect of using a cheap state space minimisation algorithm on the total run time of the algorithms.

All experiments and measurements have been performed on a machine with an Intel Core i7-7700HQ CPU 2.80Ghz and a 16GiB memory limit imposed by \texttt{ulimit -Sv 16777216}.
The source modifications and experiments can be obtained from the downloadable package~\cite{package}.

\subsection{Experiment I: Example~\ref{example:badcase}}
\label{section:badcase}

We have used our implementations of Algorithms~\ref{alg:original_tr} and~\ref{alg:improved_tr} to measure the run time (in seconds) for checking the trace refinement $\lts^k_n \refinedbytrace \lts^k_n$, for all combinations of parameters $n, k \in \{10, 20, \ldots, 500\}$, as described in Example~\ref{example:badcase}.
The results of these measurements are shown as two three-dimensional plots in Figure~\ref{figure:badcase_depth}.

\begin{figure}[H]
\caption{The run time results for Example~\ref{example:badcase} using the depth-first variant of Algorithm~\ref{alg:original_tr} on the left and our Algorithm~\ref{alg:improved_tr} on the right.}\label{figure:badcase_depth}
\vspace{0.5cm}
\includegraphics[width=\textwidth]{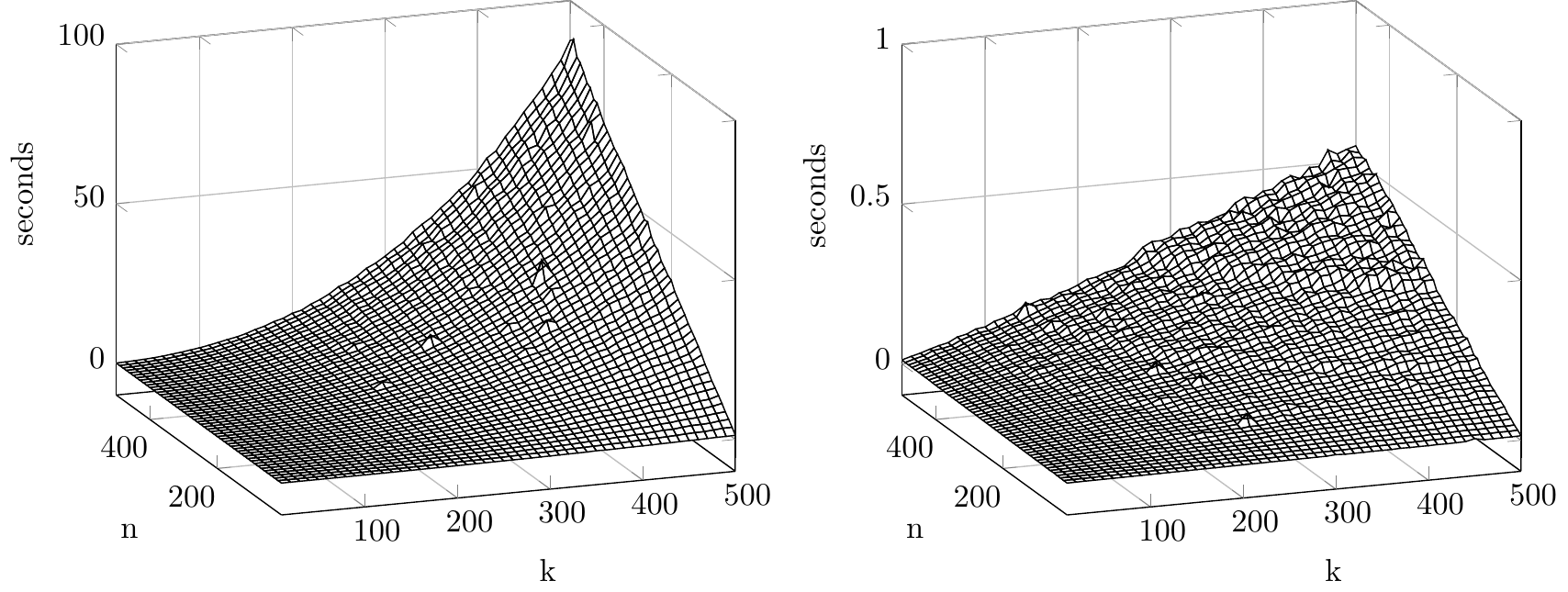}
\end{figure}

These plots show a quadratic growth of Algorithm~\ref{alg:original_tr} in the parameter $k$ and a linear growth in the parameter $n$.
For Algorithm~\ref{alg:improved_tr} the asymptotic growth is linear in both $k$ and $n$.
These observed growths coincide with the analysis that was presented in Example~\ref{example:badcase} for Algorithm~\ref{alg:original_tr} and on page~\pageref{label:analysis} for Algorithm~\ref{alg:improved_tr}.
Note that the scale of the vertical axes of both plots, displaying the run time, differs by two orders of magnitude and the highest runtime (for the $n = 500$ and $k = 500$ case) of Algorithm~\ref{alg:original_tr} is a factor 170 higher than that of Algorithm~\ref{alg:improved_tr}.
As there is no difference in the data structures the difference in run time is entirely due to the different way of inspecting and extending $\working$ and $\antichain$.

The breadth-first variant of Algorithm~\ref{alg:original_tr} was unable to complete the smallest, \ie, $n = k = 10$, case within the given memory limit.
However, as shown in Figure~\ref{figure:badcase_breadth} the run time performance of the breadth-first variant of the improved algorithm is almost equivalent to its depth-first variant.

\begin{figure}[H]
\caption{The run time for Example~\ref{example:badcase} using the breadth-first variant of Algorithm~\ref{alg:improved_tr}.}\label{figure:badcase_breadth}
\vspace{0.5cm}
\begin{center}
\includegraphics[scale=0.9]{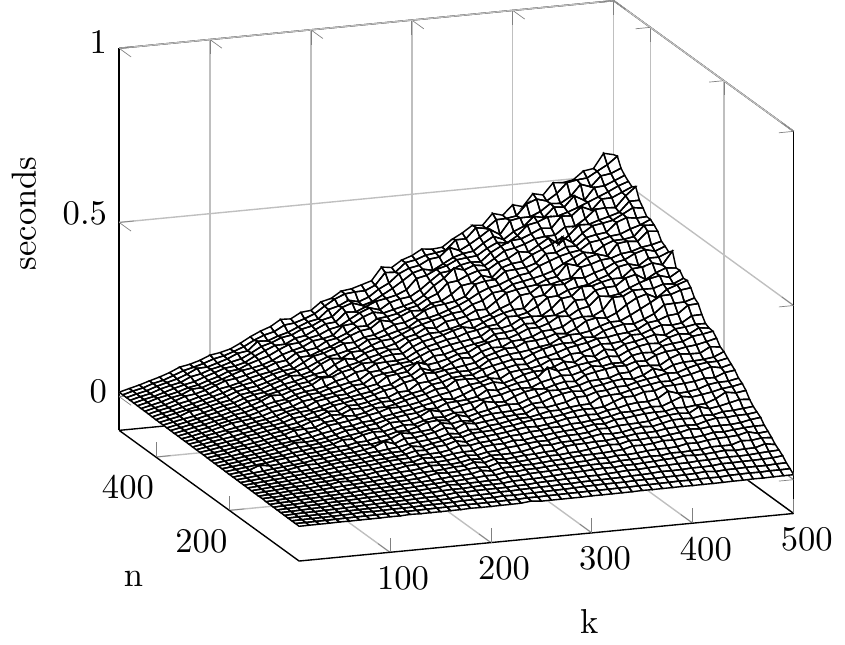}
\end{center}
\end{figure}

\subsection{Experiment II: Practical Examples}
\label{section:cases}

The experiments that we consider are taken from two sources.
First, a model of an industrial system that first exposed the performance issues in
practice of a control system modelled in the Dezyne language~\cite{BeusekomGHHWWW17} .
This example is of a more traditional flavour, in which the
specification is an abstract description of the behaviours at the 
external interface of a control system, and the implementation
is a detailed model that interacts with underlying services to
implement the expected interface.  
For reasons of confidentiality, the industrial model cannot be made available.

Second, we consider several \emph{linearisability tests} of concurrent data structures. 
These models have been taken from~\cite{Paval}, and
consist of six implementations of concurrent data types that, when
\emph{trace refining} their specifications, are guaranteed to be
linearisable. As in~\cite{WangS0LDWL12}, we approximate trace refinement
by the stronger stable failures refinement. 
For these models, the implementation and specification pairs are based on the same descriptions; the difference between the two is that the specification uses a simple construct to guarantee that each method of the concurrent data structure executes atomically.
This significantly reduces the non-determinism and the number of
transitions in the specification models.

In Table~\ref{table:benchmark_datasets} the origin of each model, the number of states and transitions of each implementation and specification LTS, and whether the stable failures refinement relation holds is shown.

\begin{table}[H]
\caption{The number of states and transitions in each benchmark.}\label{table:benchmark_datasets}

\begin{center}
  {
    \renewcommand{\arraystretch}{1.2}
  \small
  \begin{tabularx}{\linewidth}{Xrrrrrr}
  \toprule
  {\bfseries Model} & {\bfseries Ref.} & {\bfseries states spec} & {\bfseries
    trans.\ spec} & {\bfseries $\bm{\refinedbysfr}$} & {\bfseries states impl} &
  {\bfseries trans.\ impl}  \\
  \midrule
  Industrial & - & 24 & 45 & True & 24 551 & 45 447  \\ 
  Coarse set & \cite{Herlihy08} & 50 488 & 64 729 & True & 55 444 & 145 043  \\ 
  Fine-grained set & \cite{Herlihy08} & 3 720 & 3 305 & True & 5 077 & 9 006 \\ 
  Lazy set & \cite{Herlihy08} & 3 565 & 3 980 & True & 24 496 & 41 431  \\ 
  Optimistic set & \cite{Herlihy08} & 25 435 & 28 154 & True & 234 332 & 389 344 \\
  Non-blocking queue & \cite{ShannHC00} & 1 248 & 1 473 & False & 3 030 & 5 799 \\ 
  Treiber stack & \cite{treiber1986systems} & 87 389 & 124 740 & True & 205 634 & 564 862 \\ 
  \bottomrule
\end{tabularx}
  }
\end{center}
\end{table}

\noindent The run time measurements of both Algorithms~\ref{alg:original_sfr} and~\ref{alg:improved_sfr} with both the depth-first and breadth-first variants is shown in Table~\ref{table:runtime_stablefailures}. 
The run times that we report are the averages obtained from five consecutive runs.

\begin{table}[H]
\caption{Run time comparison between Algorithm~\ref{alg:original_sfr} and Algorithm~\ref{alg:improved_sfr} using depth-first (df) and breadth-first (bf) exploration.}\label{table:runtime_stablefailures}

\begin{center}
  {
    \renewcommand{\arraystretch}{1.2}
  \small
  \begin{tabularx}{\linewidth}{Xrrrr}
  \toprule
  {\bfseries Model} & {\bfseries Alg.~\ref{alg:original_sfr} df (s)} &
  {\bfseries Alg.~\ref{alg:original_sfr} bf (s)} & {\bfseries
  Alg.~\ref{alg:improved_sfr} df (s)} & {\bfseries Alg.~\ref{alg:improved_sfr}
  bf (s)} \\ 
    \midrule
  Industrial & 1.36 & 296.29 & 0.15 & 0.17 \\ 
  Coarse set & 9.15 & $\outofmemory$ & 8.61 & 9.06 \\ 
  Fine-grained set & 0.37 & $\outofmemory$ & 0.32 & 0.46 \\ 
  Lazy set & 1.19 & $\outofmemory$ & 1.02 & 1.26 \\ 
  Optimistic set & 16.96 & $\outofmemory$ & 14.13 & 22.67 \\ 
  Non-blocking queue & 0.03 & 0.17 & 0.02 & 0.09 \\ 
  Treiber stack & 148.39 & $\outofmemory$ & 137.52 & 352.59 \\
    \bottomrule
\end{tabularx}
  }
\end{center}
\end{table}

\noindent Here, we observe that the depth-first variant of both algorithms perform similarly with a small run time advantage for Algorithm~\ref{alg:improved_sfr}.
However, for the breadth-first variants our algorithm is able to complete all experiments, whereas Algorithm~\ref{alg:original_sfr} reaches the memory limit, indicated by $\outofmemory$, in five cases and only completes two cases successfully.

To gain more insight into the performance differences between both algorithms we repeat the experiments and report a number of performance metrics.
The reported metrics are the maximum $\working$ size and the number of $\antichain$ membership test that fail (misses), succeed (hits) and the maximum $\antichain$ size during the exploration.
We report the maximum size instead of its size upon termination as these do not necessarily coincide, because inserting an element can evict one or more pairs in $\antichain$.
The following two tables (Tables~\ref{tab:metrics1} and~\ref{tab:metrics2}) show the discussed metrics for the depth-first variant of both algorithms.

\begin{table}[H]
  \caption{Performance metrics for the depth-first variant of Algorithm~\ref{alg:original_sfr}.}
  \label{tab:metrics1}
  \begin{center}
  {
    \renewcommand{\arraystretch}{1.2}
  \small
    \begin{tabularx}{\linewidth}{Xrrrr}
  \toprule
      {\bfseries Model} & {\bfseries $\bm{\working}$ max} & {\bfseries
      $\bm{\antichain}$
      hits} & {\bfseries $\bm{\antichain}$ misses} & {\bfseries $\bm{\antichain}$ max} \\ \midrule
  Industrial & 74 & 36 544 & 43 419 & 43 091 \\ 
  Coarse set & 96 & 93 330 & 58 438 & 55 444 \\ 
  Fine-grained set & 60 & 5 786 & 7 575 & 5 077 \\ 
  Lazy set & 61 & 21 184 & 30 771 & 24 496 \\ 
  Optimistic set & 96 & 234 692 & 354 068 & 238 726 \\ 
  Non-blocking queue & 52 & 548 & 672 & 591 \\ 
  Treiber stack & 101 & 1 238 727 & 756 692 & 234 118 \\ \bottomrule
  \end{tabularx}
}
  \end{center}
\end{table}

\begin{table}[H]
  \caption{Performance metrics for the depth-first variant of Algorithm~\ref{alg:improved_sfr}.}
  \label{tab:metrics2}
  \begin{center}
  {
    \renewcommand{\arraystretch}{1.2}
  \small
    \begin{tabularx}{\linewidth}{Xrrrr}
  \toprule
      {\bfseries Model} & {\bfseries $\bm{\working}$ max} & {\bfseries $\bm{\antichain}$ hits} & {\bfseries $\bm{\antichain}$ misses} & {\bfseries $\bm{\antichain}$ max} \\ \midrule
  Industrial & 69 & 36 369 & 43 090 & 43 091 \\ 
  Coarse set & 96 & 93 330 & 58 438 & 55 444 \\ 
  Fine-grained set & 60 & 5 786 & 7 575 & 5 077 \\ 
  Lazy set & 61 & 21 184 & 30 771 & 24 496 \\ 
  Optimistic set & 96 & 234 692 & 354 068 & 238 728 \\ 
  Non-blocking queue & 43 & 520 & 641 & 634 \\ 
  Treiber stack & 101 & 1 238 727 & 756 692 & 234 119 \\ \bottomrule
  \end{tabularx}
}
  \end{center}
\end{table}

\noindent We observe in Tables~\ref{tab:metrics1} and~\ref{tab:metrics2} that only for the industrial and non-blocking queue models the performance metrics are different.
An explanation for this is that because the antichain membership test is delayed (in Algorithm~\ref{alg:original_sfr}), more pairs are added to $\working$ and these additional pairs increase the number of $\antichain$ checks.
In all other cases, the difference in run time can only be the result of the different refusal computation implementation, as the number of $\antichain$ operations is the same.

The following two tables (Tables~\ref{tab:metrics3} and~\ref{tab:metrics4}) show the obtained performance metrics for the breadth-first variants of both algorithms.
In the experiments where the refinement checking terminates early, due to reaching the memory limit, we report the last observed measurements.

\begin{table}[H]
  \caption{Performance indicators for the breadth-first variant of Algorithm~\ref{alg:original_sfr}.}
  \label{tab:metrics3}
  \begin{center}
  {
    \renewcommand{\arraystretch}{1.2}
  \small
    \begin{tabularx}{\linewidth}{Xrrrr}
  \toprule
      {\bfseries Model} & {\bfseries $\bm{\working}$ max} & {\bfseries $\bm{\antichain}$ hits} & {\bfseries $\bm{\antichain}$ misses} & {\bfseries $\bm{\antichain}$ max} \\ \midrule
  Industrial & 549 263 & 5 459 028 & 12 888 388 & 43 091 \\ 
  Coarse set & 4 710 289 & 13 870 & 7 807 403 & 3 629 \\ 
  Fine-grained set & 6 604 516 & 180 669 & 15 547 890 & 1 900 \\ 
  Lazy set & 6 726 497 & 130 523 & 14 852 835 & 4 306 \\ 
  Optimistic set & 6 366 524 & 38 649 & 14 238 042 & 4 439 \\ 
  Non-blocking queue & 6 262 & 3 078 & 14 560 & 274 \\ 
  Treiber stack & 5 829 902 & 76 114 & 8 340 606 & 4 811 \\ \bottomrule
  \end{tabularx}
}
  \end{center}
\end{table}

\begin{table}[H]
  \caption{Performance indicators for the breadth-first variant of Algorithm~\ref{alg:improved_sfr}.}
  \label{tab:metrics4}
  \begin{center}
  {
    \renewcommand{\arraystretch}{1.2}
  \small
    \begin{tabularx}{\linewidth}{Xrrrr}
  \toprule
      {\bfseries Model} & {\bfseries $\bm{\working}$ max} & {\bfseries $\bm{\antichain}$ hits} & {\bfseries $\bm{\antichain}$ misses} & {\bfseries $\bm{\antichain}$ max} \\ \midrule
  Industrial & 2 243 & 36 369 & 43 090 & 43 091 \\ 
  Coarse set & 3 411 & 96 167 & 60 332 & 55 444 \\ 
  Fine-grained set & 434 & 7 192 & 9 657 & 5 077 \\ 
  Lazy set & 1 748 & 24 340 & 35 192 & 24 496 \\ 
  Optimistic set & 15 209 & 292 525 & 434 218 & 234 352 \\ 
  Non-blocking queue & 338 & 3 426 & 4 032 & 2 675 \\ 
  Treiber stack & 139 218 & 2 411 614 & 1 523 830 & 214 795 \\ \bottomrule
  \end{tabularx}
}
  \end{center}
\end{table}

\noindent From these results it is clear to see that for the breadth-first variant of Algorithm~\ref{alg:original_sfr}, delaying the $\antichain$ insertion of discovered state pairs results in an enormous overhead.
The size of the $\antichain$ remains quite small, which causes many discovered pairs to fail the antichain membership test.
As each pair that fails the membership test is added to $\working$, it causes the $\working$ queue to grow rapidly, until it reaches the memory limit.
On the other hand, for Algorithm~\ref{alg:improved_sfr} we can observe that the number of successful (and unsuccessful) $\antichain$ membership test is quite similar to its depth-first variant.
There can be some differences between these variants as the pairs are discovered in a different order.
The increase of the $\working$ size has the same reason as for ordinary breadth-first search, which depends on the out degree of the visited pairs.
  
To verify that the difference in performance of the depth-first variants is due to the changes of the refusal computation we have implemented another variant of Algorithm~\ref{alg:original_sfr} with the stability check of $\impl$ added.
The run time impact of this change for both depth-first and breadth-first variants of Algorithm~\ref{alg:original_sfr} is shown in Table~\ref{table:benchmarks}.

\begin{table}[H]
\caption{Run time results for Algorithm~\ref{alg:original_sfr} with the stability check of $\impl$.}\label{table:benchmarks}
\begin{center}
  {
    \renewcommand{\arraystretch}{1.2}
\small
  \begin{tabular}{lrr}
  \toprule
    {\bfseries Model} & {\bfseries Alg.~\ref{alg:original_sfr} df (s)} &
    {\bfseries Alg.~\ref{alg:original_sfr} bf (s)} \\ \midrule
  Industrial & 0.17 & 26.32 \\ 
  Coarse set & 9.59 & $\outofmemory$ \\ 
  Fine-grained set & 0.36 & $\outofmemory$ \\ 
  Lazy set & 1.12 & $\outofmemory$ \\ 
  Optimistic set & 15.85 & $\outofmemory$ \\ 
  Non-blocking queue & 0.03 & $\outofmemory$ \\ 
  Treiber stack & 156.67 & $\outofmemory$ \\ \bottomrule
\end{tabular}
  }
\end{center}
\end{table}

\noindent As expected, the run time for this alternative depth-first variant closely matches the run time of the depth-first variant of the improved algorithm.
The alternative breadth-first variant of Algorithm~\ref{alg:original_sfr} is still not able to complete most experiments, but the industrial case has improved quite significantly.
However, the non-blocking queue experiment now reaches the set memory limit.
For this we provide the following explanation.
Note that in case of a failing refinement the exploration stops when a suitable (SF-)witness has been found, which must exist as stable failures refinement does not hold.
Recall that the computation of refusals for (possibly) unstable states as defined in~\cite{WangS0LDWL12}, see Remark~\ref{remark:refusals}, has been implemented using a separate local search for stable states.
We think that in the previous case such an SF-witness was found for an unstable state using this local search.
However, in the alternative version the algorithm continues the exploration of the LTSs when encountering an unstable state (in $\lts_2$), which
causes the $\working$ queue to reach the memory limit.

Finally, we repeat the same experiments while checking for failures-divergences refinement.
This has only been done for Algorithm~\ref{alg:improved_fdr} as the original algorithm for failures-divergences refinement is incorrect.
The run time measurements and the expected result of the failures-divergences refinement check are presented in Table~\ref{table:benchmarks_fdr}.

\begin{table}[H]
\caption{The run time results for checking failures-divergences refinement using Algorithm~\ref{alg:improved_fdr}.}\label{table:benchmarks_fdr}
\begin{center}
\small
  {
    \renewcommand{\arraystretch}{1.2}
  \begin{tabular}{lrrr}
  \toprule
    {\bfseries Model} & {\bfseries Alg.~\ref{alg:improved_fdr} df (s)} &
    {\bfseries Alg.~\ref{alg:improved_fdr} bf (s)} & $\bm{\refinedbyfdr}$ \\ \midrule
  Industrial & 0.05 & 0.05 & False\\ 
  Coarse set & 8.68 & 9.29 & True\\ 
  Fine-grained set & 0.33 & 0.48 & True\\ 
  Lazy set & 1.04 & 1.33  & True\\ 
  Optimistic set & 14.55 & 23.81 & True \\ 
  Non-blocking queue & 0.08 & 0.1 & True \\ 
  Treiber stack & 140.7 & 363.34 & True \\ \bottomrule
\end{tabular}
  }
\end{center}
\end{table}

\noindent The run time results of Table~\ref{table:benchmarks_fdr} show that deciding failures-divergences refinement has a similar performance to deciding stable failures.

\subsection{State Space Minimisation as Preprocessing}\label{section:preprocessing}

The size of the transition systems has a major impact on the practical run time of the refinement checking algorithms we studied, as can also be seen from, \eg,~Tables~\ref{table:benchmark_datasets} and~\ref{table:runtime_stablefailures}.
Note that this is particularly true of the size of the specification LTS, whose normal form can be exponentially larger than the specification itself.
As an alternative to the pruning achieved using antichains, reducing the size of the specification as a preprocessing step to checking for refinement may therefore be an effective tool in improving on the practical run time of these algorithms.
Of course, it is desirable that the computational overhead of the reduction remains minimal.
One possibility is to minimise transition systems using one of the many equivalence relations available for labelled transition systems, see, \eg~\cite{Glabbeek93,BoulgakovGR16}.
When choosing such an equivalence it is important that it has the property that, apart from an appealing run time complexity, the observations, \ie, $\weaktraces$, $\failures$ and $\divergences$, that are extracted from equivalent states are the same.

\emph{Strong} bisimilarity is known to preserve the $\weaktraces$, $\failures$ and $\divergences$ observations of equivalent states.
However, a more substantial state space reduction can often be achieved by considering equivalences that treat the special action $\tau$ as invisible, as the given LTSs often contain $\tau$-transitions.
In~\cite{Roscoe10}, Roscoe suggests to use a variant of weak bisimulation, \viz, \emph{divergence respecting weak bisimulation}, to minimise a transition system.
For \emph{divergence respecting weak} bisimulation it is known that it is a suitable abstraction; see the following theorem~\cite[Theorem 9.2]{Roscoe10}.

\begin{thm}\label{thm:bisimilar}
  Let $\lts = (\states, \init, \transitions)$ be an LTS.
  For two states $s, t \in \states$ that are divergence respecting weak bisimilar it holds that $\weaktraces(s) = \weaktraces(t)$, $\divergences(s) = \divergences(t)$, $\failures(s) = \failures(t)$.
  Hence, also $\failuresbot(s) = \failuresbot(t)$.
\end{thm}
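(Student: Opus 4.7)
Since Theorem~\ref{thm:bisimilar} is attributed to Roscoe as a textbook result, the plan is to reconstruct its proof directly from the defining transfer properties of divergence respecting weak bisimulation. Let $\sim$ denote such a bisimulation relating $s$ and $t$. I will rely on two characteristic properties of $\sim$: (i) a visible-action transfer property, stating that if $s\sim t$ and $s\weaktransition{a}s'$ then there is $t'$ with $t\weaktransition{a}t'$ and $s'\sim t'$; and (ii) a divergence preservation property, stating that $s\sim t$ and $\diverges{s}$ imply $\diverges{t}$. The three set equalities can be established independently, and by symmetry of $\sim$ only one inclusion needs to be argued for each; the equality for $\failuresbot$ then follows immediately from those for $\failures$ and $\divergences$ via Definition~\ref{def:failures}.

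For $\weaktraces(s)\subseteq\weaktraces(t)$, I would proceed by a routine induction on the length of $\weaktrace\in\weaktraces(s)$: the empty trace is handled by $t\weaktransition{\emptytrace}t$, and the inductive step decomposes $s\weaktransition{\weaktrace\,a}s'$ as $s\weaktransition{\weaktrace}s''\weaktransition{a}s'$ and applies the induction hypothesis to $\weaktrace$ followed by the transfer property to the final $a$-step. For $\divergences(s)\subseteq\divergences(t)$, any element $\weaktrace\in\divergences(s)$ has by definition a prefix $\weaktrace'$ with $s\weaktransition{\weaktrace'}s'$ and $\diverges{s'}$; the same iterated transfer argument yields $t'$ with $t\weaktransition{\weaktrace'}t'$ and $s'\sim t'$, and divergence preservation then gives $\diverges{t'}$, so $\weaktrace\in\divergences(t)$.

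The hardest case, and the main obstacle, is $\failures(s)\subseteq\failures(t)$, where stability and refusals must be transferred together with the weak trace. Given $(\weaktrace,X)\in\failures(s)$ witnessed by a stable $s'$ with $s\weaktransition{\weaktrace}s'$ and $X\subseteq\actions\setminus\enabled(s')$, the weak-trace transfer produces $t'$ with $t\weaktransition{\weaktrace}t'$ and $s'\sim t'$. The crux is then to exhibit a stable $t^*$ with $t'\weaktransition{\emptytrace}t^*$, $s'\sim t^*$ and $\enabled(t^*)\cap\actions\subseteq\enabled(s')\cap\actions$. Existence of some stable $t^*$ reachable from $t'$ by $\tau$-steps while remaining $\sim$-related to $s'$ is where divergence preservation does its real work: otherwise every $\tau$-path from $t'$ through $\sim$-related successors of $s'$ would be infinite, giving $\diverges{t'}$, and the symmetric form of divergence preservation would then force $\diverges{s'}$, contradicting the stability of $s'$. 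Once such a $t^*$ is in hand, applying the visible-action transfer in both directions equates the visible actions enabled at $s'$ and $t^*$, so $X$ is refused by $t^*$, yielding $(\weaktrace,X)\in\failures(t)$ by Definition~\ref{def:failures}.
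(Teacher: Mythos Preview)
Your reconstruction is correct and considerably more detailed than anything the paper itself offers: the paper does not prove Theorem~\ref{thm:bisimilar} at all, but simply imports it as \cite[Theorem~9.2]{Roscoe10}. So there is no ``paper's own proof'' to compare against; you have supplied the argument that the paper delegates to Roscoe.

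Your handling of the three set equalities is the standard one, and in particular your treatment of the only nontrivial case, $\failures(s)\subseteq\failures(t)$, is sound. The key step---arguing that from a $t'$ bisimilar to the stable $s'$ one can reach a stable $t^*$ still bisimilar to $s'$, because any $\tau$-step out of a state related to $s'$ must be matched by $s'\weaktransition{\emptytrace}s'$ and hence stays related to $s'$, so an infinite $\tau$-run would force $\diverges{t'}$ and then $\diverges{s'}$ by divergence preservation---is exactly the place where the ``divergence respecting'' qualifier is indispensable, and you invoke it correctly. Once both $s'$ and $t^*$ are stable and bisimilar, the visible-action transfer in each direction indeed gives $\enabled(s')\cap\actions=\enabled(t^*)\cap\actions$, so the refusal sets agree.
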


\noindent From a computational point of view, however, (divergence respecting) weak bisimulation is not particularly promising.
For instance, the best known algorithm~\cite{DBLP:journals/iandc/RanzatoT08} for computing weak bisimulation has a worst-case time complexity of $\functionorder(m\cdot n)$, where $n$ is the number of states and $m$ the number of transitions.
Such run time complexities are non-neglible and may result in an undesirable overhead.

In practice, divergence-respecting weak bisimulation often coincides with \emph{divergence preserving branching bisimulation}\footnote{Note that Theorem~6.1 of~\cite{GlabbeekW96}
states that when comparing two systems of which one contains no
$\tau$-transitions, weak and branching bisimilarity coincides.
It is argued that this often applies when comparing specifications with implementations.
This trivially generalises to divergence-respecting weak and divergence preserving branching bisimilarity. 
Also in case of our benchmarks, both relations yield only a minimal difference in size for all models, with the exception of the industrial implementation model.
For that model, however, the costs of computing the weak-bisimulation reduction far exceeds the costs of performing the refinement check.
Perhaps this could be partially mitigated by more efficient (divergence respecting) weak bisimulation algorithms, but that has not been further investigated.}
and it has the far more appealing worst-case run time complexity of $\functionorder(m \cdot \log n)$~\cite{JansenGKW20}, which is equivalent to the run time complexity of computing strong bisimulation~\cite{PaigeT87}.
Divergence-preserving branching bisimulation is stronger than divergence respecting weak bisimulation, \ie, two states that are divergence-preserving branching bisimilar are  also divergence respecting weak bisimilar.
Divergence-preserving branching bisimilarity is, by Theorem~\ref{thm:bisimilar}, therefore a suitable abstraction.

The algorithm that decides divergence-preserving branching bisimulation equivalence between two states can also be adapted to a minimisation procedure with the same $\functionorder(m \cdot \log n)$ time complexity.
We have made the preprocessing step of minimisation modulo divergence-preserving branching bisimulation available as an option in our tool.
In Table~\ref{table:benchmark_dataset_preprocessed} the number of states and transitions of each model of Section~\ref{section:cases}, \emph{after} minimisation modulo divergence-preserving branching bisimulation, and whether the specification and implementation LTSs are divergence-preserving branching bisimilar is shown.

\begin{table}[H]
\caption{The number of states and transitions after diverging preserving branching bisimulation minimisation and whether the LTSs are equivalent in divergence-preserving branching bisimulation semantics denoted by $\branchingbisim$.}\label{table:benchmark_dataset_preprocessed}
\begin{center}
\small
  {
    \renewcommand{\arraystretch}{1.2}
  \begin{tabularx}{\linewidth}{Xrrcrr}
  \toprule
    {\bfseries Model}              & {\bfseries states spec} & {\bfseries
    trans.\ spec} & $\bm{\branchingbisim}$ & {\bfseries states impl} &
    {\bfseries trans.\ impl} \\ \midrule
  Industrial         & 24 & 45 & False & 4 626 & 14 380 \\  
  Coarse set         & 1 089 & 3 618 & True & 1 089 & 3 618 \\ 
  Fine-grained set   & 92 & 210 & True & 92 & 210 \\ 
  Lazy set           & 92 & 210 & True & 92 & 210\\ 
  Optimistic set     & 170 & 410 & True & 170 & 410 \\ 
  Non-blocking queue & 119 & 274 & False & 163 & 378 \\ 
  Treiber stack      & 7 988 & 26 070 & True & 7 988 &  26 070 \\ \bottomrule
\end{tabularx}
  }
\end{center}
\end{table}

\noindent Observe that for most of the models, the minimised implementation and specification LTSs are of equal size; indeed, in those cases the implementation and specification are divergence-preserving branching bisimulation equivalent, so no further stable failures refinement check would be needed.

One option would therefore be to apply the minimisation to both implementation and specification LTSs.
This approach turns out to be beneficial for the Treiber stack example, obtaining a run time of 3 seconds to determine stable failures refinement.
The approach is not beneficial for the other examples.
Moreover, minimising the implementation might even be less effective in case the refinement relation between specification and implementation does not hold, in which case the refinement check will probably quickly determine this fact.

We therefore measure the effect of using minimised specifications, but unmodified implementations.
The run time measurements of checking stable failures refinement using Algorithms~\ref{alg:original_sfr} and~\ref{alg:improved_sfr} using the minimised specification LTS is shown in Table~\ref{table:benchmarks_reduced}.
The time that it takes to compute the divergence-preserving branching bisimulation minimisation is presented in the last column and the other measurements are the run time of the algorithm including preprocessing.

\begin{table}[H]
\caption{Run time comparison between the original algorithm (Algorithm~\ref{alg:original_sfr}) and the improved algorithm (Algorithm~\ref{alg:improved_sfr}) using depth-first (df) and breadth-first (bf) exploration where the specification is reduced modulo divergence-preserving branching bisimulation.}\label{table:benchmarks_reduced}
\begin{center}
\small
  {
    \renewcommand{\arraystretch}{1.2}
    \renewcommand{\tabcolsep}{3pt}
  \begin{tabularx}{\linewidth}{X@{}rrrrr}
  \toprule
    {\bfseries Model} & {\bfseries Alg.~\ref{alg:original_sfr} df (s)} &
    {\bfseries Alg.~\ref{alg:original_sfr} bf (s)} & {\bfseries
    Alg.~\ref{alg:improved_sfr} df (s)} & {\bfseries Alg.~\ref{alg:improved_sfr}
    bf (s)} & {\bfseries Reduction (s)} \\ \midrule
  Industrial & 1.38 & 293.1 & 0.16 & 0.17 & 0.01 \\ 
  Coarse set & 0.74 & $\outofmemory$ & 0.69 & 0.69 & 0.10 \\ 
  Fine-grained set & 0.04 & $\outofmemory$ & 0.04 & 0.04 & 0.01 \\ 
  Lazy set & 0.21 & $\outofmemory$ & 0.15 & 0.15 & 0.01 \\ 
  Optimistic set & 2.52 & $\outofmemory$ & 1.59 & 1.57 & 0.04 \\ 
  Non-blocking queue & 0.02 & 0.04 & 0.02 & 0.02 & 0.01 \\ 
  Treiber stack & 8.19 & $\outofmemory$ & 6.61 & 11.71 & 0.24 \\ \bottomrule
\end{tabularx}
  }
\end{center}
\end{table}

\noindent Comparing these results with Table~\ref{table:runtime_stablefailures} shows that reducing the specification modulo divergence-preserving branching bisimulation can indeed substantially improve the performance of the antichain-based algorithms.
In particular, it never degrades the performance of our algorithms as the preprocessing time is negligible.
For failures-divergences refinement the results, using  Algorithm~\ref{alg:improved_fdr}, are similar, as is shown in Table~\ref{table:benchmarks_fdr_preprocessing}.

\begin{table}[H]
\caption{The run time results for checking failures-divergences refinement using Algorithm~\ref{alg:improved_fdr} where the specification is reduced module divergence-preserving branching bisimulation.}\label{table:benchmarks_fdr_preprocessing}
\begin{center}
\small
  {
    \renewcommand{\arraystretch}{1.2}
  \begin{tabular}{lrr}
  \toprule
    {\bfseries Model} & {\bfseries Alg.~\ref{alg:improved_fdr} df (s)} &
    {\bfseries Alg.~\ref{alg:improved_fdr} bf (s)} \\ \midrule
  Industrial & 0.05 & 0.06 \\ 
  Coarse set & 0.75 & 0.7 \\ 
  Fine-grained set & 0.04 & 0.04 \\ 
  Lazy set & 0.15 & 0.15 \\ 
  Optimistic set & 1.61 & 1.7 \\ 
  Non-blocking queue & 0.02 & 0.02 \\ 
  Treiber stack & 6.76 & 12.13 \\ \bottomrule
\end{tabular}
  }
\end{center}
\end{table}

\section{Conclusion}

Our study of the antichain-based algorithms for deciding trace refinement, stable failures refinement and failures-divergences refinement presented in~\cite{WangS0LDWL12} revealed that the failures-divergences refinement algorithm is incorrect. 
All three algorithms perform suboptimally when implemented using a depth-first search strategy and poorly when implemented using a breadth-first search strategy.
Furthermore, all three algorithms violate the claimed antichain property. 
We propose alternative algorithms for which we have shown correctness and which utilise proper antichains.  
Our experiments indicate significant performance improvements for deciding trace refinement, stable failures refinement and a performance of deciding failures-divergences refinement that is comparable to deciding stable failures refinement. 
We also show that preprocessing using divergence-preserving branching bisimulation offers substantial performance benefits.
The implementation of our algorithms is available in the open source toolset mCRL2~\cite{BunteGKLNVWWW19} and is currently used as the backbone in the commercial F-MDE toolset Dezyne; see also~\cite{BeusekomGHHWWW17}.

\section*{Acknowledgement}
This work is part of the TOP Grants research programme with project number 612.001.751 (AVVA), which is (partly) financed by the Dutch Research Council (NWO).
We also would like to thank the anonymous reviewer for their effort and constructive feedback.

\bibliographystyle{alpha}
\bibliography{bibliography.bib}

\newpage
\begin{appendix}

\section{Proof of Lemma~\ref{lemma:subset_sfr}}

\renewcommand{\thethm}{\ref{lemma:subset_sfr}} 

\lemmasubsetsfr*
\begin{proof}
  We use induction on the length of sequences in $\actions^*$ to prove the statement.

  Base case.
  First, $\init' \tracetransition{\emptytrace}' U$ iff $U = \init'$ by definition.
  The state $\init'$ is equal to $\{s \mid \init \weaktransition{\emptytrace} s\}$ as defined in the normalisation.
  Hence, for every state $s \in \tostates{\init'}_\lts$ we have $\init \weaktransition{\emptytrace} s$.

  Inductive case.
  Pick any sequence $\weaktrace \in \actions^*$ of length $i$ and suppose that the statement holds for all sequences in $\actions^*$ of length $i$.
  Take an arbitrary state $V \in \states'$ and action $a \in \actions$ such that $\init' \tracetransition{\weaktrace\, a}' V$.
  Then there is a state $U \in \states'$ such that $\init' \tracetransition{\weaktrace}' U$ and $U \transition{a}' V$.
  By definition of normalisation there is a transition $U \transition{a}' V$ if and only if $V = \{t \in \states \mid \exists s \in U : s \weaktransition{a} t\}$.
  So for all states $t \in \tostates{V}_\lts$ there is a state $s \in \tostates{U}_\lts$ such that $s \weaktransition{a} t$.
  By the induction hypothesis it holds that for all $s \in \tostates{U}_\lts$ there is a weak transition $\init \weaktransition{\weaktrace} s$.
  But then we may conclude that $\init \weaktransition{\weaktrace\,a} t$ for all $t \in \tostates{V}_\lts$.
\end{proof}

\section{Proof of Lemma~\ref{lemma:subset_existence_sfr}}

\renewcommand{\thethm}{\ref{lemma:subset_existence_sfr}} 

\lemmasubsetexistencesfr*
\begin{proof}
  We proceed using an induction on the length of sequences in $\actions^*$.

  Base case. Let $s \in \states$ and suppose $\init \weaktransition{\emptytrace} s$.
  Then $s \in \tostates{\init'}_\lts$, since $\init'$ is defined as $\{s \in \states \mid \init \weaktransition{\emptytrace} s\}$.
  Moreover, we trivially have $\init' \tracetransition{\emptytrace}' \init'$.

  Inductive step.
  Pick any sequence $\weaktrace \in \actions^*$ of length $i$ and suppose that the statement holds for all sequences in $\actions^*$ of length $i$.
  Take an arbitrary state $t \in \states$ and action $a \in \actions$ such that $\init \weaktransition{\weaktrace\,a} t$.
  Then there is a state $s \in \states$ such that $\init \weaktransition{\weaktrace} s$ and $s \weaktransition{a} t$. Fix such a state $s \in \states$.
  From the induction hypothesis it then follows that there is a state $U \in \states'$ such that $s \in \tostates{U}_\lts$ and $\init' \tracetransition{\weaktrace}' U$. Fix this state $U \in \states'$.
  Let $V$ be equal to $\{t \in \states \mid \exists u \in U : u \weaktransition{a} t\}$; then by definition, $U \transition{a}' V$.
  It follows that $\init' \tracetransition{\weaktrace\,a}' V$.
  Finally, $t \in \tostates{V}_\lts$ follows from $s \weaktransition{a} t$ and $s \in \tostates{U}_\lts$.
\end{proof}

\section{Proof of Lemma~\ref{lemma:not_weaktrace_sfr}}

\renewcommand{\thethm}{\ref{lemma:not_weaktrace_sfr}} 

\lemmanotweaktracesfr*
\begin{proof}
  \leavevmode
  \begin{itemize}[align=left,beginpenalty=999]

  \item[$(\Longrightarrow$)]
  We use induction on the length of the sequences in $\actions^*$.

  \noindent Base case. The implication holds vacuously since $\emptytrace \in \weaktraces(\lts)$.

  \noindent Inductive step.
  Pick a sequence $\weaktrace \in \actions^*$ of length $i$ and suppose that the implication holds for all sequences of length $i$.
  Assume an arbitrary action $a \in \actions$ such that $\weaktrace\,a \notin \weaktraces(\lts)$.
  From $\weaktrace\,a \notin \weaktraces(\lts)$ it follows that there is no state $t \in \states$ such that $\init \weaktransition{\weaktrace\,a} t$.
  We distinguish two cases:
  \begin{itemize}
  \item Case $\weaktrace \notin \weaktraces(\lts)$.
  From the induction hypothesis we obtain $\init' \tracetransition{\weaktrace}' \emptyset$ and $\emptyset \transition{a}' \emptyset$ by definition.
  We may therefore also conclude $\init' \tracetransition{\weaktrace\,a}' \emptyset$.

  \item Case $\weaktrace \in \weaktraces(\lts)$.
  Then there is a state $s \in \states$ such that $\init \weaktransition{\weaktrace} s$.
  Since $\lts'$ is deterministic there is a unique state $U \in \states'$ such that both $s \in \tostates{U}_\lts$ and $\init' \tracetransition{\weaktrace}' U$ by Lemmas~\ref{lemma:subset_existence_sfr} and~\ref{lemma:deterministic}.
  For all states $u \in \states$ satisfying $\init \weaktransition{\weaktrace} u$ (which exist as $\weaktrace \in \weaktraces(\lts)$) there cannot be a state $t \in \states$ such that $u \weaktransition{a} t$ by the observation that $\weaktrace\,a \notin \weaktraces(\lts)$.
  Therefore, $U \transition{a}' \emptyset$ and thus also $\init' \tracetransition{\weaktrace\,a}' \emptyset$.
  \end{itemize}

  \item[($\Longleftarrow$)]
  Suppose $\init' \tracetransition{\weaktrace}' \emptyset$.
  Towards a contradiction, assume that $\weaktrace \in \weaktraces(\lts)$. Then there is a state $s \in \states$ such that $\init \weaktransition{\weaktrace} s$.
  By Lemma~\ref{lemma:subset_existence_sfr}, there must be some $U \in \states'$ such that $s \in \tostates{U}_\lts$ and $\init' \tracetransition{\weaktrace}' U$.
  Since $\lts'$ is deterministic, by Lemma~\ref{lemma:deterministic}, we obtain that this state $U$ must be such that $U = \emptyset$.\qedhere
  \end{itemize}
\end{proof}

\section{Proof of Lemma~\ref{lemma:subset_existence_fdr}}

\renewcommand{\thethm}{\ref{lemma:subset_existence_fdr}} 

\lemmasubsetexistencefdr*
\begin{proof}
  Proof by induction on the length of sequences that are not divergences, or minimal divergences.
    
  Base case. 
  The empty trace $\emptytrace$ satisfies $\emptytrace \notin \divergences(\lts)$ or $\emptytrace \in \divergencesbot(\lts)$ by definition.
  Hence, we must show that for all states $s \in \states$ satisfying $\init \weaktransition{\emptytrace} s$ there is a state $U \in \states'$ such that $s \in \tostates{U}_\lts$ and $\init' \tracetransition{\emptytrace}' U$.
  We know that if $\init \weaktransition{\emptytrace} s$ then $s \in \tostates{\init'}_\lts$, because $\init'$ is defined as $\{s \in \states \mid \init \weaktransition{\emptytrace} s\}$ in the normalisation.
  Finally, we also know that $\init' \tracetransition{\emptytrace}' \init'$.
    
  Inductive step.
  Suppose that the statement holds for all sequences of length $i$ that are either not divergences or minimal divergences of $\lts$.
  Pick a sequence $\weaktrace \in \actions^*$ of length $i$, an arbitrary state $t \in \states$ and action $a \in \actions$ such that $\init \weaktransition{\weaktrace\,a} t$ and $\weaktrace\,a \notin \divergences(\lts)$ or $\weaktrace\,a \in \divergencesbot(\lts)$.
  Note that whenever $\weaktrace\,a \notin \divergences(\lts)$ or $\weaktrace\,a \in \divergencesbot(\lts)$ then $\weaktrace \notin \divergences(\lts)$.
  From $\init \weaktransition{\weaktrace\,a} t$ it follows that there is a state $s \in \states$ such that $\init \weaktransition{\weaktrace} s$ and $s \weaktransition{a} t$.
  By our induction hypothesis it then follows that there is a state $U \in \states'$ such that $s \in \tostates{U}_\lts$ and $\init' \tracetransition{\weaktrace}' U$.
  For all states $u \in \tostates{U}_\lts$ it holds that $\init \weaktransition{\weaktrace} u$ by Lemma~\ref{lemma:subset_fdr}, so by definition of divergences it must be that $\diverges{u}$ does not hold and hence $\diverges{\tostates{U}_\lts}$ does not hold.
  Let $V$ be equal to $\{v \in \states \mid \exists u \in U : u \weaktransition{a} v\}$ such that $U \transition{a}' V$ by definition of the normalisation.
  It follows that $\init' \tracetransition{\weaktrace\,a}' V$.
  Finally, $t \in \tostates{V}_\lts$ follows from $s \weaktransition{a} t$ and $s \in \tostates{U}_\lts$.
\end{proof}

\section{Proof of Lemma~\ref{lemma:not_divergence_fdr}}

\renewcommand{\thethm}{\ref{lemma:not_divergence_fdr}} 

\lemmanotdivergencefdr*
\begin{proof}
  Let $\weaktrace \in \actions^*$ and $U \in \states'$ be such that $\init' \tracetransition{\weaktrace}' U$ and not $\diverges{\tostates{U}_\lts}$.
  Towards a contradiction, assume that $\weaktrace \in \divergences(\lts)$.
  We distinguish two cases:
  \begin{itemize}
  \item $\weaktrace \in \divergencesbot(\lts)$. 
        Let $t \in \states$ be such that $\init \weaktransition{\weaktrace} t$ and $\diverges{t}$.
        Note that due to the determinism of $\normalize(\lts)$ and Lemma~\ref{lemma:subset_existence_fdr}, for all $s \in \states$ such that $\init \weaktransition{\weaktrace} s$, we have $s \in U$.
        Hence also $t \in U$. But $\diverges{t}$ then implies $\diverges{\tostates{U}_\lts}$.
        Contradiction.
  \item $\weaktrace \notin \divergencesbot(\lts)$.
        Then $\weaktrace = \weaktrace' \weaktrace''$ for some $\weaktrace' \in \divergencesbot(\lts)$.
        Let $t \in\states$ be such that $\init \weaktransition{\weaktrace'} t$ and $\diverges{t}$.
        Then, by Lemma~\ref{lemma:subset_existence_fdr}, there must be some $V \in \states'$ such that $\init' \tracetransition{\weaktrace'}' V$ and $t \in V$. Let $V$ be such.
        Since $\diverges{t}$ and $t \in V$, $V$ has no outgoing transitions in $\normalize(\lts)$. 
        In particular, we cannot have $V \tracetransition{\weaktrace''}' U$, and because $\normalize(\lts)$ is deterministic, we also cannot have $\init' \tracetransition{\weaktrace}' U$. Contradiction.\qedhere
  \end{itemize}
\end{proof}

\section{Proof of Lemma~\ref{lemma:not_weaktrace_fdr}}

\renewcommand{\thethm}{\ref{lemma:not_weaktrace_fdr}} 

\lemmanotweaktracefdr*
\begin{proof}
  \leavevmode
  \begin{itemize}[align=left,beginpenalty=999]

  \item[$(\Longrightarrow$)]
  Proof by induction on the length of sequences in $\actions^*$.
  
  \noindent Base case.
  The implication holds vacuously since $\emptytrace \in \weaktraces(\lts)$.
  
  \noindent Inductive step.
  Suppose that the statement holds for all sequences $\actions^*$ of length $i$.
  Pick a sequence $\weaktrace \in \actions^*$ of length $i$.
  Take an arbitrary action $a \in \actions$ such that $\weaktrace\,a \notin (\divergences(\lts) \cup \weaktraces(\lts))$.
  From $\weaktrace\,a \notin \weaktraces(\lts)$ it follows that there is no state $t \in \states$ such that $\init \weaktransition{\weaktrace\,a} t$.
  From $\weaktrace\,a \notin \divergences(\lts)$ it follows that $\weaktrace \notin \divergences(\lts)$.
  Now, there are two cases to distinguish:
  
  \begin{itemize}
  \item Case $\weaktrace \notin \weaktraces(\lts)$.
    From the induction hypothesis we obtain $\init' \tracetransition{\weaktrace}' \emptyset$ and $\emptyset \transition{a}' \emptyset$ by definition.
    Thus $\init' \tracetransition{\weaktrace\,a}' \emptyset$.
  
  \item Case $\weaktrace \in \weaktraces(\lts)$.
    There is a state $s \in \states$ such that $\init \weaktransition{\weaktrace} s$.
    Since $\lts'$ is deterministic there is a unique state $U \in \states'$ such that $s \in \tostates{U}_\lts$ and $\init' \tracetransition{\weaktrace}' U$ by Lemma~\ref{lemma:subset_existence_fdr} and \ref{lemma:deterministic}.
    We may furthermore conclude that $\diverges{\tostates{U}_\lts}$ does not hold.
    Because $\weaktrace\,a \notin \weaktraces(\lts)$, no state $u \in \states$ for which $\init \weaktransition{\weaktrace} u$ satisfies $u \weaktransition{a} t$, for any state $t$.
    Therefore, by definition, $U \transition{a}' \emptyset$, and thus $\init' \tracetransition{\weaktrace\,a}' \emptyset$.
  \end{itemize}
  
  \item[$(\Longleftarrow$)]
  Suppose $\init' \tracetransition{\weaktrace}' \emptyset$.
		  From the observation that $\diverges{\tostates{\emptyset}_\lts}$ does not hold and Lemma~\ref{lemma:not_divergence_fdr} it follows that $\weaktrace \notin \divergences(\lts)$.  
  Towards a contradiction, assume that $\weaktrace \in \weaktraces(\lts)$. 
  Then there is a state $s \in \states$ such that $\init \weaktransition{\weaktrace} s$.
  By Lemma~\ref{lemma:not_weaktrace_fdr}, there must be some $U \in \states'$ such that $s \in \tostates{U}_\lts$ and $\init' \tracetransition{\weaktrace}' U$.
  Since $\lts'$ is deterministic, by Lemma~\ref{lemma:deterministic}, we obtain that this state $U$ must be such that $U = \emptyset$.
  Contradiction.\qedhere
  \end{itemize}
\end{proof}

\end{appendix}

\end{document}